\newcommand{\FuncName}[1]{\mbox{\normalfont\textsc{#1}}}
\newenvironment{algocontinue}[2][]
  {\renewcommand{\algorithmcfname}{Algorithm #2}%
   \begin{algorithm*}[#1]
   \long\def\@caption##1[##2]##3{%
     \par
     \begingroup\@parboxrestore
     \if@minipage\@setminipage\fi
     \normalsize \@makecaption{\AlCapSty{\AlCapFnt\algorithmcfname}}{\ignorespaces ##3}%
     \par\endgroup
   }}
  {\end{algorithm*}}
\definecolor{darkgreen}{rgb}{0.2, 0.7, 0.2}
\definecolor{darkblue}{rgb}{0.2,0.2, 0.8}
\newcolumntype{L}[1]{>{\raggedright\arraybackslash}p{#1}}
\newcolumntype{C}[1]{>{\centering\arraybackslash}p{#1}}
\newcolumntype{R}[1]{>{\raggedleft\arraybackslash}p{#1}}
\newcommand{\mcal}[1]{\mathcal{#1}}
\newcommand{\intv}[1]{\left[#1\right]}
\newcommand {\sen}[1]{\mcal {#1}}
\newcommand {\seq}[1]{\set{#1}}
\newcommand{\set}[1]{\left\lbrace #1 \right\rbrace}
\newcommand{\size}[1]{ \left| #1 \right|}
\newcommand{\ind}[2]{\mathsf{ind}_{#1} (#2)}
\newcommand{\1}[1]{\mathds{1}_{\{#1\}}}
\DeclarePairedDelimiter{\ceil}{\lceil}{\rceil}
\newtheorem{lm}{Lemma}
\newtheorem{rem}{Remark}
\newtheorem{defi}{Definition}
\newtheorem{prop}{Proposition}
\newtheorem{cor}{Corollary}
\newtheorem{thm}{Theorem}
\newenvironment{ex}
  {\pushQED{\qed}\examplex}
  {\popQED\endexamplex}
\newcommand{\bA}{\mathbf{A}}
\newcommand{\bB}{\mathbf{B}}
\newcommand{\bC}{\mathbf{C}}
\newcommand{\bD}{\mathbf{D}}
\newcommand{\bI}{\mathbf{I}}
\newcommand{\bM}{\mathbf{M}}
\newcommand{\bO}{\mathbf{O}}
\newcommand{\bX}{\mathbf{X}}
\newcommand{\cC}{\mathcal{C}}
\newcommand{\cG}{\mathcal{G}}
\newcommand{\cH}{\mathcal{H}}
\newcommand{\cZ}{\mathcal{Z}}
\definecolor{CharFillColor}{rgb}{0,0,0}
\newcommand{\e}[1]{\textpdfrender{
		TextRenderingMode=FillStroke,
		LineWidth=.3pt,
		FillColor=white,
	}{\mathbf{#1}}}
\newcommand{\f}[1]{\textpdfrender{
		TextRenderingMode=FillStroke,
		LineWidth=.3pt,
		FillColor=black,
	}{\mathbf{#1}}}
\newcommand{\eSMat}[4]{{\e{#1}}_{\xleftarrow{#3} {#4}}}
\newcommand{\eSMatt}[5]{{\e{#1}_{#2}}_{\xleftarrow{#4} {#5}}}
\newcommand{\fSMat}[4]{{\f{#1}}_{\xleftarrow{#3} {#4}}}
\newcommand{\tOMat}[3]{\boldsymbol{\Delta}_{\xleftarrow{#2} {#3}}}
\newcommand{\md}[1]{\mathsf{mode}\left(#1\right)}
\newcommand{\term}{\mathsf{Term}}
\newcommand{\repMat}[2]{\mathbf{\Xi}^{#1,(#2)}}
\newcommand{\nd}[1]{\left<#1\right>}
\newcommand{\up}[1]{\overline{#1}} 
\newcommand{\down}[1]{\underline{#1}} 
\newcommand{\enc}{\mathbf{\Psi}}
\newcommand{\encdc}{\enc[\sen K, :]}
\newcommand{\gdc}{\mathbf{\Gamma}[\sen K, :]}
\newcommand{\ddc}{\mathbf{\Upsilon}[\sen K, :]}
\newcommand{\pdc}{\mathbf{\Psi}[\sen K, :]}
\newcommand{\TM}{\mathbf{T}}
\newcommand{\PM}{\mathbf{P}}
\newcommand{\MM}{\mathbf{M}}
\newcommand{\rep}[2]{\mathbf{R}^{#1}(#2)}
\newcommand{\erep}[2]{\mathbf{R}^{#1}(#2)}
\newcommand{\s}[2]{\sigma_{#1}{(#2)}}
\newcommand{\mdcnt}{t}
\newcommand{\dgp}[1]{\cG_1(\down{\e{#1}})}
\newcommand{\ngp}[1]{\cG_2(\down{\e{#1}})}
\newcommand{\pgp}[1]{\cG_3(\down{\e{#1}})}
\newcommand{\dgpi}[2]{\cG_1(\down{\e{#1}_{#2}})}
\newcommand{\ngpi}[2]{\cG_2(\down{\e{#1}_{#2}})}
\newcommand{\pgpi}[2]{\cG_3(\down{\e{#1}_{#2}})}
\newcommand{\seg}{S}
\newcommand{\eSeg}{\e{\seg}}
\newcommand{\fSeg}{\f{\seg}}
\def\namedlabel#1#2{\begingroup
	#2%
	\def\@currentlabel{#2}%
	\phantomsection\label{#1}\endgroup
}
\begin{document}
	\title{Cascade Codes For Distributed Storage Systems}
	
	\author{Mehran Elyasi and Soheil~Mohajer,~\IEEEmembership{Member,~IEEE}
		\thanks{M. Elyasi and 
			S. Mohajer are with the Department of Electrical and Computer Engineering, University of Minnesota, Twin Cities, MN 55455, USA, (email: \{melyasi, soheil\}@umn.edu).}
		\thanks{This work was supported in part by the National Science Foundation under Grant CCF-1617884.} 	
		\thanks{This paper is presented in part at the IEEE International Symposium on Information Theory (ISIT), 2018 \cite{elyasi2018cascade}.}}

	\maketitle

	\begin{abstract}
		A novel coding scheme for exact repair-regenerating codes is presented in this paper. 
		The codes proposed in this work can trade between the repair bandwidth of nodes (number of downloaded symbols from each surviving node in a repair process) and the required storage overhead of the system. These codes work for general system parameters $(n,k,d)$, which are the total number of nodes, the number of nodes suffice for data recovery, and the number of helper nodes in a repair process, respectively. The proposed construction offers a unified scheme to develop exact-repair regenerating codes for the entire trade-off, including the MBR and MSR points. We conjecture that the new storage-vs.-bandwidth trade-off achieved by the proposed codes is optimum. Some other key features of this code include:  the construction is linear; the required field size is only $\Theta(n)$; and the code parameters and in particular sub-packetization level is at most $(d-k+1)^k$; which is independent of the number of the parity nodes. Moreover, the proposed repair mechanism is \emph{helper-independent}, that is the data sent from each helper only depends on the identity of the helper and failed nodes, but independent of the identity of other helper nodes participating in the repair process.
	\end{abstract}

	\section{introduction}
	The dynamic, large and disparate volume of data garnered from social media, Internet-driven technologies, financial records, and clinical research has arisen an increasing demand for reliable and scalable storage technologies. 
	Distributed storage systems are widely being used in modern data centers, such as Google File System \cite{ghemawat2003google}, Facebook Distributed File System \cite{sathiamoorthy2013xoring}, Microsoft Azure \cite{huang2012erasure} and also peer-to-peer storage settings, such as DHash++ \cite{dabek2004designing}, OceanStore \cite{rhea2001maintenance} and Total Recall \cite{bhagwan2004total}.
	In distributed storage systems individual storage nodes are  unreliable due to various hardware and software failures. Hence, redundancy is introduced to improve the system's reliability in the presence of node failures. The simplest  form of redundancy is the replication of the data in multiple storage nodes. Even though it is the most common form of redundancy, replication is very inefficient in terms of the offered reliability gain  per cost of the extra storage units required to store the redundancy. In this context, coding techniques have provably achieved orders of magnitude more reliability for the same redundancy compared to replication. 
	
	Besides the reliability offered by storing the redundant data, in order to be durable, it is necessary for a storage system to repair the failed nodes. The repair process consists of  downloading (part of) the content of a number of surviving nodes to reconstruct the missing content of the failed nodes. The conventional erasure codes suffer from high repair-bandwidth, the total size of data to be downloaded for the repair of each failed node. Regenerating code is a class of erasure codes which have gained popularity in this context, due to their low repair-bandwidth, while providing the same level of fault tolerance as erasure codes. 
	
	\subsection{Problem Formulation and System Model}
	In an $(n,k,d)$ regenerating code~\cite{dimakis2010network}, a file comprised of $F$ data symbols, each from a finite field $\mathbb{F}_q$, is encoded into $n$ pieces, and each piece will be stored in one storage node of capacity $\alpha$ symbols.
	The stored data in the nodes should maintain two main properties: 
	\begin{enumerate}
		\item  \textbf{Data Recovery:} By accessing any set of $k$ nodes, the data collector must be able to recover the original stored file.
		\item \textbf{Node Repair:} In the event of node failure, the content of the failed node can be regenerated by connecting to any subset of $\sen H$ nodes of size $\size {\sen{H}} = d$, and downloading $\beta$ symbols from each of the $d$ nodes. The set $\sen H$ is called the set of \emph{helper nodes}.
	\end{enumerate}
	
	In \cite{dimakis2010network} it is shown that there is a trade-off between the per-node storage capacity $\alpha$ and the per-node repair-bandwidth $\beta$ in a storage system that can guarantee the above main properties. While it is desired to minimize both $\alpha$ and $\beta$, one can be reduced only at the cost of increasing the other. 
	
	There are two types of node repairs: (i) \emph{functional-repair}, where a failed node will be replaced by a new node such that the resulting system continues to satisfy the data recovery and node repair properties.  An alternative to function repair is (ii) \emph{exact-repair}, under which the replaced node stores precisely the same content as the failed node. Hence, exact-repair is a more demanding criterion, and it is expected to require more repair bandwidth in comparison to functional repair, for a given storage size. However, from the practical stand, the exact repair is preferred, since it does not need the extra overhead of updating the meta-data (the relationship between the contents) and the system configuration. 
	
	\subsection{An Overview of the Related Works}
	The regenerating codes were introduced in the seminal work of Dimakis et al. \cite{dimakis2010network}, wherein $(n,k,d)$ distributed storage systems were studied using  \emph{information flow graphs}. Moreover, using the cut-set bound, it was shown that 
	the per-node storage capacity $\alpha$, the per-node repair bandwidth $\beta$, and the file size  $F$ should satisfy
	\begin{align}
	F \leq \sum_{i=1}^{k} \min(\alpha,(d-i+1)\beta),
	\label{eq:func:tradeoff}
	\end{align}
	for a storage system that maintains data recovery and node repair properties. This bound implies a trade-off between $\alpha$ and $\beta$ for a given $F$. This trade-off was shown to be achievable for the functional repair using random codes introduced in  \cite{ho2006random, wu2010existence}. 
	An important follow-up question was whether the same trade-off is achievable with the exact-repair property. First, in \cite{rashmi2011optimal} it was shown that exact-repair regenerating codes can be constructed for the extreme points of the trade-off, namely, the minimum bandwidth regeneration (MBR) referring to the minimum $\beta$ satisfying \eqref{eq:func:tradeoff}, and the minimum storage regeneration\footnote{The first MSR code construction in \cite{rashmi2011optimal} only holds for some range of system parameters.} (MSR), referring to the minimum $\alpha$ for which \eqref{eq:func:tradeoff} can be satisfied for a given $F$. Later, in \cite{shah2012distributed} it was shown that some of interior (between the two extreme) points of the trade-off are not achievable under the exact repair criterion. While the proof of \cite{shah2012distributed} did not rule out the possibility of approaching the non-achievable trade-off points with an arbitrary small gap, 
	the next question was whether there is a non-vanishing gap between the trade-off of exact-repair and functional-repair codes. This question was first answered in  \cite{tian2014characterizing}, where using a computer-aided approach of Yeung for information-theoretic inequalities \cite{yeung1997framework}, Tian completely characterized the trade-off for an $(n,k,d)=(4,3,3)$ system. Note that $(4,3,3)$ is the smallest system parameter for which there is a non-vanishing gap between the functional and the exact repair trade-off. 
	
	Thereafter, the attention of the data storage community has shifted to characterizing the optimum storage-bandwidth trade-off for the exact-repair regenerating codes. A trade-off characterization consists of two-fold: (i) designing code constructions satisfying data recovery and exact node repair properties, to achieve pairs of $(\alpha, \beta)$, and  (ii) proving information-theoretic arguments that provide lower bounds for the achievable pairs of $(\alpha, \beta)$.  The focus of this paper is on the code construction part, and hence, we provide a brief review of the existing code constructions in the literature. To this end, we divide the existing codes into three main categories based on the achievable trade-offs, as follows. 
	
	\noindent \textbf{1. The MBR point:} This point was fully solved for general $(n,k,d)$ in \cite{rashmi2011optimal}, where it was shown that the functional-repair trade-off is also achievable under the exact-repair criterion.  
	
	\noindent \textbf{2. The MSR point:} The trade-off offered by~\eqref{eq:func:tradeoff} for the MSR point is achievable by exact repair regenerating codes, and most of the existing code constructions are dedicated to this point. In \cite{cadambe2010distributed, suh2010existence}, it was shown that the exact-repair MSR code (for both low rate with $k/n \leq 1/2$ and high rate regimes where $k/n > 1/2$) is achievable in the asymptotic sense, that is when the file size is growing unboundedly. However, the proof was existential and no explicit code construction was provided.

	One of the earliest work in this area was  \cite{cullina2009searching}, where a computer search was carried out to find an $(n, k, d)=(5, 3, 4)$ MSR code.  In \cite{rashmi2011optimal} a code construction for parameters satisfying $n-1\geq d \geq 2k-2$ was presented. For the codes with $n-1>d>2k-2$ the code construction includes two steps: first a code is developed with $d'=2k'-2$, and then converted  to a code with desired $(k,d)$ parameters. Later, in \cite{lin2015unified} the code construction was unified for all parameters $n-1\geq d\geq 2k-2$.
	
	The explicit code construction for the codes with parameters $d<2k-2$ and $d \leq n-1$ was an open problem and several papers published to improve the state-of-the-art.  The code constructions in \cite{goparaju2017minimum,tamo2013zigzag,raviv2017constructions,wang2016explicit,li2015framework} were limited to the repair of only \emph{systematic} nodes. Another category of code constructions was dedicated to codes with a limited number of parity nodes. In particular, explicit MDS storage codes with only two parities ($n=k+2$) are constructed based on  Hadamard matrices in~\cite{papailiopoulos2013repair} and permutation matrices in~\cite{li2016optimal,wang2011codes}. Both constructions  offer an optimum  repair-bandwidth for the repair of any single (systematic or parity) node failure.

	Sub-packetization level, referring to the unnormalized value of $\alpha$ in terms of the number of symbols, is a practically important parameter of any code construction. While it is preferred to minimize the sub-packetization level, it cannot be lower than a certain lower bound provided in \cite{balaji2018tight}. 
	A class of MSR codes was introduced in \cite{sasidharan2015high} which requires only polynomial sub-packetization in $k$, but a very large field size. Nevertheless, the proposed construction of \cite{sasidharan2015high}  was not fully explicit, and it was limited to parameters satisfying $n=d-1$. The latter restriction was later relaxed in  \cite{rawat2016progress}, where the same result was shown for an arbitrary number of nodes, $n$. However, the code construction in \cite{rawat2016progress} still requires a large field size. 
	
	Several MSR code constructions for arbitrary parameters $(n,k,d)$ were recently proposed  \cite{ye2017explicitnearly,li2018generic, li2018alternative,sasidharan2016explicit}. They were all optimum in terms of the storage vs. repair-bandwidth trade-off, and all  achieve the optimum sub-packetization, i.e, matching the bound introduced in \cite{balaji2018tight}. The codes proposed  in \cite{ye2017explicit} offer a \emph{dynamic repair}, where the  number of helpers is flexible to be varied between $k$ and $n-1$.

	Our proposed codes in this work cover the entire trade-off, including the MSR point. For the resulting MSR codes from proposed construction, the code parameters and especially the  sub-packetization level do not depend on the total number of nodes $n$. Also, another advantage of this construction is its flexibility of system expansion by adding new parity nodes. In contrast to other existing codes in the literature, where the entire code needs to be redesigned in order to add new parity nodes, the proposed construction allows adding new parity nodes to the system, without changing the content of the other nodes. A comparison between the MSR code obtained from this construction and the existing codes, in terms of required field size and sub-packetization level, is presented in Table~\ref{tab:comparison}.
	
	\begin{table}
		\resizebox{\textwidth}{!}{
			\begin{tabular}{|l|l|l|l|}
				\hline
				&Sub-packetization level $\alpha$ & Field size $q$	 & Code Parameters \\ \hline \hline
				M.Ye et al.  \cite{ye2017explicit} Construction 1 & $(n-k) ^ {n}$ & $q \geq (n-k) n$ & $ n=d+1$\\ \hline
				M.Ye et al.  \cite{ye2017explicit} Construction 2 & $(n-k) ^ {n-1}$ & $q \geq n$ & $n=d+1$\\ \hline
				M.Ye et al.  \cite{ye2017explicit} Constructions 1,2 & $(d-k+1)^n$ & $q \geq s n$ for $s=\mathsf{lcm}(1,2,\cdots,n-k)$ & $d<n-1$\\ \hline
				M.Ye et al.   \cite{ye2017explicitnearly} & $(n-k) ^ {\ceil{\frac{n}{n-k}}}$ & $q \geq (n-k) \ceil{\frac{n}{n-k}}$ & $n=d+1$\\ \hline
				B. Sasidharan et al. \cite{sasidharan2016explicit}& $(n-k) ^ {\ceil{\frac{n}{n-k}}}$ & $q \geq (n-k) \ceil{\frac{n}{n-k}}$ &  $n=d+1$ \\ \hline
				Jie Li et al.  \cite{li2018generic}& $(n-k) ^ {\ceil{\frac{n}{n-k}}}$ & $q \geq n$ & $n=d+1$   \\  \hline
				Current paper & $(d-k+1)^k$ & $q \geq n$ & All \\ 
				\hline
			\end{tabular}}
			\vspace{2mm}	
			\caption{Comparison between code constructions proposed for the MSR point.}
			\label{tab:comparison}
	\end{table}
	
	\noindent \textbf{3. Interior points:} The construction for the interior points (trade-off points except for MBR and MSR) was restricted to the specific system parameters. In \cite{tian2015layered} a code construction for $(n=d+1, k=d, d)$ was presented. The trade-off achieved by this construction was shown to be optimum under the assumption of the linearity of the code  \cite{elyasi2015linear, prakash2015storage, duursma2015shortened}. However, it wasn't clear that if the same trade-off is achievable for $n>d+1$. Most of the follow-up efforts to increase the number of parity nodes resulted in compromising the system capacity to construct a code for larger values of the $n$, and hence their trade-off was diverging from the lower bound of \cite{elyasi2015linear, prakash2015storage, duursma2015shortened}, and $n$ increases. The first $n$-independent achievable trade-off for interior points was provided in \cite{elyasi2015probabilistic}, where it was shown that the first corner point on the trade-off next to the MSR point can be achieved for any $(n, k=d, d)$ system. However, the proof of \cite{elyasi2015probabilistic} was just an existence proof, where a random ensemble of codes were introduced, and it was shown that for any $n$ and large enough field size, there exists at least one code in the ensemble that satisfies both data recovery and node repair properties. 
	
	In \cite{elyasi2016determinant} the above-mentioned restriction for $n$ was uplifted, where an \emph{explicit} code construction for the entire trade-off of an $(n,k=d,d)$ storage system was introduced. The proposed \emph{determinant codes} are optimal subject to the linearity of the code and achieve the lower bound in \cite{elyasi2015linear, prakash2015storage, duursma2015shortened}, regardless of the total number of nodes. However, the repair process of the determinant codes in \cite{elyasi2016determinant} requires heavy computation, and more importantly, the repair data sent from a helper node to a failed node depends on the identity of all the helpers participating in the repair process. Later in  \cite{elyasi2018newndd} this issue was resolved by introducing a new repair mechanism for determinant codes.
	
	The next set of works focused on breaking the last constraint, i.e., $k=d$. An explicit code construction was introduced in \cite{tian2015layered} for an $(n,k\leq d,d)$ system. The resulting trade-off was improved by the code construction of \cite{goparaju2014new}. 
	In \cite{senthoor2015improved} a class of improved layered codes was introduced. However, it turned out that the trade-off achieved by \cite{senthoor2015improved} is optimum only for the corner point next to the MBR, and only for an $(n=d+1, k , d)$ system. These limitations were slightly relaxed in	\cite{elyasi2017scalable}, where the constraint of $n=d+1$ was lifted. However, the construction in \cite{elyasi2017scalable} was dedicated to the trade-off point next to the MBR, implying a low repair bandwidth. Later in \cite{elyasi2017exact}, this construction was extended for the entire trade-off but only for an $(n,k=d-1, d)$ system. In this paper, we propose a code construction for general $(n,k,d)$ parameters for the entire trade-off. We conjecture that the resulting trade-off is optimum. 
	\subsection{Notation and Terminologies}
	\label{subsec:notation}
	We will use lowercase letters to denote scalars (e.g., integers $k$ and $d$) or data file symbols (e.g. $v$ and $w$ that take value from a finite field). We use bold capital letters to denote matrices. For positive integers $a$ and $b$, we denote set $\{a,a+1,\dots, b-1,b\}$ by $\intv{a:b}$, and set $\{1,2,\dots, b\}$ by $\intv{b}$. Also,  for $a>b$ we have $\intv{a:b}=\varnothing$. Calligraphic  letters (e.g., $\sen I$ and $\sen J$) are used to denote a set of integer numbers. Moreover, $x\in \sen I$ indicates that the scalar $x$ belongs to the set $\sen I$. The largest entry of a set $\sen I$ is denoted by $\max \sen I$, and  the maximum of an empty set is defined as $\max \varnothing = -\infty$, for consistency. In this work, $\size{\sen{I}}$ denotes the size of the set $\sen I$. For two sets $\sen I$ and $\sen J$ with $\size{\sen I} = \size{\sen J}$, we write $\sen I \prec \sen J$ to indicate the lexicographical order between $\sen I$ and $\sen J$, e.g., $\{1,2,3\} \prec \set{1,2,4} \prec \set{1,2,5} \prec \set{1,3,4} \prec \set{1,3,5}$. 
	For an integer $x$ and a set $\sen I$, we define
	\begin{align}
	\ind{\sen I}{x} = \left|\{y\in \sen I: y\leq x \}\right|.
	\label{eq:def:ind}
	\end{align}
	For a matrix $\f{P}$, we may label its rows and columns by integers (e.g. $i$ and $j$) or by sets (e.g. $\sen I$ and $\sen J$), which will be specified accordingly. Then $\f{P}_{i,j}$ (or $\f{P}_{\sen I, \sen J}$) refers to an entry of matrix $\f{P}$ at the row indexed by $i$ (or $\sen I$) and the column labeled by $j$ (or $\sen J$). Moreover, we  denote the  $i$-th row of $\f{P}$ by $\f{P}_{i,:}$. We also use $\f{P}_{:,j}$ to refer to the $j$-th column of $\f{P}$. Lastly, we may use $\f{P}[\sen X, \sen Y]$ to refer to a submatrix of $\f{P}$ obtained by a family (with family elements being integer numbers or sets) of rows $\sen X$ and a family of columns $\sen Y$. Accordingly, $\f{P}[\sen X, :]$ is a submatrix of $\f{P}$ formed by stacking all rows with labels belonging to $\sen X$.
	
	Note that $k$ and $d$ (with $k\leq d$) are the main system parameters throughout the paper. For a matrix $\f{P}$  with $d$ rows, we define $\up{\f{P}}=\f{P}[\set{1,2,\cdots,k},:]$ and $\down{\f{P}}=\f{P}[\set{k+1,\cdots,d},:]$ to be sub-matrices of $\PM$ obtained from stacking the top $k$ and the bottom $(d-k)$ rows of a matrix $\f{P}$, respectively.
	Throughout the analysis,  we frequently need to concatenate several matrices, that is, merging a number of matrices with the same number of rows side-by-side, to form a fat matrix with the same number of rows. 
	
	Finally, the binomial coefficient is defined as $\binom{\ell}{m}=\frac{\ell!}{m!(\ell-m)!}$ and we set it to zero for $m<0$ and $m>\ell$, for the sake of consistency.
	\subsection{Paper Organization} 
		The rest of this paper is organized as follows. We first present the main result of this work, which is the trade-off achieved by the proposed codes, in Section~\ref{sec:main}. Appendix~\ref{app:proof:cor} includes the proof to show that the parameters of this code can achieve MBR, MSR, and another point on the cut-set bound. Then in Section~\ref{sec:ndd:review}, we review the (signed) determinant codes~\cite{elyasi2016determinant, elyasi2018newndd}, their construction and main properties. The proof of node repairability for singed determinant codes is given in Appendix~\ref{app:prf:prop:ndd:repair} for the sake of completeness. The core idea of our code construction, which is the multiplication of a fixed encoder matrix to a cascade message matrix, is presented in Section~\ref{sec:code}. Appendix~\ref{app:semi-sys} discusses the conversion of the proposed code with an arbitrary encoder matrix to a (semi)-systematic code. Section~\ref{sec:supermessage}, explains the details of cascading message matrices and includes a running example to facilitate understanding the notation and the concept of injection. The main properties of node exact repair and data recovery are proved in Sections~\ref{sec:noderepair} and~\ref{sec:datarec}, respectively.
		The parameters of the proposed code construction are evaluated in Section~\ref{sec:parameters}. An explicit evaluation of the parameters requires solving an implicit recursive equation, which is performed using the $\cZ$-transform, as discussed in Appendix~\ref{app:Z}. 
		Finally, the paper is concluded in Section~\ref{sec:conclusion}, with a comparison between the proposed code and the product-matrix code, a discussion on our conjecture regarding the optimality of the proposed codes, and a number of related open questions for future works.
	
	\section{Main Result}
	\label{sec:main}
	The main contribution of this paper is a novel construction for exact-repair regenerating codes, with arbitrary parameters $(n,k,d)$. The following theorem characterizes the achievable storage vs. repair-bandwidth trade-off of the proposed code construction. 
	\begin{thm}
		\label{thm:main}
		For a distributed storage system with parameters $(n,k,d)$ satisfying $k \leq d < n$, the triple $(\alpha,\beta,F)$ defined as
		\begin{align}
		\begin{split}
		\alpha(k,d;\mu) &= \sum_{m=0} ^{\mu} (d-k)^{\mu-m} \binom{k}{m}\\
		\beta(k,d;\mu) &= \sum_{m=0} ^{\mu} (d-k)^{\mu-m} \binom{k-1}{m-1}\\
		F(k,d;\mu) &= \sum_{m=0} ^{\mu} k(d-k)^{\mu-m} \binom{k}{m}-\binom{k}{\mu+1}\\
		\end{split}
		\label{eq:params}
		\end{align}
		can be achieved by the cascade codes proposed in this paper for $\mu\in\{1,2,\dots, k\}$.
	\end{thm}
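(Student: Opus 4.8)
The plan is to prove Theorem~\ref{thm:main} by an explicit construction: I would exhibit a concrete linear code over $\mathbb{F}_q$ with $q=\Theta(n)$, verify that it supports helper-independent exact repair from every $d$-subset of helpers and data recovery from every $k$-subset of nodes, and then count its parameters to check that they equal $\alpha(k,d;\mu)$, $\beta(k,d;\mu)$, $F(k,d;\mu)$ in \eqref{eq:params}. The starting point is the $k=d$ case, for which the (signed) determinant codes of \cite{elyasi2016determinant, elyasi2018newndd} already do the job; I would recall from Section~\ref{sec:ndd:review} their message-matrix description (columns indexed by $m$-subsets of $\intv{d}$), the fixed encoder matrix $\enc$, the stored content $\enc$ times the message matrix, and --- crucially --- their two key properties: (i) exact repair in which the symbols a helper transmits depend only on the identities of that helper and the failed node, and (ii) recovery from any $d$ nodes. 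These are the primitives I would build on; their proofs are in Appendix~\ref{app:prf:prop:ndd:repair}.

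Next I would describe the cascade construction of Sections~\ref{sec:code}--\ref{sec:supermessage}. For general $k\le d$ the message matrix is built by \emph{injection}: one stacks determinant-type blocks, one per layer $m=0,1,\dots,\mu$, the $m$-th block having its columns indexed by the $m$-subsets and its row set split into a ``top'' part and the $(d-k)$ ``bottom'' rows $\down{\PM}$ that couple layer $m$ to layer $m-1$; the lexicographic order $\prec$ and the index function $\ind{\cdot}{\cdot}$ are the bookkeeping devices that place each block correctly. The stored content at each node is again $\enc$ times the resulting cascade message matrix, so the code is linear and inherits the field-size requirement $q=\Theta(n)$ from $\enc$. (The conversion to a (semi-)systematic form, if desired, is handled in Appendix~\ref{app:semi-sys}.)

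I would then prove the two operational properties layer by layer. For exact repair (Section~\ref{sec:noderepair}): since each determinant-code layer repairs helper-independently, the repair symbols a helper sends for the cascade code are simply the concatenation of its per-layer repair symbols, and the failed node's content is reconstructed top layer first; the already-recovered higher layers are used to cancel the coupling terms produced by the $\down{\PM}$ rows, and the per-layer counts add up to $\beta(k,d;\mu)$ downloaded symbols per helper. For data recovery from a $k$-subset $\sen K$ (Section~\ref{sec:datarec}): I would induct on the layers, decoding the top ($m=\mu$) block restricted to $\sen K$ as in the base case, subtracting its contribution, and recognizing the remainder as a smaller cascade instance; the point that makes this work is that the injection scheme is arranged so that a recovery set of size $k$ --- smaller than the size-up-to-$d$ index sets appearing in the blocks --- still pins down every layer, thanks to the structure of $\down{\PM}$ and the choice of which subsets index the columns.

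Finally, for the parameters (Section~\ref{sec:parameters}): $\alpha$ is the number of rows of the cascade message matrix, $\beta$ is the total number of per-layer repair symbols summed over layers, and $F$ is the rank of the stacked node contents. Each satisfies an implicit recursion in $(k,d,\mu)$ coming directly from the layered construction, and I would solve these using the $\cZ$-transform as in Appendix~\ref{app:Z}, obtaining the closed forms in \eqref{eq:params}; the $-\binom{k}{\mu+1}$ correction in $F$ reflects the redundancy among the coupling rows across layers. I expect the main obstacle to be the data-recovery step together with the exact dimension count: one must show that the lexicographic/injection arrangement makes the layers ``nest'' so that size-$k$ sets suffice for blocks built from larger index sets, while simultaneously verifying that the combined code has dimension exactly $F(k,d;\mu)$ and not more. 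The remaining ingredients --- linearity, the $\Theta(n)$ field size, and the helper-independence of repair --- follow fairly directly once the determinant-code primitives are in place.
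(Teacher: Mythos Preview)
Your high-level plan matches the paper's strategy (build on signed determinant codes, prove repair and recovery, then count parameters via the $\cZ$-transform), but two concrete points in your proposal are backwards and would make the argument fail as written.

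First, the cascade structure is not ``one block per layer $m$ with coupling from layer $m$ to layer $m-1$.'' It is a rooted \emph{tree} of signed determinant code segments: a parent of mode $j$ has, for each admissible injection pair $(x,\sen B)$ with $\sen B\subseteq[k{+}1{:}d]$, a separate child of mode $j-|\sen B|-1$ (so the mode drops by at least $2$), and there are $t_m$ segments of mode $m$ with $t_m$ satisfying the recursion \eqref{eq:ell_rec}. The super-message matrix has $d$ rows and $\alpha=\sum_m t_m\binom{d}{m}$ \emph{columns}; $\alpha$ is the column count, not the row count.

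Second, and more importantly, your data-recovery induction goes the wrong way. You propose to decode the top ($m=\mu$) block first and peel downward. That cannot work: with only $k<d$ nodes you have $k$ equations for the $d$ entries of each column, so the bottom $(d-k)$ rows of the root (and of every segment) are not directly solvable. The whole point of injection is that those missing entries are backed up inside parity positions of \emph{lower}-mode children. Hence recovery must run \emph{bottom-to-top}: decode mode-$0$ segments first (their lower parts are zero by Lemma~\ref{lm:bottom-mode0}), extract the injected symbols via Lemma~\ref{lm:decode:separate}, then for a segment $\f S$ of mode $m$ reconstruct its lower rows from already-decoded children/siblings (cases $\dgp{S},\ngp{S},\pgp{S}$) and only then solve for $\up{\f S}$ via \eqref{eq:proc:rec:B}. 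Your repair direction (top-to-bottom) is correct, but the correction term used there is the parent's repair \emph{space} $\rep{f}{\f P}$ received from helpers, not the parent's already-reconstructed content.
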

	The trade-off achieved by this construction is shown in Fig.~\ref{fig:tradeoff}, and is compared against that of other existing codes. 
	\begin{figure}[!t]
    \centering
    \begin{tikzpicture}[scale=0.40]
    \node[anchor=north east,inner sep=0] (image) at (0,0) {\includegraphics[width=0.5\linewidth]{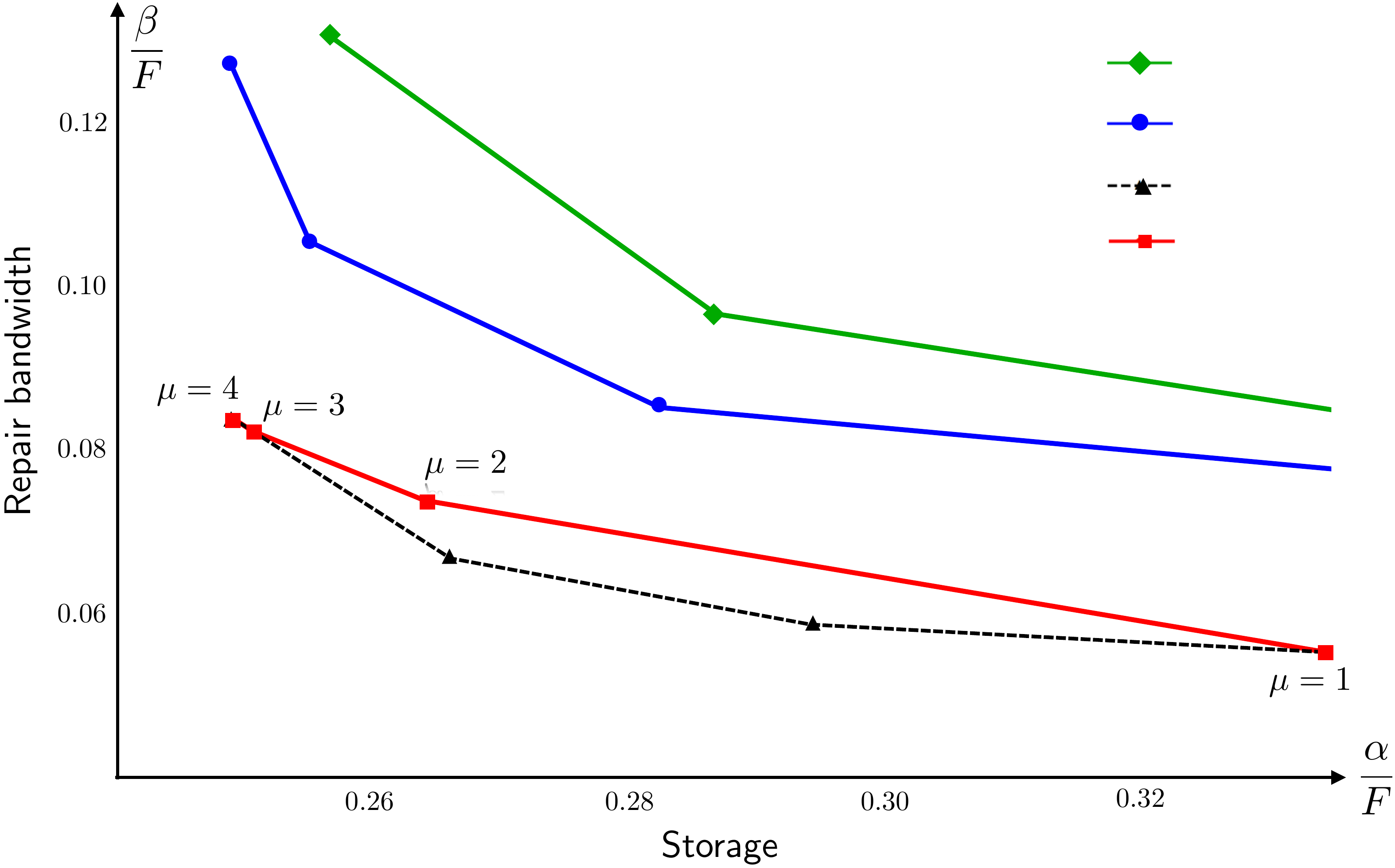}};
    \begin{scope}[x={(image.north west)},y={(image.south east)}]
    \node[fill = white, outer sep=0pt, inner sep=0pt,align = left] (rect) at (0.35,0.2) { 
        {\tiny Code construction in \cite{tian2015layered}}  \\[-1.5pt]
        {\tiny Code construction in \cite{senthoor2015improved}} \\[-1.3pt]
        {\tiny Funcational repair trade-off \cite{dimakis2010network}} \\[-1.3pt]
        {\tiny The proposed code} \\[-4.3pt]
    };
    \end{scope}
    \end{tikzpicture}
  		\caption{The achievable trade-off of the proposed code for $(n=8,k=4,d=6)$, and its comparison to other known codes with these parameters. The proposed trade-off has $k=4$ corner points, enumerated with $\mu=1,2,3,4$. }
    \label{fig:tradeoff}	
    \end{figure}

	Note that for a given system parameters $(n,k,d)$, a total of $k$ triples of $(\alpha, \beta, F)$  can be obtained from \eqref{eq:params}, by varying the parameter $\mu$, which we refer to as the \emph{mode} of the code. Therefore, while parameters $(n,k,d)$ are the system parameters, we append the parameter $\mu$ to this notation as $(n,k,d;\mu)$, to refer to the specific point on the trade-off corresponding to $\mu$.
	The following corollary shows that MBR and MSR points are subsumed as special cases of the proposed trade-off. The proof of the following corollaries are straightforward algebraic manipulations and given in Appendix~\ref{app:proof:cor}.
	\begin{cor}
		The cascade code with mode $\mu=1$ is an MBR code with parameters 
		\begin{align*}
		(\alpha_{\mathsf{MBR}}, \beta_{\mathsf{MBR}}, F_{\mathsf{MBR}}) = \left(d,1, \frac{k(2d-k+1)}{2}\right).
		\end{align*}
		Similarly, the cascade code with mode $\mu=k$ is an MSR code with parameters
		\begin{align*}
		(\alpha_{\mathsf{MSR}},	&\beta_{\mathsf{MSR}}, F_{\mathsf{MSR}}) = \left( (d-k+1)^k, (d-k+1)^{k-1}, k(d-k+1)^k\right).
		\end{align*}
		\label{cor:MSR} 
	\end{cor}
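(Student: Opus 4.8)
The plan is to substitute the two boundary modes $\mu=1$ and $\mu=k$ into the closed forms \eqref{eq:params}, simplify the resulting binomial sums (which are short for $\mu=1$ and fully collapsible via the binomial theorem for $\mu=k$), and then verify that the resulting triples $(\alpha,\beta,F)$ sit at the two extreme corners of the cut-set bound \eqref{eq:func:tradeoff}.

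\textbf{The case $\mu=1$.} Here every sum in \eqref{eq:params} has only the terms $m=0$ and $m=1$. Using $\binom{k}{0}=1$, $\binom{k}{1}=k$, $\binom{k-1}{-1}=0$, $\binom{k-1}{0}=1$, and $\binom{k}{2}=k(k-1)/2$, I would read off directly
\[
\alpha(k,d;1)=(d-k)+k=d,\qquad \beta(k,d;1)=0+1=1,
\]
\[
F(k,d;1)=k\bigl((d-k)+k\bigr)-\binom{k}{2}=kd-\frac{k(k-1)}{2}=\frac{k(2d-k+1)}{2}.
\]
To conclude that this is the MBR point, I would note that $\alpha=d=d\beta$, so $\min(\alpha,(d-i+1)\beta)=(d-i+1)\beta$ for every $i\in\intv{k}$, and hence $\sum_{i=1}^{k}\min(\alpha,(d-i+1)\beta)=\sum_{i=1}^{k}(d-i+1)=\frac{k(2d-k+1)}{2}=F$. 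Thus the triple attains \eqref{eq:func:tradeoff} with equality at the smallest feasible $\beta$ for this file size, which is by definition the MBR corner.

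\textbf{The case $\mu=k$.} Now the sums run over the full range $0\le m\le k$ and the binomial theorem collapses them. Writing $d-k+1=(d-k)+1$,
\[
\alpha(k,d;k)=\sum_{m=0}^{k}\binom{k}{m}(d-k)^{k-m}=(d-k+1)^{k}.
\]
For $\beta$, the index shift $j=m-1$ (the $m=0$ term vanishes since $\binom{k-1}{-1}=0$) gives $\beta(k,d;k)=\sum_{j=0}^{k-1}\binom{k-1}{j}(d-k)^{(k-1)-j}=(d-k+1)^{k-1}$. For $F$, the same identity yields $\sum_{m=0}^{k}k\binom{k}{m}(d-k)^{k-m}=k(d-k+1)^{k}$, while the correction term is $\binom{k}{k+1}=0$ under the convention of Section~\ref{subsec:notation}, so $F(k,d;k)=k(d-k+1)^{k}$. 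Since $\alpha=(d-k+1)^{k}=F/k$ equals the information-theoretic minimum storage forced by \eqref{eq:func:tradeoff}, and moreover $\alpha=(d-k+1)\beta$ so that each summand in \eqref{eq:func:tradeoff} equals $\alpha$, this is exactly the MSR corner.

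\textbf{On the difficulty.} Neither part poses a genuine obstacle: each reduces to evaluating a finite binomial sum, once using only $\binom{k}{0},\binom{k}{1},\binom{k}{2}$ and once using the binomial theorem. The only care needed is bookkeeping — discarding the out-of-range coefficients $\binom{k-1}{-1}$ and $\binom{k}{k+1}$ via the conventions of Section~\ref{subsec:notation}, handling the index shift in the expression for $\beta$, and, in each case, appending the one-line check that the computed $(\alpha,\beta,F)$ actually lies on the cut-set bound at the claimed extreme, since the corollary asserts the MBR/MSR property and not merely the numeric values.
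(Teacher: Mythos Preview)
Your proposal is correct and follows essentially the same approach as the paper: direct substitution of $\mu=1$ and $\mu=k$ into \eqref{eq:params}, with the $\mu=k$ case collapsed via the binomial theorem and the same index shift for $\beta$. The only cosmetic difference is that the paper certifies the MBR/MSR property by matching the normalized ratios $(\alpha/F,\beta/F)$ to the known extreme-point values from \cite{dimakis2010network}, whereas you verify the cut-set bound \eqref{eq:func:tradeoff} directly; both are equivalent one-line checks.
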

	\begin{cor}
		The proposed code at mode $\mu=k-1$ achieves the cut-set bound, and hence it is optimum.  
		\label{cor:mu=k-1}  
	\end{cor}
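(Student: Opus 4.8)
The plan is to substitute $\mu = k-1$ into \eqref{eq:params} and check directly that the resulting triple $(\alpha,\beta,F)$ sits on the functional-repair cut-set bound \eqref{eq:func:tradeoff}, i.e.\ that \eqref{eq:func:tradeoff} holds with equality. Since every exact-repair code is in particular a functional-repair code, \eqref{eq:func:tradeoff} upper-bounds $F$ for the proposed code as well, so meeting it with equality immediately certifies that mode $\mu=k-1$ is optimal.

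The computational core is a pair of binomial identities. Writing $e = d-k$, I would first apply Pascal's rule to $\alpha(k,d;k-1)=\sum_{m=0}^{k-1} e^{k-1-m}\binom{k}{m}$ to split it as $\sum_{m=0}^{k-1}e^{k-1-m}\binom{k-1}{m}+\sum_{m=1}^{k-1}e^{k-1-m}\binom{k-1}{m-1}$; the first sum is the full expansion $(e+1)^{k-1}$, and after reindexing the second sum is exactly $\beta(k,d;k-1)$, giving $\alpha-\beta=(d-k+1)^{k-1}$. A second use of the binomial theorem, $e\,\beta(k,d;k-1)=\sum_{j=0}^{k-2}e^{k-1-j}\binom{k-1}{j}=(e+1)^{k-1}-1$, then yields the key relation
\[
(d-k+1)\,\beta(k,d;k-1)=\alpha(k,d;k-1)-1 .
\]
Separately, since $\binom{k}{\mu+1}=\binom{k}{k}=1$ at $\mu=k-1$, the $F$-formula in \eqref{eq:params} reads $F(k,d;k-1)=k\,\alpha(k,d;k-1)-1$.

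With these in hand I would evaluate the cut-set sum. Because $\beta(k,d;k-1)\ge 1$ (the $m=k-1$ term alone contributes $\binom{k-1}{k-2}=k-1\ge 1$ for $k\ge 2$), for each $i\le k-1$ we get $(d-i+1)\beta\ge (d-k+2)\beta=(d-k+1)\beta+\beta\ge(\alpha-1)+1=\alpha$, so $\min\{\alpha,(d-i+1)\beta\}=\alpha$; for $i=k$ the key relation gives $(d-k+1)\beta=\alpha-1<\alpha$, so the minimum there is $(d-k+1)\beta$. Hence
\[
\sum_{i=1}^{k}\min\{\alpha,(d-i+1)\beta\}=(k-1)\alpha+(d-k+1)\beta=(k-1)\alpha+(\alpha-1)=k\alpha-1=F(k,d;k-1),
\]
which is \eqref{eq:func:tradeoff} with equality, establishing optimality.

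There is no real obstacle here — the corollary is a finite algebraic verification — and the only things to watch are the index bookkeeping when reindexing the sums and the degenerate cases $d=k$ (where the convention $0^0=1$ is used) and $k=1$ (where $\mu=k-1=0$ lies outside the admissible range $\{1,\dots,k\}$ and is simply excluded). The one non-mechanical observation is that $\alpha(k,d;k-1)-\beta(k,d;k-1)$ collapses to the clean power $(d-k+1)^{k-1}$ and that $(d-k+1)\beta$ undershoots $\alpha$ by exactly $1$; this is what makes the cut-set sum telescope to $F$.
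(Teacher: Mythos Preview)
Your proof is correct and follows essentially the same strategy as the paper's: substitute $\mu=k-1$ into \eqref{eq:params} and verify algebraically that $(k-1)\alpha + (d-k+1)\beta = F$, which is the relevant facet of the cut-set bound \eqref{eq:func:tradeoff}. The paper reaches this equality by rewriting $(k-1)\alpha + (d-k+1)\beta$ as $k\alpha + (d-k)\beta - (\alpha-\beta)$, applying Pascal's identity $\binom{k}{m}-\binom{k-1}{m-1}=\binom{k-1}{m}$ inside the sums, and telescoping after an index shift. Your route is the same verification organized around two cleaner intermediate identities, $\alpha-\beta=(d-k+1)^{k-1}$ and $(d-k+1)\beta=\alpha-1$, which together with $F=k\alpha-1$ make the final step immediate. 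Your version is also slightly more complete in that you explicitly check that $\min\{\alpha,(d-i+1)\beta\}=\alpha$ for $i\le k-1$ and equals $(d-k+1)\beta$ for $i=k$, whereas the paper simply asserts the reduced form of the cut-set bound on the relevant segment.
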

Fig.~\ref{fig:compare} compares the \emph{scalability} aspect of the code construction of this paper with an existing  construction. The term scalability refers to the property that the number of nodes in a distributed storage system can be increased (for a sufficiently large field size) without compromising the system performance, and its overall capacity.

The rest of this paper is dedicated to a comprehensive proof of Theorem~\ref{thm:main}. After reviewing the (signed) determinant codes~\cite{elyasi2016determinant, elyasi2018newndd} in Section~\ref{sec:ndd:review}, we introduce the cascade codes and present their construction in Sections~\ref{sec:code} and~\ref{sec:supermessage}. Then, we prove that the proposed codes maintain the exact-repair property (Proposition~\ref{prop:nkd:repair}) and data recovery property (Proposition~\ref{prop:nkd:recovery}), in Sections~\ref{sec:noderepair} and~\ref{sec:datarec}, respectively. Finally, in Section~\ref{sec:parameters} we show that the parameters of the proposed code match those claimed in Theorem~\ref{thm:main}. 
\begin{figure}
\begin{center}
	\scalebox{0.95}{
\begin{tikzpicture}
\begin{axis}[
legend style={at={(0.36,0.5)},anchor=west},
xmin=6,xmax=26,
ymin=35,ymax=75,
grid=both,
grid style={line width=.1pt, draw=gray!30},
major grid style={line width=.2pt,draw=gray!50},
axis lines=middle,
minor tick num=1,
mark options={scale=0.6},
legend cell align={left},
legend style={font=\footnotesize},
ticklabel style={font=\scriptsize,fill=white},
xlabel=$n$,
ylabel=$F$,
extra tick style={
	xmajorgrids=false,
	ymajorgrids=false,
},
extra x ticks={7},
extra y ticks={68},
ytick = {35,40, ..., 70},
xtick = {5,10, ..., 30},
]

\addplot[smooth,color=red,mark=*]
plot coordinates {
(7,68)
(8,68)
(9,68)
(10,68)
(11,68)
(12,68)
(13,68)
(14,68)
(15,68)
(16,68)
(17,68)
(18,68)
(19,68)
(20,68)
(21,68)
(22,68)
(23,68)
(24,68)
};
\addlegendentry{This work}

\addplot[smooth,color=blue,mark=diamond*]
plot coordinates {
(7,68)
(8,13152/217)
(9,20067/364)
(10,1791/35)
(11,8393/174)
(12,81448/1771)
(13,26269/594)
(14,863024/20163)
(15,18944/455)
(16,1668752/41041)
(17,310077/7784)
(18,112413/2873)
(19,531647/13804)
(20,35512072/935085)
(21,50593/1349)
(22,3269992/88179)
(23,135671/3696)
(24,37937472/1043119)
};
\addlegendentry{Construction of \cite{senthoor2015improved}}
\draw [densely dashed] (axis cs:7,68) -- (axis cs:7,0)  ;
\draw [densely dashed] (axis cs:7,68) -- (axis cs:0,68)  ;
\end{axis}
\end{tikzpicture}
}
\end{center}	
\caption{Comparison of the maximum files size ($F$) of  two $(n \geq 7,k=4,d=6)$ exact-regenerating codes, and code parameters $(\alpha,\beta)= (18,5)$. When the distributed storage system has only $n=d+1=7$ nodes, both codes can store $F=68$ units of data. However, for a sufficiently large field size, the storage capacity decays as a function of $n$ for code introduced in \cite{senthoor2015improved}, while the storage capacity is preserved for the cascade code introduced in this paper. Note that in the cascade code construction, the pair $(\alpha,\beta)= (18,5)$ corresponds to a code construction with $\mu=2$.}
\label{fig:compare}
\end{figure}

	\section{A Review of $(n,k=d,d;m)$ Signed Determinant Codes}
	\label{sec:ndd:review}
	The code construction presented in this paper uses \emph{signed determinant codes}  as the main building blocks. The family of determinant codes, first introduced in \cite{elyasi2016determinant, elyasi2018newndd}
	is a class of codes that achieve the optimum (linear) storage-bandwidth trade-off of the regenerating codes \cite{elyasi2015linear, prakash2015storage, duursma2015shortened}, when the number nodes participating in data recovery equals the number of helpers contributing in a repair process, i.e., $k=d$. They are linear codes, which can be constructed by multiplying an encoder matrix by a message matrix.

	The modification here (that converts a determinant code to a signed determinant code) is due to an arbitrary assignment of ($+/-$) signs to the rows of the message matrix, which affect all the entries in the corresponding row. This modification will be applied by a signature vector defined in Definition~\ref{defi:sign:vect}. As we will see in this section, the above modification preserves all properties of determinant codes, while it is helpful towards our next step, which is the construction of $(n,k,d;m)$ codes. 
	
	In the following, we review the construction of signed determinant codes and their properties.     First, note that the mode for the $d=k$ signed determinant codes is defined in the same way that is defined for $k<d$ cascade code defined in Section~\ref{sec:main}. This means that the family of signed determinant codes for an $(n,k=d,d)$ system consists of $k$ distinct codes, enumerated by a parameter $m\in \set{1,2,\dots, k}$, which is called \emph{mode} of the code.\footnote{The reason that we use different letters $m$ and $\mu$ for the mode of the  cascade codes and determinant codes is that in the construction of $(n,k,d;\mu)$ cascade codes of this paper, several $(n,k=d,d;m)$ signed determinant code with different values of $m$ is used.}
	
	For any mode $m\in\intv{k}$, the parameters of the determinant code corresponding to the $m$-th corner point on the trade-off are given by
	\begin{align*}
	\left(\alpha_{m} ,\beta_{m} ,F_{m} \right) =\left(\binom{d}{m},\binom{d-1}{m-1},m\binom{d+1}{m+1} \right).
	\end{align*}
	Here $m=1$ corresponds to MBR code, while $m=k$ results in the parameters of an MSR code. We also define the operator $\md{\e{D}}$ which returns parameter $m$ (the mode) of the determinant code $\e{D}$.
	
	\subsection{Code Construction of $(n,k=d,d;m)$ codes}
	A signed determinant code with parameters $(n,k=d,d;m)$ is represented by a matrix $\bC_{n \times \alpha_m}$ whose  $i$-th row includes the coded content of the $i$-th node. In general, $\bC_{n \times \alpha_m}$ is obtained by multiplying an \emph{encoder matrix} $\enc_{n \times d}$ whose entries are from a finite field $\mathbb{F}_q$ and a \emph{message matrix} $\e{D}_{d \times \alpha_m}$. The encoder matrix $\enc$ is chosen such that any collection of $d$ of its rows are linearly independent. Examples of such matrices include Vandermonde \cite{horn1990matrix} and Cauchy matrices \cite{schechter1959inversion}. The message matrix $\e{D}$ has $d$ rows and $\alpha_m=\binom{d}{m}$ columns, and its entries are filled with the symbols from the file to be stored in the storage system. To do this, we first split $F_m=m\binom{d+1}{m+1}$ raw file symbols into two groups, namely,  $\sen V$ and $\sen W$, and label them as
	\begin{align}
	\begin{split}
	\sen V  &= \left\{v_{x,\sen X}:{\sen X} \subseteq \intv d,\size {\sen  X} = m , x\in \sen X \right\},\\
	\sen W &=\left\{w_{x,\sen Y}:{\sen Y} \subseteq \intv d,\size {\sen  Y} =m+1, x\in {\sen Y}\setminus \set{\max \sen Y} \right\}.
	\end{split}
	\label{eq:vw-symbols}
	\end{align}
	Note that the symbols of $\sen V$ are indexed by a set ${\sen  X} \subseteq \intv d$ of size $m$ and an element $x\in {\sen X}$, implying $\size{\sen V}=m\binom{d}{m}$. Similarly, the symbols in $\sen W$ are indexed by a pair $(x,\sen Y)$, where $x$ can be any element of $\sen Y$, except the largest element. Hence, there are $\size {\sen W} = m\binom{d}{m+1}$ symbols in set $\sen W$. Note that 
	\begin{align*}
	\size{\sen V} + \size{\sen W} = m\binom{d}{m} + m\binom{d}{m+1} = m\binom{d+1}{m+1} = F_m.
	\end{align*}
	In the definition of $\sen W$ symbols in~\eqref{eq:vw-symbols}, the symbol corresponding to $x = \max \sen Y$ was excluded. In the following we define this symbol as the parity symbol.	
	\begin{defi}
		\label{def:parity}
		 Let $\sen Y \subseteq \intv{d}, \size{\sen Y}=m+1$, we define a parity symbol $w_{\max \sen Y,\sen Y}$, such that the parity equation
		\begin{align}
		\sum_{y \in \sen Y} (-1)^{\ind{\sen Y}{y}} w_{y, \sen Y} = 0
		\label{eq:parityeq}
		\end{align}
		is satisfied.\footnote{This parity equation is the same as original determinant codes introduces in \cite{elyasi2016determinant, elyasi2018newndd}.} We refer to the symbols in the set $\set{w_{y,\sen Y}\!:\hspace{-1pt}{\sen Y} \subseteq \intv d,\size {\sen  Y}  \hspace{-1pt}=m+1, y\hspace{-1pt}\in\hspace{-1pt} {\sen Y}}$ as the \emph{$w$-group} of~$\sen Y$. 
	\end{defi}
	
	\begin{defi}
		\label{defi:sign:vect}
		For a signed determinant codes, a vector $\sigma_{\e{D}}$, called \emph{signature} vector, is given as the input of the code construction.\footnote{In the original determinant codes introduced in \cite{elyasi2016determinant, elyasi2018newndd}  $\sigma_{\e{D}}$ was fixed to $\s{\e{D}}{x}=0$ for all $x\in \intv{d}$.} This vector is of length $d$ with integer entries. The entry at position $x\in \intv{d}$ of this vector is denoted by $\sigma_{\e{D}}(x)$. Using this signature vector a plus or minus sign will be assigned to each integer $x\in \intv{d}$, that is $(-1)^{\s{\e{D}}{x}}$. 
		\end{defi}
		\begin{rem}
		\label{rem:signature}
		Any choice of the signature vector yields a valid exact regenerating code satisfying data recovery and exact repair properties. In this paper, whenever we generate an instance of a signed determinant code, we specify the signature vector of that code first. \hfill $\diamond$
		\end{rem}

	To fill the message matrix $\e{D}$, we label its rows by integer numbers from $\intv{d}$ and its columns by subsets $\sen I \subseteq \intv{d}$ with $\size{\sen I} =m$, according to the lexicographical order. Then the entry at row $x$ and column $\sen I$ is given by 
	\begin{align}
	\e{D}_{x,\sen I}=\left\{
	\begin{array}{l l}      
	(-1)^{\s{\e{D}}{x}} v_{x,\sen I} & \textrm{if $x \in \sen I$}, \\
	(-1)^{\s{\e{D}}{x}} w_{x,\sen{I}\cup \set {x}} & \textrm{if $x \notin \sen I$}.
	\end{array}\right.
	\label{eq:def:S}
	\end{align}
	Once the message matrix is formed, the coded content of node $i$ is given by $\enc_{i,:} \cdot \e{D}$.
	For the sake of completeness, we define an $(n,k=d,d;m=0)$ determinant code at mode $m=0$ to be a trivial code with $(\alpha=1,\beta=0,F=0)$, whose message matrix is a $d\times 1$ all-zero matrix.

	In \cite[Section~IV]{elyasi2018newndd}, an illustrative example for a determinant code with parameters
		$(n,k,d;m)=(8,4,4;2)$ is given. The corresponding signature of this code is $(0,0,0,0)$; however, each row of the message matrix can be multiplied by an arbitrary sign to change the signature vector of the code. This example also explains the main idea of the code construction for determinant codes, as well as data recovery and node repair properties of the code.
	
	Additionally, some instances of message matrices of $(n,d,d;m)$ determinant codes are provided in this paper. For instance, the matrix in~\eqref{eq:D0} of Example~\ref{ex:root}, is an instance of the message matrix of determinant code for $d=6,m=4$ with an all-zero signature vector. In this example, there is a separator line between rows $4$ and $5$, and also some symbols are highlighted and marked by frames, which will be explained later. Regardless, if these details are ignored, the underlying matrix is an $(n,d,d;m)=(n,6,6;4)$ message matrix. Also, two matrices are given in~\eqref{eq:D2} and~\eqref{eq:D5} in Example~\ref{ex:modified:child}, which are message matrices for an $(n,k,d)=(n,6,6)$ determinant code of modes $m=2$ and $m=1$ respectively. The signature vector of these codes are $\sigma_{\e{T}_2}=(2,2,2,2,2,2)$ and $\sigma_{\e{T}_5}=(2,2,2,2,2,3)$.
	
		\begin{rem}
			\label{rem:raw:parity}
			From the definition of message matrix in~\eqref{eq:def:S}, the entries $\e{D}_{x,\sen I}$ with $x > \max \sen I$ are parity symbols and entries with $x \leq \max \sen I$ are raw data file symbols. This is because if $x > \max \sen I$, then $x \notin \sen I$ and thus $\e{D}_{x,\sen I}=(-1)^{\s{\e{D}}{x}} w_{x,\sen{I}\cup \set {x}}$. Also, since $x$ is the largest element in the $w$-group of $\sen Y = \sen I \cup \set{x}$, $w_{x,\sen{I}\cup \set {x}}=w_{\max \sen Y,\sen Y}$ will correspond to a parity symbol as defined in~\eqref{eq:parityeq}. Similarly, one can argue that for $x \leq \max \sen I$, the symbol at position $(x, \sen I)$ of $\e{D}$ is either a $v$-symbol or a non-parity $w$-symbol.
	\end{rem}
	
	\begin{rem}
		The construction of signed determinant codes  \emph{decouples} the parameters of the code.  The encoder matrix only depends on $n$ and $d$ and remains the same for all modes. On the other hand, the message matrix $\e{D}$ is fully determined by parameters $(d,m)$, and does not depend on $n$, the total number of nodes in the system. Thus, we refer to the code defined above as a \emph{$(d;m)$ signed determinant code} and to matrix $\e{D}$ as a \emph{$(d;m)$ message matrix}. 
		\label{rem:decouple}
	\end{rem}
	Next, we review data recovery and exact repair properties for signed determinant codes.
	\subsection{Data Recovery of $(n,d=k,d;m)$ Determinant Codes}
	\begin{prop}
		In a $(d;m)$ signed determinant code, all the data symbols can be recovered from the content of any $k=d$ nodes. 
		\label{prop:ndd:recovery}
	\end{prop}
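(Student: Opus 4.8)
The plan is to show that the content of any $d$ nodes, say indexed by a set $\sen A\subseteq\intv{n}$ with $\size{\sen A}=d$, determines all symbols in $\sen V\cup\sen W$. The stored content of these nodes is the matrix $\enc[\sen A,:]\cdot\e{D}$, and since $\enc$ is chosen so that any $d$ of its rows are linearly independent, $\enc[\sen A,:]$ is an invertible $d\times d$ matrix. Thus the data collector can recover the \emph{entire} message matrix $\e{D}$ itself. So the real content of the proposition is that every raw symbol (every $v$-symbol and every non-parity $w$-symbol) appears as an entry of $\e{D}$: this is essentially immediate from the definition~\eqref{eq:def:S}, since $v_{x,\sen X}$ sits (up to a sign $(-1)^{\s{\e{D}}{x}}$, which is known) at position $(x,\sen X)$, and a non-parity symbol $w_{x,\sen Y}$ with $x\ne\max\sen Y$ sits at position $(x,\sen Y\setminus\set{x})$ (note $x\notin\sen Y\setminus\set{x}$, and $\size{\sen Y\setminus\set{x}}=m$, so this is a valid column label). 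The parity symbols $w_{\max\sen Y,\sen Y}$ are then determined from the raw symbols of the same $w$-group via the parity equation~\eqref{eq:parityeq}, but they are not part of the file to be recovered anyway.

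Concretely, I would proceed in three short steps. \emph{Step 1:} Fix $\sen A$ with $\size{\sen A}=d$; observe $\enc[\sen A,:]$ is invertible by the defining property of the encoder matrix, so from the received data $\enc[\sen A,:]\cdot\e{D}$ one computes $\e{D}=\left(\enc[\sen A,:]\right)^{-1}\cdot\left(\enc[\sen A,:]\cdot\e{D}\right)$. \emph{Step 2:} Read off the raw symbols: for each column $\sen I$ (with $\size{\sen I}=m$) and each row $x$, multiply $\e{D}_{x,\sen I}$ by $(-1)^{\s{\e{D}}{x}}$ (the signature vector is part of the code description, hence known to the collector); by~\eqref{eq:def:S} this yields $v_{x,\sen I}$ when $x\in\sen I$ and $w_{x,\sen I\cup\set{x}}$ when $x\notin\sen I$. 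As $\sen I$ and $x$ range over all admissible values, every symbol in $\sen V$ and every non-parity symbol in $\sen W$ is obtained. \emph{Step 3 (optional, for completeness):} recover the parity symbols $w_{\max\sen Y,\sen Y}$ from~\eqref{eq:parityeq}.

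There is essentially no obstacle here — this is the easy direction, and the only thing to be careful about is the bookkeeping of which positions of $\e{D}$ carry which symbols (Remark~\ref{rem:raw:parity} already records the relevant dichotomy) and the observation that the signs $(-1)^{\s{\e{D}}{x}}$ are known and hence harmless. The genuinely substantive recovery argument is the one needed later for cascade codes, where one only has access to $k<d$ nodes and the message matrix is not fully invertible; for the $k=d$ signed determinant code the statement reduces to linear-algebraic invertibility plus a direct inspection of~\eqref{eq:def:S}.
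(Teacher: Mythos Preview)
Your proposal is correct and matches the intended argument: the paper omits the proof entirely, noting only that it is ``similar to that of Proposition~1 in \cite{elyasi2018newndd},'' and the argument there is precisely the one you give --- invert the $d\times d$ submatrix $\enc[\sen A,:]$ to recover $\e{D}$, then read off each raw symbol from its position in $\e{D}$ via~\eqref{eq:def:S}.
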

	The proof of this proposition is similar to that of Proposition~1 in \cite{elyasi2018newndd}, and hence omitted for the sake of brevity. 
	
	\subsection{Node Repair of $(n,k=d,d;m)$ Determinant Codes}
	Consider the repair process of a failed node $f\in\intv{n}$ from an arbitrary set $\mathcal{H}$ of $\size{\sen H} = d$ helpers. The repair-encoder matrix for a $(d;m)$ signed determinant code is defined below. 
	\begin{defi}
	\label{defi:repair-encoder}
		For a $(d;m)$ signed determinant code with signature $\sigma_{\e{D}}$, and a failed node $f\in \intv{n}$, the \emph{repair-encoder matrix} $\repMat{f}{m}$ is defined as a $\binom{d}{m} \times \binom{d}{m-1}$ matrix, whose rows are labeled by $m$-element subsets of $\intv{d}$ and columns are labeled by $(m-1)$-element subsets of $\intv{d}$. The element in  row $\sen I$ and column $\sen J$ of this matrix is given by
		\begin{align}
		\repMat{f}{m}_{\sen I, \sen J}=\left\{
		\begin{array}{l l}      
		(-1)^{\s{\e{D}}{y}+\ind{\sen I}{y}}\psi_{f,y} & \textrm{if $\sen I \setminus \sen J=\set{y}$},\\
		0& \textrm{otherwise}.
		\end{array}\right.
		\label{eq:rep:enc}
		\end{align}
		where $\psi_{f,y}$ is the entry of the encoder matrix $\enc$ at row $f$ and column $y$. Also, $\s{\e{D}}{y}$'s are the same signature values used in \eqref{eq:def:S}. Note that these repair-encoder matrices can be fully generated from the encoder matrix, and do not depend on the node contents.  
	\end{defi}
	
	In order to repair node $f$, each helper node $h\in \sen H$  multiplies its content $\enc_{h,:}  \cdot \e{D}$ by the repair-encoder matrix of node $f$ to obtain $\enc_{h,:} \cdot \e{D} \cdot \repMat{f}{m}$, and sends the result to node $f$. The required repair-bandwidth of this repair scheme is given in the following proposition. 
		\begin{prop}
			The matrix $\repMat{f}{m}$ defined in \eqref{eq:rep:enc} has rank $\beta_{m} =\binom{d-1}{m-1}$. Therefore, even though the number of entries in vector $\enc_{h,:}\cdot \e{D} \cdot \repMat{f}{m}$ is  $\binom{d}{m-1}$, it can be fully delivered to the failed node  by communicating only $\beta_{m} = \binom{d-1}{m-1}$ symbols in $\mathbb{F}_q$ from node $h$ to node $f$. 
			\label{prop:beta}
		\end{prop}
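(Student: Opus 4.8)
The plan is to show that the $\binom{d}{m}\times\binom{d}{m-1}$ matrix $\repMat{f}{m}$ has rank exactly $\binom{d-1}{m-1}$ by exhibiting a basis of its column space of that size, and simultaneously showing the column space cannot be larger. The key structural observation is the following: column $\sen J$ of $\repMat{f}{m}$ is supported only on those rows $\sen I$ with $\sen I\supseteq\sen J$ and $\size{\sen I}=m$, i.e.\ $\sen I=\sen J\cup\set y$ for some $y\notin\sen J$; in that entry the value is $(-1)^{\s{\e{D}}{y}+\ind{\sen I}{y}}\psi_{f,y}$, which depends only on $y$ (through the fixed encoder entry $\psi_{f,y}$ and the signs) and on $\sen J$ (through the index computation). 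So each column is, up to fixed nonzero scalars along its support, a ``local incidence'' vector.

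First I would set up the sign normalization. Define a diagonal rescaling of the rows of $\repMat{f}{m}$ by $(-1)^{g(\sen I)}$ for a suitable function $g$ on $m$-subsets, and a diagonal rescaling of the columns by $(-1)^{h(\sen J)}$, chosen so that after rescaling the nonzero entries become simply $\psi_{f,y}$ where $\set y=\sen I\setminus\sen J$. Since $\ind{\sen I}{y}$ counts elements of $\sen I$ that are $\le y$, the telescoping identity that makes this work is the standard one used already in the determinant-code literature: the parity of $\ind{\sen I}{y}$ relative to $\ind{\sen J}{y}$ is controlled by whether $y$ is inserted before or after the other elements. A short computation (the same bookkeeping as in the parity equation \eqref{eq:parityeq}) shows such $g,h$ exist. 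Rescaling by invertible diagonal matrices does not change the rank, so it suffices to analyze the rescaled matrix $\bM$ with $\bM_{\sen I,\sen J}=\psi_{f,y}$ when $\sen I=\sen J\cup\set y$ and $0$ otherwise.

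Next I would identify $\bM$ with a familiar object. Think of the rows (indexed by $m$-subsets) and columns (indexed by $(m-1)$-subsets) as the top two layers of the Boolean lattice on $\intv d$; then $\bM$ is (a weighting of) the ``up'' incidence map, i.e.\ the boundary-type operator sending an $(m-1)$-set to the formal sum $\sum_{y\notin\sen J}\psi_{f,y}\,(\sen J\cup\set y)$. Equivalently, writing $e_1,\dots,e_d$ for the standard basis and interpreting an $\ell$-subset as a decomposable wedge, $\bM$ is (up to the diagonal rescaling) the map $\omega\mapsto \psi_f\wedge\omega$ on $\bigwedge^{m-1}(\mathbb F_q^d)\to\bigwedge^{m}(\mathbb F_q^d)$, where $\psi_f=\sum_{y}\psi_{f,y}e_y$ is the (nonzero, since it is a row of a matrix whose every $d$ rows are independent) vector of encoder coordinates of node $f$. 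The rank of wedging with a fixed nonzero vector $\psi_f$ is classical: its kernel is exactly $\psi_f\wedge\bigwedge^{m-2}(\mathbb F_q^d)$, which has dimension $\binom{d-1}{m-2}$, so the rank is $\binom{d}{m-1}-\binom{d-1}{m-2}=\binom{d-1}{m-1}$. This gives both the upper and the lower bound on the rank in one stroke, and the second sentence of the proposition (that the $\binom{d}{m-1}$-dimensional vector $\enc_{h,:}\cdot\e{D}\cdot\repMat{f}{m}$ lies in a fixed $\binom{d-1}{m-1}$-dimensional subspace, hence is recoverable from $\binom{d-1}{m-1}$ field symbols) is then immediate: helper $h$ transmits the coordinates of its message with respect to a fixed basis of the column space of $\repMat{f}{m}$, and node $f$ reconstructs the full product.

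The main obstacle is the sign bookkeeping in the first step: verifying that the signs $(-1)^{\s{\e{D}}{y}+\ind{\sen I}{y}}$ can indeed be absorbed into row and column diagonal rescalings so that $\repMat{f}{m}$ becomes exactly the wedge map (rather than some twisted version of it). The signature contribution $(-1)^{\s{\e{D}}{y}}$ is harmless — it can be folded into a redefinition $\psi_{f,y}\leftarrow(-1)^{\s{\e{D}}{y}}\psi_{f,y}$, which is still a generic nonzero vector — but the $\ind{\sen I}{y}$ term genuinely couples $\sen I$ and $y$, and one must check the cocycle-type consistency condition that lets it split as $g(\sen I)-h(\sen J)$ (equivalently, that along any "square" $\sen J, \sen J\cup\set{y}, \sen J\cup\set{y'}, \sen J\cup\set{y,y'}$ the product of the four signs is $+1$). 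This is the same alternating-sign structure underlying the determinant-code parity equation, so I expect it to go through cleanly, but it is the one place where a careful index argument is required rather than a citation.
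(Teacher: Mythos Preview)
The paper itself does not prove this proposition; it simply defers to \cite[Proposition~3]{elyasi2018newndd}. Your exterior-algebra idea --- recognizing $\repMat{f}{m}$ as left-wedging by the nonzero vector $\psi_f=\enc_{f,:}$ on $\bigwedge^{m-1}(\mathbb{F}_q^d)$, whose rank is $\binom{d}{m-1}-\binom{d-1}{m-2}=\binom{d-1}{m-1}$ --- is correct and elegant. But the two-step sign normalization you propose has a real gap: the intermediate step of diagonally rescaling rows and columns so that every nonzero entry becomes the \emph{unsigned} $\psi_{f,y}$ is impossible, because the cocycle condition you flag actually \emph{fails}. With trivial signature, traverse the $6$-cycle $\{1,2\},\{1\},\{1,3\},\{3\},\{2,3\},\{2\},\{1,2\}$ in the bipartite inclusion graph between $2$-subsets and $1$-subsets; the sum of the exponents $\ind{\sen I}{y}$ along it is $2+2+1+1+2+1=9\equiv 1\pmod 2$, so no splitting $\ind{\sen I}{y}\equiv g(\sen I)+h(\sen J)\pmod 2$ exists. (For the same reason, your claim that the unsigned matrix $\bM$ is ``up to diagonal rescaling'' the wedge map is also false.)

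The fix is to skip the unsigned intermediate entirely: the two impossible diagonal rescalings compose to something trivial. After absorbing only the signature factor via $\tilde\psi_{f,y}:=(-1)^{\s{\e{D}}{y}}\psi_{f,y}$ (which keeps the vector nonzero), the $(\sen I,\sen J)$-entry of $\repMat{f}{m}$ is $(-1)^{\ind{\sen I}{y}}\tilde\psi_{f,y}$. But the matrix of $\omega\mapsto\tilde\psi_f\wedge\omega$ in the standard wedge basis already has $(\sen I,\sen J)$-entry $(-1)^{\ind{\sen J}{y}}\tilde\psi_{f,y}=(-1)^{\ind{\sen I}{y}-1}\tilde\psi_{f,y}$, since moving $e_y$ into sorted position in $e_{\sen I}$ costs exactly $\ind{\sen J}{y}=\ind{\sen I}{y}-1$ transpositions. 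Hence after only the trivial signature absorption, $\repMat{f}{m}$ is precisely $-1$ times the wedge-product matrix: no row/column rescaling and no cocycle verification are needed at all. The rest of your argument (kernel of wedging by a nonzero vector, transmitting $\beta_m$ coordinates relative to a fixed basis of the column space) then goes through unchanged.
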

		We refer to \cite[Proposition~3]{elyasi2018newndd}  for the proof of Proposition~\ref{prop:beta}.
	
	\noindent Upon receiving $d$ vectors $\{\enc_{h,:} \cdot \e{D} \cdot \repMat{f}{m}: h\in \sen H\}$ and stacking them into a matrix, the failed node obtains \[\enc[\cH,:] \cdot \e{D} \cdot \repMat{f}{m},\] where $\enc[\cH,:]$ is a $d\times d$ full-rank sub-matrix of $\enc$, obtained from stacking rows indexed by $h\in \sen H$. Multiplying both sides by $\enc[\cH,:]^{-1}$, the failed node retrieves
	\begin{align}
		\erep{f}{\e{D}} = \e{D} \cdot \repMat{f}{m},
		\label{eq:eRepSpace}
    \end{align}
	which is called the \emph{repair space of node $f$}. One instance of the repair space for a $(d;m)=(6,4)$ code is presented in~\eqref{eq:RT0} in Example~\ref{ex:repair}. Recall that the content of node $f$ is represented by the row vector $\enc_{f,:} \cdot \e{D}$ which its entries have the same labeling as columns of $\e{D}$, i.e $m$ element subsets of $\intv{d}$. All the coded symbols in node $f$ can be recovered from its repair space as described in the following proposition. 
	\begin{prop}
		The coded symbol at index $\sen I$ of node $f$ can be recovered from $\erep{f}{\e{D}}$ defined in \eqref{eq:eRepSpace} using
		\begin{align}
		\left[\enc_{f,:} \cdot \e{D}\right]_{\sen I} = \sum_{i\in \sen I} (-1)^{\s{\e{D}}{i}+\ind{\sen I}{i}} \left[\erep{f}{\e{D}}\right]_{i,\sen I \setminus \seq{i}}.
		\label{eq:ndd:repair}
		\end{align}
		\label{prop:ndd:repair}
	\end{prop}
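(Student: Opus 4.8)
The plan is to compute the right‑hand side of \eqref{eq:ndd:repair} by directly expanding the matrix product $\erep{f}{\e{D}}=\e{D}\cdot\repMat{f}{m}$, and then to reduce the resulting identity among the stored symbols to the parity equation \eqref{eq:parityeq}. First I would expand the repair space: for $x\in\intv{d}$ and an $(m-1)$-subset $\sen J\subseteq\intv{d}$, the definition \eqref{eq:rep:enc} shows that the only columns $\sen I$ of $\e{D}$ contributing to $\left[\erep{f}{\e{D}}\right]_{x,\sen J}$ are those with $\sen J\subseteq\sen I$ and $\size{\sen I}=m$, i.e.\ $\sen I=\sen J\cup\seq{y}$ for some $y\notin\sen J$, so that
\[
\left[\erep{f}{\e{D}}\right]_{x,\sen J}=\sum_{y\notin\sen J}(-1)^{\s{\e{D}}{y}+\ind{\sen J\cup\seq{y}}{y}}\,\psi_{f,y}\,\e{D}_{x,\sen J\cup\seq{y}}.
\]

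Next I would substitute $x=i$ and $\sen J=\sen I\setminus\seq{i}$ for each $i\in\sen I$ into the right‑hand side of \eqref{eq:ndd:repair}. The inner sum over $y\notin\sen I\setminus\seq{i}$ splits into the term $y=i$ and the terms $y\notin\sen I$. For $y=i$ one has $\sen J\cup\seq{y}=\sen I$, the prefactor $(-1)^{\s{\e{D}}{i}+\ind{\sen I}{i}}$ occurs squared, and these terms collapse to $\sum_{i\in\sen I}\psi_{f,i}\,\e{D}_{i,\sen I}$ --- exactly the part of $\left[\enc_{f,:}\cdot\e{D}\right]_{\sen I}=\sum_{x=1}^{d}\psi_{f,x}\e{D}_{x,\sen I}$ coming from the indices $x\in\sen I$. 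It then remains to match the cross terms ($i\in\sen I$, $y\notin\sen I$) with the contribution of the indices $x\notin\sen I$. Here the key observation is that $(\sen I\setminus\seq{i})\cup\seq{y}$ does not contain $i$, so by \eqref{eq:def:S} the entry $\e{D}_{i,(\sen I\setminus\seq{i})\cup\seq{y}}=(-1)^{\s{\e{D}}{i}}w_{i,\sen I\cup\seq{y}}$ is a $w$-entry, and its sign $(-1)^{\s{\e{D}}{i}}$ cancels the one carried along from the expansion above.

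Writing $\sen Y:=\sen I\cup\seq{y}$ and using $\e{D}_{y,\sen I}=(-1)^{\s{\e{D}}{y}}w_{y,\sen Y}$ for $y\notin\sen I$, and grouping the cross terms by the value of $y$ (equivalently, by the coefficient $\psi_{f,y}$), the identity \eqref{eq:ndd:repair} reduces to checking, for every $y\notin\sen I$,
\[
\sum_{i\in\sen I}(-1)^{\ind{\sen I}{i}+\ind{\sen Y\setminus\seq{i}}{y}}\,w_{i,\sen Y}=w_{y,\sen Y}.
\]
Finally I would invoke the parity equation \eqref{eq:parityeq} for the $(m+1)$-subset $\sen Y$, which --- since $\sen Y\setminus\seq{y}=\sen I$ --- rearranges to $w_{y,\sen Y}=-\sum_{i\in\sen I}(-1)^{\ind{\sen Y}{i}+\ind{\sen Y}{y}}w_{i,\sen Y}$. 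Comparing the two displays term by term, it suffices to verify the parity identity
\[
\ind{\sen I}{i}+\ind{\sen Y\setminus\seq{i}}{y}\equiv\ind{\sen Y}{i}+\ind{\sen Y}{y}+1\pmod 2,\qquad i\in\sen I,
\]
which follows from the elementary facts that inserting $y$ into $\sen I$ raises $\ind{\cdot}{i}$ by $\1{y<i}$ and deleting $i$ from $\sen Y$ lowers $\ind{\cdot}{y}$ by $\1{i<y}$, together with $\1{y<i}+\1{i<y}=1$ (since $i\neq y$).

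There is no conceptual obstacle here; the entire difficulty is bookkeeping --- propagating the index functions $\ind{\cdot}{\cdot}$ and the signature values $\s{\e{D}}{\cdot}$ through the matrix product without error, keeping the row/column labellings of $\repMat{f}{m}$ and $\erep{f}{\e{D}}$ straight, and carefully reconciling the two sign conventions when the parity relation is applied. That last coefficient comparison is the step most prone to a dropped sign, so I would write it out in full rather than leave it to the reader.
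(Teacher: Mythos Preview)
Your proposal is correct and follows essentially the same route as the paper's proof: expand the matrix product $\e{D}\cdot\repMat{f}{m}$, split off the diagonal term $y=i$ to recover the $i\in\sen I$ part of $\enc_{f,:}\cdot\e{D}$, and then handle the cross terms by invoking the parity equation \eqref{eq:parityeq} together with the same mod-$2$ index identity $\ind{\sen I}{i}+\ind{(\sen I\setminus\seq{i})\cup\seq{y}}{y}\equiv\ind{\sen I\cup\seq{y}}{i}+\ind{\sen I\cup\seq{y}}{y}+1$. The only difference is presentational---you organize the cross terms by first reducing to a clean identity in the $w$-symbols before checking signs, whereas the paper carries the full expression through a single chain of equalities---but the underlying computation is identical.
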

	This proposition is very similar to Proposition~2 in \cite{elyasi2018newndd}. However, due to the modification introduced by the signature vector here, we present the proof of the current proposition in Appendix~\ref{app:prf:prop:ndd:repair} for the sake of completeness. 
	\begin{rem}
		Note that a signed determinant code can be defined over any Galois field. In particular, for a code designed over $\mathsf{GF}(2^s)$ with characteristic $2$, we have $-1=+1$, and hence, all the positive and negative signs disappear. Especially, the signs in~\eqref{eq:def:S} can be removed and the parity equation in  \eqref{eq:parityeq} will simply reduce to $\sum_{y \in \sen I} w_{y, \sen I} = 0$. Also, the non-zero entries of the repair encoder matrix in \eqref{eq:rep:enc} will be $\psi_{f,x}$, and the repair equation in \eqref{eq:ndd:repair} will be simplified to $\left[\enc_{f,:} \cdot \e{D} \right]_{\sen I} = \sum_{x\in \sen I} \left[\rep{f}{\e{D}}\right]_{x,\sen I \setminus \seq{x}}$.
	\end{rem}
	
	\section{$(n,k,d;\mu)$ Cascade Code Construction}
	\label{sec:code}
		In this section, we describe the construction of the cascade codes. For a fixed set of system parameters $(n,k,d;\mu)$, the code parameters $(\alpha,\beta,F)$ of this construction are given in~\eqref{eq:params}. Similar to the construction of $(n,k=d,d;m)$ determinant codes, an $(n,k,d;\mu)$ cascade code is also presented by a matrix $\bC_{n \times \alpha}$ whose  $i$-th row includes the coded content of the $i$-th node. This $\bC_{n \times \alpha}$ matrix is obtained by multiplying an \emph{encoder matrix} $\enc_{n \times d}$ by a \emph{super message matrix} $\bM_{d \times \alpha}$. We first explain the construction of the encoder matrix and then give an overview of the cascade structure of the super message matrix. The details of $(n,k,d)$ super message matrix are discussed in Section~\ref{sec:supermessage}. A summary of the symbols and notations used throughout this construction is given in Table~\ref{table:notations}, as a reference to facilitate following the code construction. 
	\begin{table*}
	\begin{center}
    \resizebox{\textwidth}{!}{\begin{tabular}{|c||l|c|}
      \hline
     Symbol & Definition & Reference \\
     \hline 
     \hline
     $\ind{\sen I}{x}$ & number of elements in set $\sen I$ not exceeding $x$ & \eqref{eq:def:ind}\\
     \hline
     $\e{D}$ & message matrix of a determinant code  & \eqref{eq:def:S}\\
     \hline
     $\md{\e{D}}$ & mode of the determinant code with message matrix $\e{D}$ & \\
     \hline 
     $\up{\e{D}}$ & submatrix of $\e{D}$ consisting of its top $k$ rows & \\ 
     \hline
     $\down{\e{D}}$ & submatrix of $\e{D}$ consisting of its bottom $(d-k)$ rows & \\ 
     \hline
     $\sigma_{\e{D}}$ & signature vector of signed determinant code & Definition~\ref{defi:sign:vect}\\
     \hline
     $\repMat{f}{m}$ & repair encoder matrix for failed node $f$ & Definition~\ref{defi:repair-encoder}\\
     \hline
     $\erep{f}{\e{D}}$ & repair space of node $f$ & \eqref{eq:eRepSpace}\\
     \hline
     $\enc$ & encoder matrix & Definition~\ref{def:encoder}\\
     \hline
     $\dgp{P}, \ngp{P}, \pgp{P}$ & classification of symbols in $\down{\e{P}}$ & Definition~\ref{def:groups}\\
     \hline
     $(x,\sen B)$ & injection pair & Remark~\ref{rem:injpair}\\
     \hline
     $\eSMat{Q}{m}{x,\sen B}{\e{P}}$ & child matrix of parent matrix $\e{P}$ associate with injection pair $(x,\sen B)$ & \eqref{eq:mode:relation} and \eqref{eq:inj:sign}\\
     \hline
     $\tOMat{m}{x,\sen B}{\e{P}}$ & injection matrix from parent matrix $\e{P}$ to a child matrix with injection pair $(x,\sen B)$ & \eqref{eq:inj:mat}\\
     \hline
     $\fSMat{Q}{m}{x,\sen B}{\e{P}}$ & child code $\eSMat{Q}{m}{x,\sen B}{\e{P}}$ after injection & \eqref{eq:Qdef}\\
     \hline
     $\MM$ & super message matrix of cascade code  & \\
     \hline
    \end{tabular}}
    \end{center}
    \caption{A summary of the symbols frequently used in the code construction.}
    \label{table:notations}
    \end{table*}
	\subsection{The Encoder Matrix}
	\begin{defi}
		\label{def:encoder}
		The encoder matrix $\enc$ for a code with parameters $(n,k,d)$ is defined as an $n\times d$ matrix 
		\[\enc_{n\times d} = \left[\mathbf{\Gamma}_{n\times k} | \mathbf{\Upsilon}_{n\times (d-k)} \right],
		\]
		such that
		\begin{enumerate}[label=\bf{(E\arabic*)}, ref=\bf{(E\arabic*)}] 
		    \item any $k$ rows of $\mathbf{\Gamma}$ are linearly independent; and \label{cond:G}
		    \item any $d$ rows of $\enc$ are linearly independent. \label{cond:P}
		\end{enumerate}
	\end{defi}
	Note that Vandermonde matrices satisfy both properties. Similar to determinant codes, the super-message matrix of the cascade codes will be multiplied (from left) by an encoder matrix to generate the coded content of the nodes. Also, the condition in \ref{cond:G} is an additional requirement in comparison to the encoder matrix of ($k=d$)-signed determinant codes.
	
	In the following, we define a specific class of codes, called semi-systematic codes, and their corresponding encoder matrix.
	\begin{defi}
		\label{def:semi-sys}
		We call a signed determinant (or cascade) code \emph{semi-systematic code} if the contents of the first $k$ nodes are identical to the $k$ rows of the message matrix (or super message matrix), i.e. if $\bC_{i,:} = \e{D}_{i,:}$ (or $\bC_{i,:} = \MM_{i,:}$) for every $i\in \intv{k}$. The encoder matrix $\enc$ of a semi-systematic code should consist of a $k\times k$ identity matrix at its upper-left corner and a $k\times (d-k)$ zero matrix in its upper-right corner. We also refer to such an encoder matrix as  \emph{semi-systematic encoder}. \footnote{Note that the symbols of the message matrix are not necessarily raw symbols from the file, due to the parity equation in \eqref{eq:parityeq} as well as injections. Hence, we rather call these codes semi-systematic to distinguish them from the standard notion of systematic codes.}
	\end{defi}	
	In general, any matrix satisfying conditions \ref{cond:G} and \ref{cond:P} can be converted to a semi-systematic encoder matrix as it is discussed in Appendix~\ref{app:semi-sys}. For the code construction of this paper, the encoder matrix doesn't need to be semi-systematic, and we prove the properties of the code for the general encoder matrices defined in~\eqref{def:encoder}. However, to demonstrate the main idea of the code construction, we use semi-systematic encoders.  
	
	\subsection{An Overview: Cascading Message Matrices of Determinant Codes}
	\label{sec:cascade}
	The general structure of super message matrix $\MM$ for an $(n,k,d;\mu)$ cascade code is presented in Fig.~\ref{fig:cascade}. This matrix is obtained by concatenating the message matrices of multiple $(n,d,d;m)$ signed determinant codes with different modes, ranging from $m=0$ to $m=\mu$. The number of required message matrices for each mode $m$ is denoted by $\mdcnt_m$. Then, this super-message matrix will be multiplied (from left) by an encoder matrix $\enc$ to generate the node contents. Therefore, the codewords (the content of the nodes) will also be a collection of the codewords of the signed determinant codes used as building blocks. The following definition is what we use to refer to each of these message matrices
	\begin{defi}
			We refer to the determinant code message matrices that are concatenated to form  a super-message matrix as \emph{code segments}. Similarly, the ultimate codewords of a cascade code comprise of multiple  \emph{codeword segments}, each corresponding to the multiplication of the encoder matrix by one code segment.
	\end{defi}
	The construction starts with the message matrix of an $(n,d,d)$ system with mode $m=\mu$ called the \emph{root} of the cascade code. There is only one message matrix (code segment) of this mode ($\mdcnt_\mu=1$), and all other message matrices have modes less than $\mu$. These message matrices need to be modified such that the code generated from multiplication of  the encoder to the ultimate super message matrix can provide data recovery from any $k$ nodes and also enables the exact-repair property from any $d$ helper nodes.
	
	\begin{figure*}[!t]
		\centering
		\includegraphics[width=\textwidth,page=1]{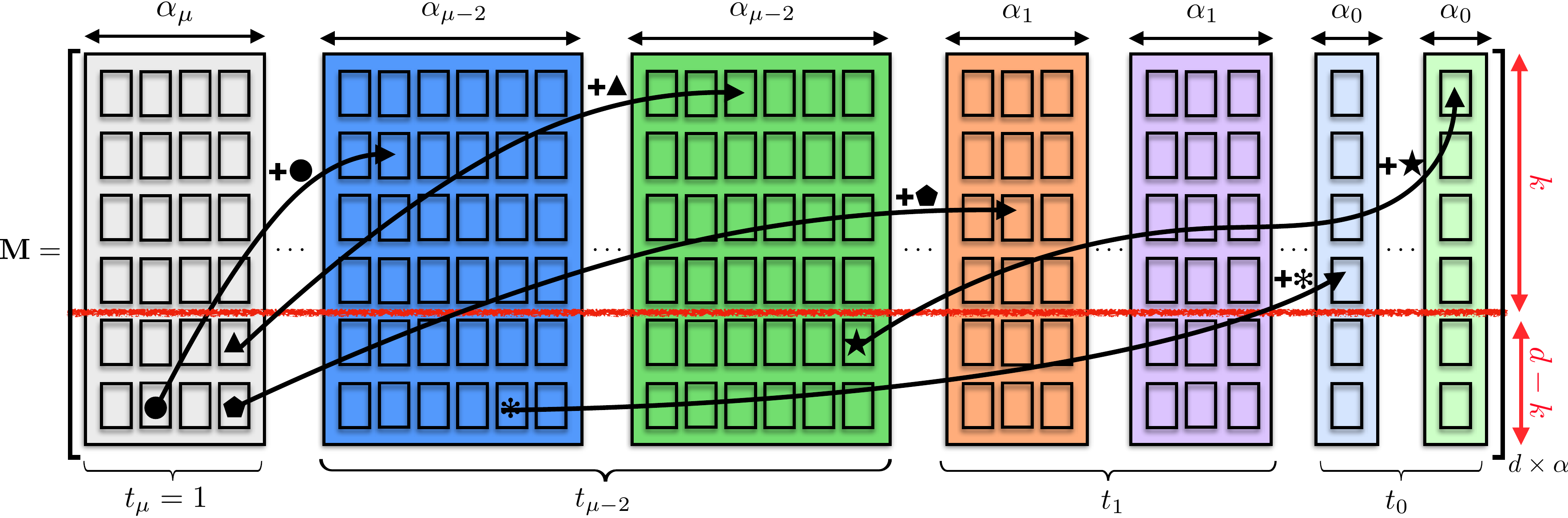}
		\caption{Cascading of determinant codes. In this figure, every rectangle represents the message matrix of one determinant code, and $\mdcnt_m$ denotes the number of message matrices of mode $m$, for $m\in \set{0,1,\dots, \mu}$. The message matrix at mode $m$ is of size $d\times \alpha_m$, where $\alpha_m = \binom{d}{m}$. These message matrices are placed from the highest mode to the lowest mode. The leftmost message matrix corresponds to the root of the cascade code with mode $m=\mu$. The rightmost message matrices have either a mode of $m=1$ or $m=0$. In the message matrices $\MM$, some of the symbols in the bottom $d-k$ rows will be  missing in data recovery. These missing symbols are backed up by begin injected (adding with a sign) into the parity symbols located at the top $k$ rows of other determinant codes with lower modes. Injections are demonstrated by arrows from a symbol in the parent matrix to a parity symbol in the child matrix.}
		\label{fig:cascade}
	\end{figure*} 
	
	To demonstrate the main idea, we can assume the encoder matrix $\enc_{n\times d}$ is semi-systematic, and thus the content of node $i\in\intv{k}$ is the same as the $i$-th row of the matrix $\MM$. Now, consider an attempt for data recovery from the first $k$ nodes, and assume $\e{P}$ is a code segment of $\MM$ with mode $m_1$. It is clear that without making any modification, some symbols in rows $[k+1:d]$ of $\e{P}$ cannot be recovered. This is due to the fact that we are performing data recovery only from the first $k<d$ nodes, and these symbols do not appear (coded or uncoded) in the content of the top $k$ nodes. We refer to such symbols as \emph{missing symbols}.
	
	In order to overcome the data recovery challenge, one can leverage the parity symbols located in the top $k$ rows of a message matrix  $\e{Q}$, associated with another signed determinant code and mode $m_2$, where  $m_2<m_1$. Recall from Remark~\ref{rem:raw:parity}  that a symbol $w_{\max \sen Y,\sen Y}$ in $\e{Q}$ with $\size{\sen Y}=m_2+1$ is a parity symbol located in position  $\left(\max \sen Y, \sen Y \setminus \set{\max \sen Y}\right)$ of matrix $\e{Q}$.  If $\sen Y \subseteq \intv{k}$, all the symbols of the $w$-group $\sen Y$ (defined in Definition~\ref{def:parity}), including the parity symbol, will be located in the upper $k$ rows of $\e{Q}$, and hence, appear in the content of the first $k$ nodes used for data recovery. This will make the symbol $w_{\max \sen Y,\sen Y}$ redundant in the data recovery, as it can be also obtained from the other $w$-group symbols of $\sen Y$, i.e. $\{w_{i,\sen Y}: i\in \sen Y, i<\max \sen Y\}$, and using the parity equation in~\eqref{eq:parityeq}. Therefore, by adding (a possibly signed copy of)  a missing symbol $z$ from the bottom $\intv{k+1:d}$ rows of the massage matrix $\e{P}$ to symbol $w_{\max \sen Y,\sen Y}$ of $\e{Q}$, we can provide a backup for this missing symbol. With this, the entry at position $\left(\max \sen Y, \sen Y \setminus \set{\max \sen Y}\right)$ of matrix $\e{Q}$ will be $w_{\max \sen Y, \sen Y} \pm z$. As the parity part $w_{\max \sen Y, \sen Y}$ can be independently recovered from the parity equation, one can remove it from the entry and obtain a copy of the missing symbol $z$. Based on this description for providing backups for missing symbols, we formally define some  terms in the following definition, that will be used in the rest of this paper.
	
	\begin{defi}
			We define the following terms, which will be used in the construction of cascade codes:
			\begin{itemize}
				\item \textbf{Symbol injection} is referred to the process of providing backup copy(s) for missing symbols,   which is simply adding a signed version of them to the parity symbols of (some of) other message matrices. 
				\item \textbf{Parent matrix/code (of an injection)} is referred to the message matrix of the signed determinant code whose symbols in the lower part need protection, and is hence injected. 
				\item \textbf{Child matrix/code (of an injection)} is the message matrix into which symbols are injected.
			\end{itemize}
	\end{defi}
		
	For the injection of a specific missing symbol of a parent matrix $\e{P}$, we have to carefully choose the child code  and  the parity symbol of the child code into which the injection occurs.  The main challenge is to introduce such backup copies such that they do not demolish the repair process. Note that from Proposition~\ref{prop:ndd:repair}, each of the code segments used in the super message matrix $\MM$ is repairable before injection (modification). However, after modification they are no longer repairable by the standard repair process of $(n,d,d;m)$ codes. Therefore, a careful design for the injection and a revised repair process are needed that can simultaneously guarantee data recovery and node repairability.
	
	In order to protect the missing symbols of the root, some child matrices need to be introduced. The missing symbols of the newly introduced child code should also be backed up by injecting into other determinant codes with lower modes. This leads to a \emph{cascade} of determinant codes. This  construction process continues until it reaches signed determinant codes that do not need a further injection.

	Although the main idea of injection is inspired based on a semi-systematic encoder and data recovery from the top-$k$ nodes, we will prove that, with no further modification, the same super message matrix (after injection) leads to a cascade code that provides data recovery for any choice of $k$ nodes.
	
	The details of the construction of the cascade super message matrix are discussed in the following section.
	
	\section{Super message matrix of $(n,k,d)$ codes}
	\label{sec:supermessage}
    In this section, the details of the cascade structure of the super message matrix of an $(n,k,d)$ code are given in several steps. In order to understand the details and ideas of code construction, we present a running example for an $(n, k=4, d=6;\mu=4)$ code with parameters $(\alpha,\beta,F)=(81,27,324)$, as indicated by \eqref{eq:params}. The construction of this code instance is also broken into several steps, according to the details of the general code construction. Note that a code with parameters  $(\alpha,\beta,F)=(81,27,324)$ is indeed an MSR code, since $F=k\alpha$ and $\beta = \alpha/(d-k+1)$, for which several code constructions are known (e.g.  \cite{ye2017explicit,ye2017explicitnearly,sasidharan2016explicit,li2018generic,li2018alternative,rawat2016progress}). Nevertheless, the 

	(unnormalized) code parameters, such as sub-packetization level, of the existing codes will grow by increasing the number of nodes, $n$, in the system. However, in the code construction of this paper, the sub-packetization level is independent of $n$. Also, cascade codes enable the flexibility of system expansion by adding new parity nodes, without changing the contents of already existing nodes.\footnote{The only limitation for adding new nodes to the system is the field size $q$, that has to satisfy $q>n$.}
	
	In a nutshell, the code construction is a recursive algorithm that creates a rooted tree similar to that of Fig.~\ref{fig:tree}. In this tree, nodes represent message matrices of different determinant codes (code segments) that are concatenated to form the super message matrix. The choice of the root of this tree will be discussed in Subsection~\ref{subsec:root}. In Subsection~\ref{subsec:grouping}, we explain that which symbols of a signed determinant code (a parent matrix $\e{P}$ associated to a node in the tree) need to be protected by injecting into another code (the child matrix $\e{Q}$, corresponding to another node in the tree, which lies under\footnote{We may exchangeably use child/parent \emph{matrix} or \emph{node} when we discuss the hierarchical tree} $\e{P}$). Next, in  Subsection~\ref{subsec:injection} all child nodes (matrices) of the parent node (matrix) will be identified in Remark~\ref{rem:injpair}, and then mode, signature vector, and modifications of each child's message matrix  are explained. Finally, Subsection~\ref{subsec:cascade:struct} completes the details of the construction of the super message matrix. To conclude this section, we present Algorithm~\ref{alg:supermsg} that summarizes and incorporates all steps of the code construction.
	
	\subsection{Root of the Cascade Code}
	\label{subsec:root}
	In order to construct the message matrix of an $(n,k,d;\mu)$ code, we first generate a massage matrix for an $(n,k'=d,d; m=\mu)$ determinant code of mode $m=\mu$ and we choose an all-zero vector (i.e., $\s{\e{T}_0}{x}=0$ for every $x\in \intv{d}$) as the signature vector of this code.
	\begin{ex}
		\label{ex:root}
		Our goal is to generate an  $(n, k=4, d=6;\mu=4)$ cascade code, with parameters $(\alpha,\beta,F)=(81,27,324)$. To this end, we start with a determinant code $(d=6; m_0=4)$ and denote its message matrix by $\e{T}_0$. We choose the signature vector $\sigma_{\e{T}_0}=(0,0,0,0,0,0)$. The size of this message matrix is $d\times \alpha_{m_0} = d\times \binom{d}{m_0} = 6\times \binom{6}{4} = 6 \times 15$. The message matrix $\e{T}_0$ of this determinant code is given in~\eqref{eq:D0} in the next page. We will need to define matrices $\e{T}_1, \e{T}_2,\cdots, \e{T}_{14}$ later to complete the code construction.  We distinguish the entries of matrix $\e{T}_i$ by the superscript $\nd{i}$.
		Note that horizontal line in the matrix separates the top $k=4$ rows from the bottom $(d-k)=6-4=2$ rows. We will refer to the top $4\times 15$ sub-matrix  by $\up{\e{T}_0}$, and to the bottom $2\times 15$ sub-matrix  by $\down{\e{T}_0}$. In the representation of this matrix, some symbols are designated by a frame around them, with different background colors, which will be explained in the next subsections.
		\clearpage
		\begin{align}
	    	\includegraphics[width=0.9\textwidth,page=1]{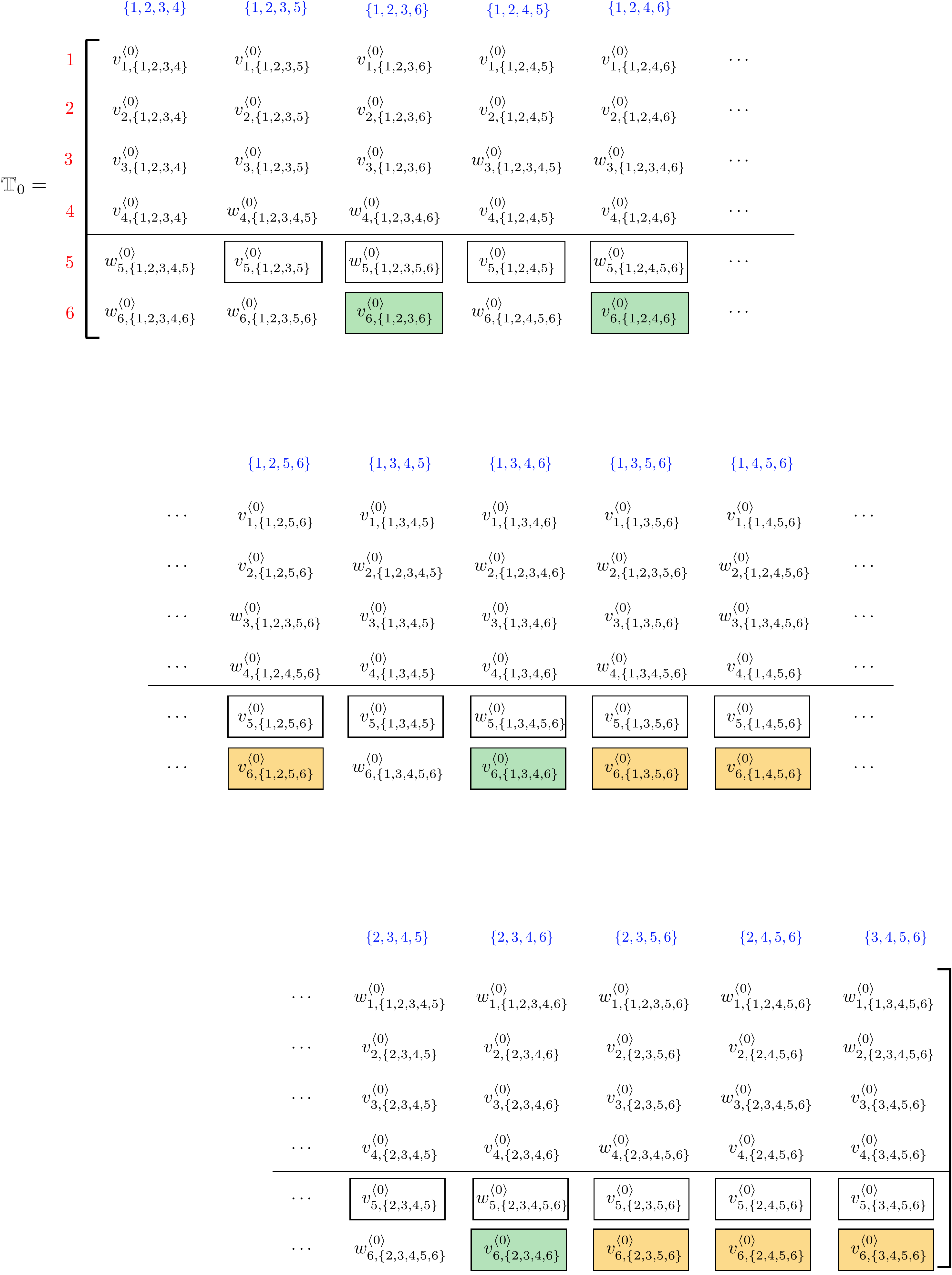}
    	    \label{eq:D0}
	    \end{align}
	\end{ex}

	\subsection{Grouping of Symbols}
	
	\label{subsec:grouping}
	Next, we need to group the missing symbols at the lower $(d - k)$ rows of a determinant code. This is due to the fact that the symbols in each group will be treated differently. To further elaborate on this, we need the following definition.	
	\begin{defi}
		Consider an $(n,d,d;j)$ determinant code at mode $j$ with message matrix $\e{P}$. Recall from~\eqref{eq:def:S} that, $\e{P}_{x,\sen J}$ denotes the entry in the row $x$ and column $\sen J$ with $\size{\sen J}=\md{\e{P}}=j$. For an entry $\e{P}_{x,\sen J}$ in the lower $(d-k)$ rows of $\e{P}$, we have $x\in \intv{k+1:d}$, and set $\sen J$ can be partitioned into two disjoint sets $\sen A = \sen J \cap \intv{k}$ and $\sen B = \sen J \cap \intv{k+1:d}$, where $\size{\sen A} + \size{\sen B} = \size{\sen J} = \md{\e{P}}$. The entries in the lower $(d-k)$ rows will be classified  into three disjoint groups as:
			\begin{itemize}
				\item $\dgp{P} = \set{\e{P}_{x,\sen A \cup \sen B}: x \leq \max \sen B, \sen A \neq \varnothing}$,
				\item $\ngp{P}=  \set{\e{P}_{x,\sen A \cup \sen B}: x \leq \max \sen B, \sen A = \varnothing}$,
				\item $\pgp{P}=  \set{\e{P}_{x,\sen A \cup \sen B}: x> \max \sen B}$.
			\end{itemize}
		\label{def:groups}
	\end{defi}
	We treat symbols in the above-mentioned three groups as follows:
	\begin{itemize}
			\item Symbols in $\dgp{P}$ are raw data  symbols that need to be protected  by injecting into a redundant parity symbol located in the top $k$ rows of child message matrices with lower modes.
			\item Symbols in $\ngp{P}$ will be set to zero (nulled). This yields a reduction of $N_j=\size{\ngp{P}}$ in $F_j$, the number of raw data symbols in the message matrix.
			\item Symbols in $\pgp{P}$ are essentially parity symbols. This is because for $\e{P}_{x,\sen A \cup \sen B} \in \pgp{P}$, we have $x> \max \sen B= \max (\sen A \cup \sen B) = \max \sen J$, and from Remark~\ref{rem:raw:parity}, this is a parity symbol. This symbol can be recovered using the parity equation in~\eqref{eq:parityeq}. More precisely, we have
			\begin{align*}
			    	\e{P}_{x,\sen A \cup \sen B } &= \e{P}_{x,\sen J} = (-1)^{\s{\e{p}}{x}} w_{x,\set{x} \cup \sen J} = (-1)^{\s{\e{p}}{x}+j} \sum_{y \in \sen J}(-1)^{\ind{\sen J \cup \set{x}}{y}}w_{y,\set{x} \cup \sen J},
			\end{align*}
			which can be evaluated after once  symbols in the summation are recovered. Therefore, there is no need to provide backups for the symbols in $\pgp{P}$.
	\end{itemize}
	In summary, every symbol $\e{P}_{x,\sen A \cup \sen B}$ located at the bottom $\intv{k+1:d}$ rows of  $\e{P}$ is injected into another determinant code if and only if it is not a parity symbol and $\sen A \neq \varnothing$.
	
	\begin{ex}
		\label{ex:groups}
		According to Definition~\ref{def:groups}, the symbols in $\down{\e{T}_0}$ can be grouped as follows:
		\begin{align*}
		\dgpi{T}{0} & \hspace{-0.5mm}=\hspace{-0.5mm} \left\{ \hspace{-1.5mm} \begin{array}{l}
		v_{5,\set{1,2,3,5}}^{\nd{0}},v_{5,\set{1,2,4,5}}^{\nd{0}},v_{5,\set{1,2,5,6}}^{\nd{0}},v_{5,\set{1,3,4,5}}^{\nd{0}}, 	v_{5,\set{1,3,5,6}}^{\nd{0}},\\
		v_{5,\set{1,4,5,6}}^{\nd{0}}, v_{5,\set{2,3,4,5}}^{\nd{0}},
		v_{5,\set{2,3,5,6}}^{\nd{0}}, v_{5,\set{2,4,5,6}}^{\nd{0}}, 
		v_{5,\set{3,4,5,6}}^{\nd{0}},\\ v_{6,\set{1,2,3,6}}^{\nd{0}},v_{6,\set{1,2,4,6}}^{\nd{0}}, v_{6,\set{1,2,5,6}}^{\nd{0}}, v_{6,\set{1,3,4,6}}^{\nd{0}},
		v_{6,\set{1,3,5,6}}^{\nd{0}} ,\\
		v_{6,\set{1,4,5,6}}^{\nd{0}}, v_{6,\set{2,3,4,6}}^{\nd{0}},v_{6,\set{2,3,5,6}}^{\nd{0}}, v_{6,\set{2,4,5,6}}^{\nd{0}},v_{6,\set{3,4,5,6}}^{\nd{0}},\\
		w_{5,\set{1,2,3,5,6}}^{\nd{0}},w_{5,\set{1,2,4,5,6}}^{\nd{0}},w_{5,\set{1,3,4,5,6}}^{\nd{0}},w_{5,\set{2,3,4,5,6}}^{\nd{0}}
		\end{array}\hspace{-1.5mm} \hspace{-1pt}\right\}\hspace{-1pt},\\
		\ngpi{T}{0} &\hspace{-0.5mm}=\hspace{-0.5mm}\varnothing,\\
		\pgpi{T}{0} &\hspace{-0.5mm}=\hspace{-0.5mm} \left\{
		\begin{array}{l}  w_{5,\set{1,2,3,4,5}}^{\nd{0}},w_{6,\set{1,2,3,4,6}}^{\nd{0}},w_{6,\set{1,2,3,5,6}}^{\nd{0}},w_{6,\set{1,2,4,5,6}}^{\nd{0}},w_{6,\set{1,3,4,5,6}}^{\nd{0}},w_{6,\set{2,3,4,5,6}}^{\nd{0}} \end{array} 
		\right\}.
		\end{align*}
		
		The symbols in $\dgpi{T}{0}$ are marked in boxes in \eqref{eq:D0}, to indicate that they need to be injected into other code segments with lower modes. Since $\ngpi{T}{0} = \varnothing$, no symbol will be set to zero. Symbols in $\pgpi{T}{0}$ can be recovered from the parity equations, and won't be injected. These symbols located below the horizontal line, without a surrounding frame in \eqref{eq:D0}.
	\end{ex}
	The details of the injection process are given in the following subsection.
	\subsection{The Injection Process} 
	\label{subsec:injection}
	In the previous part, it was discussed that only symbols in $\dgp{P}$ need to be protected by being injected into the message matrices of child codes of $\e{P}$. Recall that the term injection refers to the addition of a signed copy of a missing symbol of the parent matrix to a parity (redundant) symbol of the child matrix. \emph{It is important to note that a single symbol in $\dgp{P}$ might be injected into several child matrices}. One of such injections is called \emph{primary}, and the rest are called \emph{secondary}.  In fact, in a primary injection, a symbol is injected into one of the parities located in the upper $k$ rows of a child node while secondary injection(s) will be performed into parity symbol located in the lower $(d-k)$ rows of a child matrix. The goal of primary injections is to preserve the data recovery property, while secondary injections are performed in order to preserve the node repair property. In this subsection, we first introduce child nodes of a parent node and then explain the details of primary and secondary injections into a given child node.
	
	To fully cover the details of injection process, this subsection consists of four parts: 1) child matrices of a given parent matrix, 2) mode of a child matrix, 3) signature vector of a child matrix, and 4) modified child matrix after the injection. These parts are explained in the following. 
	
	\noindent \underline{\textbf{Child matrices of a given parent matrix:}}\\
	For a given parent matrix $\e{P}$, the group $\dgp{P}$  will be further partitioned into several subgroups, depending on their pair of $(x,\sen B)$. Recall that for a symbol $\e{P}_{x,\sen J}$, we identified $\sen A = \sen J \cap \intv{k}$ and $\sen B=\sen J \cap \intv{k+1:d}$. All the symbols of the form $\e{P}_{x,\sen A \cup \sen B}$ in $\dgp{P}$ with the same pair of $(x,\sen B)$ will be \emph{primarily} injected into parity symbols in the top $k$ rows of the same child determinant code. We refer to $(x,\sen B)$ as the \emph{injection pair} and we denote the message matrix of the child code of $\e{P}$ into which these symbols are injected by $\eSMat{Q}{m}{x,\sen {B}}{\e{P}}$ (or simply $\e{Q}$, whenever its parent matrix and injection pair are clear from the context). The following remark gives a more precise description of child nodes of a parent matrix.
	\begin{rem}
		\label{rem:injpair}
		A parent code segment with message matrix $\e{P}$, has several child nodes of the form  $\eSMat{Q}{m}{x,\sen {B}}{\e{P}}$, for all pairs $(x,\sen B)$ satisfying
			\begin{description}
				\item[\namedlabel{cond:IP-i}{(i)}] $\sen B \subseteq [k+1:d]$,
				\item[\namedlabel{cond:IP-ii}{(ii)}] $\size{\sen B} <  \md{\e{P}}$,
				\item[\namedlabel{cond:IP-iii}{(iii)}] $x\in [k+1:d]$, 
				\item[\namedlabel{cond:IP-iv}{(iv)}] $x\leq \max \sen B$.
		\end{description}
		The above conditions come from the description of $\dgp{P}$ in Definition~\ref{def:groups}. Especially, condition \ref{cond:IP-ii} is equivalent to $\sen B$ being a proper subset of $\sen J$, and thus $\size{\sen B}<\size{\sen A \cup \sen B} = \md{\e{P}}$ guarantees that for the injecting symbol $\e{P}_{x,\sen A \cup \sen B}\in \dgp{P}$ we have  $\sen A \neq \varnothing$. Condition~\ref{cond:IP-iii} ensures that symbol $\e{P}_{x,\sen A \cup \sen B}$ lies in the lower submatrix $\down{\e{P}}$, and \ref{cond:IP-iv} rules out the parity symbols. 
		Note that we defined $\max \varnothing=-\infty$, and therefore conditions \ref{cond:IP-iii} and \ref{cond:IP-iv} rule out the  choice of $\sen B=\varnothing$.
	\end{rem}
	In Section~\ref{sec:parameters}, the number of child matrices for a given parent matrix is evaluated based on Remark~\ref{rem:injpair}.

	\begin{ex}
		\label{ex:inj:pair}
		In Table~\ref{table:T0:children}, all the injection pairs $(x,\sen B)$ satisfying conditions of Remark~\ref{rem:injpair} for $\e{T}_0$ of Example~\ref{ex:root} are enumerated. Note that each injection pair in the table corresponds to one child matrix. 
		For the ease of notation, we use $\e{T}_1 = \eSMatt{T}{1}{m}{5, \{5\}}{\e{T}_0}$ to refer to the child matrix of $\e{T}_0$ associated with the injection pair $(5,\{5\})$. Similarly, child matrices  $\e{T}_2,\e{T}_3, \e{T}_4 ,\e{T}_5$ are defined in Table~\ref{table:T0:children}. For a given $(x,\sen B)$, the corresponding child matrix will primarily host symbols of the form $\left[\e{T}_0 \right]_{x,\sen A \cup \sen B}\in \dgp{T}$. Recall from  Definition~\ref{def:groups} that we have  $\sen A \subseteq \intv{4}$ and $\size{\sen A}+\size{\sen B}= \md{\e{T}_0}=4$.
    	\begin{table*}[!h]
			\centering
			\begin{align*}
			\begin{array}{|c|c|c|c|}
			\hline
			\text{Pair } (x,\sen B) &\text{Symbol Format}& \text{Primarily Injected Symbols}  & \text{Child Matrix}\\ \hline
			(5,\set{5})& \left[\e{T}_0 \right]_{5,\sen A \cup \set{5}}  & v_{5,\set{1,2,3,5}}^{\nd{0}},v_{5,\set{1,2,4,5}}^{\nd{0}},v_{5,\set{1,3,4,5}}^{\nd{0}},v_{5,\set{2,3,4,5}}^{\nd{0}} & \e{T}_1 = \eSMatt{T}{1}{m}{5, \{5\}}{\e{T}_0} \\ \hline
			(6,\set{6}) & \left[\e{T}_0 \right]_{6,\sen A \cup \set{6}}  & v_{6,\set{1,2,3,6}}^{\nd{0}},v_{6,\set{1,2,4,6}}^{\nd{0}},v_{6,\set{1,3,4,6}}^{\nd{0}},v_{6,\set{2,3,4,6}}^{\nd{0}} & \e{T}_2 = \eSMatt{T}{1}{m}{6, \{6\}}{\e{T}_0} \\ \hline
			(5,\set{6}) & \left[\e{T}_0 \right]_{5,\sen A \cup \set{6}}& w_{5,\set{1,2,3,5,6}}^{\nd{0}},w_{5,\set{1,2,4,5,6}}^{\nd{0}},w_{5,\set{1,3,4,5,6}}^{\nd{0}},w_{5,\set{2,3,4,5,6}}^{\nd{0}} & \e{T}_3 = \eSMatt{T}{1}{m}{5, \{6\}}{\e{T}_0} \\ \hline
			(5,\set{5,6})& \left[\e{T}_0 \right]_{5,\sen A \cup \set{5,6}} & v_{5,\set{1,2,5,6}}^{\nd{0}},v_{5,\set{1,3,5,6}}^{\nd{0}},v_{5,\set{1,4,5,6}}^{\nd{0}},v_{5,\set{2,3,5,6}}^{\nd{0}},v_{5,\set{2,4,5,6}}^{\nd{0}},v_{5,\set{3,4,5,6}}^{\nd{0}} & \e{T}_4 = \eSMatt{T}{4}{m}{5, \{5,6\}}{\e{T}_0} \\ \hline
			(6,\set{5,6})& \left[\e{T}_0 \right]_{6,\sen A \cup \set{5,6}} &  v_{6,\set{1,2,5,6}}^{\nd{0}},v_{6,\set{1,3,5,6}}^{\nd{0}},v_{6,\set{1,4,5,6}}^{\nd{0}},v_{6,\set{2,3,5,6}}^{\nd{0}},v_{6,\set{2,4,5,6}}^{\nd{0}},v_{6,\set{3,4,5,6}}^{\nd{0}} & \e{T}_5 = \eSMatt{T}{5}{m}{6, \{5,6\}}{\e{T}_0}\\ \hline
			\end{array}
			\end{align*}
			\caption{Injection pairs for the root matrix of the running example.}
			\label{table:T0:children}
		\end{table*} 
		
		Note that all symbols of $\dgp{T}$ are marked in a box in \eqref{eq:D0}. Moreover, those  symbols that will be injected into $\e{T}_2$ and $\e{T}_5$ are further distinguished by different background colors (green and orange for $\e{T}_2$ and $\e{T}_5$, respectively.).
	\end{ex}
	
	\noindent \underline{\textbf{Mode of a child matrix:}}\\
	Recall that the child matrix $\eSMat{Q}{m}{x,\sen {B}}{\e{P}}$ is introduced for the primary injection of a symbol $\e{P}_{x,\sen A \cup \sen B}\in \dgp{P}$. In the primary injection, a (signed copy of a) missing symbol $\e{P}_{x,\sen A \cup \sen B}$ will be injected into the parity symbol $\pm w_{\max \sen A, \sen A}$ located in the row $i=\max \sen A$ and column $\sen I = \sen A \setminus \set{\max \sen A}$ of the message matrix $\eSMat{Q}{m}{x,\sen {B}}{\e{P}}$. Since $\sen A \subseteq \intv{k}$, we have $1 \leq i \leq k$, which means  symbol $\pm w_{\max \sen A, \sen A}$ is located in one of the top $k$ rows of  $\eSMat{Q}{m}{x,\sen {B}}{\e{P}}$. Moreover, since the columns of $\e{Q}$ are labeled by sets of size $\size{\sen{A}}-1$ (see equation~\eqref{eq:vw-symbols}), the relation between the mode of the child matrix and the parent matrix for a fixed injection pair $(x,\sen B)$ is given by
		\begin{align}
		\begin{split}
		\md{\e{Q}} &= \size{\sen A} - 1 
		= \md{\e{P}} - \size{\sen B} -1,
		\end{split}
		\label{eq:mode:relation}
		\end{align}
	where the second equality is due to the fact that $(\sen A,\sen B)$ provides a disjoint partitioning for $\sen I$ (see Definition~\ref{def:groups}).
	\begin{ex}
		\label{ex:modes}
		Following up from Example~\ref{ex:inj:pair}, the mode of each child matrix 
		defined in 
		Table~\ref{table:T0:children} can be evaluated according to~\eqref{eq:mode:relation}:
		\begin{align*}
		\md{\e{T}_1} &= \md{\e{T}_2} = \md{\e{T}_3} = \md{\e{T}_{0}} - 1 - 1 = 2\\
		\md{\e{T}_4} &= \md{\e{T}_5} = \md{\e{T}_{0}} - 2 - 1 = 1.
		\end{align*}
	\end{ex}
	\noindent \underline{\textbf{Signature vector of a child matrix:}}\\
	In order to fully characterize a child matrix, it is only left  to define its signature vector. For a parent matrix $\e{P}$ with signature vector $\s{\e{P}}{\cdot}$, the signature vector of $\eSMat{Q}{m}{x,\sen {B}}{\e{P}}$, the child matrix associated with an injection pair $(x,\sen B)$, is given by 
		\begin{align}
		\s{\e{Q}}{i} = 1+ \s{\e{P}}{i} + \ind{\sen B \cup \set{i}}{i}, \qquad \forall i \in \intv{d}.
		\label{eq:inj:sign}
		\end{align}
		
	Such a signature vector is chosen in order to enable the exact repair property. This will be more clear once we discuss the repair process in Section~\ref{sec:noderepair}.
	\begin{ex}
		\label{ex:signature}
		Continuing from Example~\ref{ex:inj:pair},  and the fact that $\sigma_{\e{T}_0}(i)=0$ for $i\in \{1,\dots, 6\}$, 
		the signature vectors for the child matrices $\e{T}_2$ and $\e{T}_5$ of  Table~\ref{table:T0:children} can be obtained from
		\begin{align*}
		\forall i\in \set{1,2,3,4,5,6} : \sigma_{\e{T}_2}(i) = 1+ \s{\e{T}_0}{i} + \ind{\set{6} \cup \set{i}}{i} \rightarrow \sigma_{\e{T}_2} = (2,2,2,2,2,2),
		\end{align*}
		and
		\begin{align*}
		\forall i\in \set{1,2,3,4,5,6} : \sigma_{\e{T}_5}(i) = 1+ \s{\e{T}_0}{i} + \ind{\set{5,6} \cup \set{i}}{i} \rightarrow \sigma_{\e{T}_5} = (2,2,2,2,2,3). 
		\end{align*}
	\end{ex}
	\noindent \underline{\textbf{Modified child matrix after the injection:}}\\
	Let $\eSMat{Q}{m}{x,\sen {B}}{\e{P}}$ be a child matrix of $\e{P}$ associated with the injection pair $(x,\sen B)$, into which some symbols of $\e{P}$ will be injected. We use  $\eSMat{Q}{m}{x,\sen {B}}{\e{P}}$ and  $\fSMat{Q}{m}{x,\sen {B}}{\e{P}}$ to refer to this matrix before and after injection (modification), respectively. 
	This modification is modeled by adding an \emph{injection matrix} $\tOMat{m}{x,\sen {B}}{\e{P}}$ to the original matrix $\eSMat{Q}{m}{x,\sen {B}}{\e{P}}$, i.e., 
	\begin{align}
	\fSMat{Q}{m}{x,\sen {B}}{\e{P}} = \eSMat{Q}{m}{x,\sen {B}}{\e{P}} + \tOMat{m}{x,\sen {B}}{\e{P}}.
	\label{eq:Qdef}
	\end{align}
	Here $\tOMat{m}{x,\sen {B}}{\e{P}}$ is a matrix with size and row/column labeling identical to those of  $\e{Q}$, that consists of a signed version of entries of matrix $\e{P}$ which should be injected into $\e{Q}$. 
	The entry at position $(i, \sen I)$ is given in~\eqref{eq:inj:mat}.\footnote{Note that for a cascade code constructed over Galois field $\mathsf{GF}(2^s)$ with characteristic $2$ (i.e., $-1=+1$), the injection equation in~\eqref{eq:inj:mat} reduces to $\left[ \tOMat{m}{x,\sen {B}}{\e{P}} \right]_{i,\sen I}   = \e{P}_{x, \sen I \cup \set{i} \cup \sen B } \1{i> \max  \sen I , i \notin \sen B  ,\sen I  \cap \sen B = \varnothing}$.}
    \begin{align}
    	\left[ \tOMat{m}{x,\sen {B}}{\e{P}} \right]_{i,\sen I} = \begin{cases}
    	(-1)^{1+\s{\e{P}}{i} + \ind{\sen I \cup\set{i} \cup \sen B}{i}}\e{P}_{x, \sen I \cup \set{i} \cup \sen B } & \mathsf{if\;} i> \max  \sen I , i \notin \sen B  ,\sen I  \cap \sen B = \varnothing,  \\
    	0 & \mathsf{otherwise}.
    	\end{cases}
    	\label{eq:inj:mat}
	\end{align}
	Here, the coordinates of injection satisfy $i\in \intv{d}$ and $\sen I \subseteq \intv{d}$ with $\size{\sen I} =\md{\e{Q}} = \md{\e{P}} - |\sen B|-1$. Note that the entries of $\tOMat{m}{x,\sen {B}}{\e{P}}$ are non-zero only for certain positions, which correspond to the parity entries of $\e{Q}$ into which an injection is performed. 
	The following remark further elaborates on the above injection matrix.
		\begin{rem}
			\label{rem:inj:explain}
			The following facts hold for the injection matrix given in ~\eqref{eq:inj:mat}:
			\begin{enumerate}[label=\arabic*.]
				\item  The conditions $i > \max \sen I$, $i \notin \sen B$, and $\sen I \cap \sen B = \varnothing$ guarantee that sets $\set{i}$, $\sen I$, and $\sen B$  are mutually disjoint, and hence the size of $\sen I \cup \set{i} \cup \sen B$ (the column label of the injected symbol) equals  $\md{\e{P}}$. 
				In other words, we can verify that $\md {\e{P}} = \size{\sen I \cup \set{i} \cup \sen B} = \size{\sen I} + \size{\set{i}} + \size{\sen B} = \md{\e{Q}} + 1 + \size{\sen B}$.
				\item The condition $i> \max \sen I$ will additionally guarantee that the entry at position $(i,\sen I)$ of
				the matrix $\e{Q}$ is a parity symbol of the child matrix (see Remark~\ref{rem:raw:parity}). Therefore, an injection only occurs in a parity symbol. Note that the entries of $\tOMat{m}{x,\sen {B}}{\e{P}}$ are non-zero only for certain positions, which correspond to the parity entries of $\eSMat{Q}{m}{x,\sen {B}}{\e{P}}$ into whom the injection is performed.
				\item The symbol $\e{P}_{x,\sen A \cup \sen B} \in \dgp{\e{P}}$ will be primarily injected into position $(\max \sen A,\sen A \setminus \set{\max \sen{A}})$ of the child matrix $\eSMat{Q}{m}{x,\sen {B}}{\e{P}}$. This can be seen from~\eqref{eq:inj:mat} as
				\begin{align}
				    &\left[ \tOMat{m}{x,\sen {B}}{\e{P}} \right]_{\max {\sen A},\sen A \setminus \set{\max \sen{A}}} = (-1)^{1+\s{\e{P}}{\max \sen A} + \ind{\sen A \cup \sen B}{\max {\sen A}}}\e{P}_{x, \sen A \cup \sen B }. \label{eq:primary:inj}
				\end{align}
			\end{enumerate}
		\end{rem}

	\begin{ex}
		\label{ex:modified:child}
		Let us continue with our running construction from Example~\ref{ex:root}. For an $(n,k=4,d=6;\mu=4)$ code, no injection takes place into $\e{T}_0$, since it is the root node, and hence, we have $\f{T}_0 = \e{T}_0$, i.e., $\f{T}_0$ is  a purely signed determinant message matrix of mode $m_0=4$. 
		
		Consider  the child matrix $\e{T}_2 =
		\eSMatt{T}{2}{m}{6, \{6\}}{\e{T}_0}$, whose mode $m_2=2$ and  signature vector $\sigma_{\e{T}_2}$ were evaluated in Examples~\ref{ex:modes} and~\ref{ex:signature}, respectively. This message matrix has $\binom{d}{m_2}=\binom{6}{2} = 15$ columns, and its columns are labeled by subsets of $\{1,2,3,4,5,6\}$ of size $m_2=2$. The entries of this message matrix before injection are driven from~\eqref{eq:def:S} and given in~\eqref{eq:D2}.
		\begin{figure*}
    		\begin{align}
    		\centering
    		\includegraphics[width=\textwidth,page=2]{Figures/example2.pdf},\qquad
    		\label{eq:D2}
    		\end{align} 
    	\end{figure*}
		Again, the horizontal line separates the top $k=4$ rows in $\up{\e{T}_2}$ from the lower $(d-k)=2$ rows in $\down{\e{T}_2}$. Note that symbols in $\down{\e{T}_2}$ are further partitioned into $\dgpi{\e{T}}{2}$, $\ngpi{\e{T}}{2}$, and $\pgpi{\e{T}}{2}$. The two symbols $v_{5,\{5,6\}}^{\langle 2\rangle}$ and $v_{6,\{5,6\}}^{\langle 2\rangle}$ belong to $\ngp{\e{T}_2}$, and hence they are set to zero (see Definition~\ref{def:groups}). Symbols in $\down{\e{T}_2}$ without a surrounding solid frame belong to $\pgpi{\e{T}}{2}$, and therefore do not need any protection. Finally, the remaining symbols in $\dgpi{\e{T}}{2}$ are designated by a box around them, which need to be injected into child matrices of $\e{T}_2$. This will be further discussed in Example~\ref{ex:injection-from-T2}.
		
		Moreover, some of the symbols in $\e{T}_2$ (in both $\up{\e{T}_2}$ and $\down{\e{T}_2}$) are highlighted by a background color without a solid frame around them. This indicates the symbols in $\e{T}_2$ into which injections from $\e{T}_0$ occurs. However, an injection into such designated entries is primary if the host symbol is located in $\up{\e{T}_2}$ (above the horizontal line), and will be secondary if the host symbol lies in $\down{\e{T}_2}$ (below the horizontal line).
		
		We denote this message matrix after injection by $\f{T}_2$ and the injection matrix by $\tOMat{1}{6,\set {6}}{\e{T}_0}$. Therefore, we have  $\f{T}_2 = \e{T}_2 + \tOMat{1}{6,\set {6}}{\e{T}_0}$. The matrix $\tOMat{1}{6,\set {6}}{\e{T}_0}$ can be found from~\eqref{eq:inj:mat} and its complete form is given in~\eqref{eq:O2}.
		\begin{align}
    		\centering
    		\includegraphics[width=\textwidth,page=3]{Figures/example2.pdf}.\qquad
    		\label{eq:O2}
		\end{align} 
		The child matrix $\e{T}_2$ is hosting symbols $\set{v_{6,\set{1,2,3,6}}^{\nd{0}},v_{6,\set{1,2,4,6}}^{\nd{0}},v_{6,\set{1,3,4,6}}^{\nd{0}}, v_{6,\set{2,3,4,6}}^{\nd{0}}}$ as primary injections since they will be added to the parities located in the top $k=4$ rows of $\e{T}_2$. For instance, the injected symbol at position $(4,\set{1,2})$ of $\e{T}_2$ will be $\left[\tOMat{1}{6,\set {6}}{\e{T}_0} \right]_{4,\set{1,2}}= v_{6,\set{1,2,4,6}}^{\nd{0}}$. Moreover, symbols 
		\begin{align*}
		\left\{
		v_{6,\set{1,2,5,6}}^{\nd{0}},v_{6,\set{1,3,5,6}}^{\nd{0}},v_{6,\set{1,4,5,6}}^{\nd{0}},v_{6,\set{2,3,5,6}}^{\nd{0}},v_{6,\set{2,4,5,6}}^{\nd{0}},v_{6,\set{3,4,5,6}}^{\nd{0}}\right\}
		\end{align*} 
		are secondarily injected into $\e{T}_2$ as they will be added to the entries in the bottom rows, i.e., rows indexed by $5$ and $6$. Each missing symbol requires to be primarily injected, regardless of its secondary injection. Therefore, in the following it is shown that these symbols  will be also primarily injected into $\e{T}_5$.
		
		Recall from Table~\ref{table:T0:children} that $\e{T}_5$ is introduced as the child matrix of $\e{T}_0$ associated with the injection pair  $(6,\set{5,6})$. The mode and the signature vector of this child matrix were also evaluated in Examples~\ref{ex:modes} and~\ref{ex:signature} as $m_5=1$ and $\sigma_{\e{T}_5} = (2,2,2,2,2,3)$.  Therefore, the corresponding message matrix has $\binom{d}{m_5}=\binom{6}{1} = 6$ columns. Note that the sign for the entries in the sixth row of $\e{T}_5$ given in \eqref{eq:O5} is negative, which is due to $(-1)^{\sigma_{\e{T}_5}(6)} = (-1)^3=-1$.
 		\begin{align}
    		\centering
    		\includegraphics[width=.8\textwidth,page=4]{Figures/example2.pdf}\qquad
    		\label{eq:D5}
    	\end{align} 
    	Similarly, the parity symbols of $\e{T}_5$ hosting symbols from $\e{T}_0$ are designated by a background color without a solid surrounding frame. We  denote by $\f{T}_5$  the matrix $\e{T}_5$ after injection, and by $\tOMat{1}{6,\set {5,6}}{\e{T}_0}$ the  injection matrix. Thus, we have $\f{T}_5 = \e{T}_5 + \tOMat{1}{6,\set {5,6}}{\e{T}_0}$. The matrix $\tOMat{1}{6,\set {5,6}}{\e{T}_0}$ is also evaluated from~\eqref{eq:inj:mat} and given in~\eqref{eq:O5}.
		\begin{align}
		\centering
		\includegraphics[width=0.7\textwidth,page=5]{Figures/example2.pdf}\qquad
		\label{eq:O5}
		\end{align}
		Note that every entry of $\e{T}_0$ of the form  $\left[\e{T}_{0}\right]_{6,\sen A \cup \set{5,6}}$ is injected into the symbol $w_{\max \sen A,\sen A}$. Here, $\sen A$ can be either of $\set{1,2}$, $\set{1,3}$, $\set{1,4}$, $\set{2,3}$, $\set{2,4}$, or $\set{3,4}$ of $\e{T}_5$. Also, note that unlike $\e{T}_2$ that hosts both primary and secondary injections, all the injections into $\e{T}_5$ are primary, since there is no injection into its $(d-k)=2$ lower rows.
	\end{ex}
	\subsection{Cascading Structure of the Super Message Matrix} 
	\label{subsec:cascade:struct}
	
	Consider a pair of parent and child matrices $(\e{P}, \e{Q})$, where some of the symbols of matrix $\e{P}$ are injected into parity symbols of $\e{Q}$. The symbols in the lower $(d-k)$ rows of $\e{Q}$ should be also protected. Recall that symbols  in $\down{\e{Q}}$ can be partitioned to $\dgp{Q}$, $\ngp{Q}$, and $\pgp{Q}$, where symbols in $\dgp{Q}$ need to be injected into another signed determinant code. To this end, we need to introduce child matrices for $\e{Q}$, into which these symbols will be injected. Note that a child matrix of $\e{Q}$  will be indeed a  grandchild matrix of $\e{P}$.

	\begin{ex}
	\label{ex:injection-from-T2}
		Recall the child matrices $\e{T}_2$ and $\e{T}_5$ discussed in  Example~\ref{ex:modified:child}.  According to Definition~\ref{def:groups}, the symbols in $\down{\e{T}_2}$ can be partitioned to 
			\begin{align}
			\begin{split}
			\dgpi {\e{T}}{2} &= \left\{
			\begin{array}{l}
			v_{5,\set{1,5}}^{\nd{2}}, 
			v_{5,\set{2,5}}^{\nd{2}}, v_{5,\set{3,5}}^{\nd{2}}, 
			v_{5,\set{4,5}}^{\nd{2}}, v_{6,\set{1,6}}^{\nd{2}}, 
			v_{6,\set{2,6}}^{\nd{2}}, v_{6,\set{3,6}}^{\nd{2}}, 
			v_{6,\set{4,6}}^{\nd{2}}\\
			w_{5,\set{1,5,6}}^{\nd{2}}, w_{5,\set{2,5,6}}^{\nd{2}}, w_{5,\set{3,5,6}}^{\nd{2}}, w_{5,\set{4,5,6}}^{\nd{2}}
			\end{array}
			\right\},\\
			\ngpi {\e{T}}{2} &=\left\{v_{5,\set{5,6}}^{\nd{2}}, v_{6,\set{5,6}}^{\nd{2}}\right\},\\
			\pgpi{\e{T}}{2} &=\left\{ \begin{array}{l} 
			w_{5,\set{1,2,5}}^{\nd{2}}, w_{5,\set{1,3,5}}^{\nd{2}}, 
			w_{5,\set{1,4,5}}^{\nd{2}}, w_{5,\set{2,3,5}}^{\nd{2}}, w_{5,\set{2,4,5}}^{\nd{2}} ,w_{5,\set{3,4,5}}^{\nd{2}}, 
			w_{6,\set{1,2,6}}^{\nd{2}}, w_{6,\set{1,3,6}}^{\nd{2}},\\ w_{6,\set{1,4,6}}^{\nd{2}}, w_{6,\set{1,5,6}}^{\nd{2}}, 
			w_{6,\set{2,3,6}}^{\nd{2}}, w_{6,\set{2,4,6}}^{\nd{2}},
			w_{6,\set{2,5,6}}^{\nd{2}}, w_{6,\set{3,4,6}}^{\nd{2}}, 
			w_{6,\set{3,5,6}}^{\nd{2}}, w_{6,\set{4,5,6}}^{\nd{2}}
			\end{array}\right\}.
			\end{split}
			\label{eq:group-T2}
			\end{align}
		The symbols in $\ngpi {\e{T}}{2}$ are set to zero, and those in $\pgpi{\e{T}}{2}$ are parity symbols that do not need protection. However, symbols in $\dgpi {\e{T}}{2}$ need to be injected into the message matrices of some other signed determinant codes. Therefore, $\e{T}_2$ will have its own child nodes. There are three child matrices, namely $\e{T}_9$, $\e{T}_{10}$, and $\e{T}_{11}$, associated with injection pairs $(5,\{5\})$, $(6,\{6\})$, and $(5,\{6\})$, respectively. The modes of these codes are given by  $\md{\e{T}_9}=\md{\e{T}_{10}}=\md{\e{T}_{11}}=2-1-1=0$. In particular,  the message matrix and injection matrix of $\e{T}_{11}$ corresponding to the injection pair $(5,\set{6})$ are given in~\eqref{eq:D11}. Recall that the message matrix of a determinant code of mode zero is an all-zero column vector as it was explained in the definition of message matrix in~\eqref{eq:def:S}.
		\begin{align}
		\includegraphics[height=60mm,page=6]{Figures/example2.pdf}\qquad\qquad
		\includegraphics[height=60mm,page=7]{Figures/example2.pdf}
		\label{eq:D11}
		\end{align} 
		On the other hand, the symbols in the lower part of $\e{T}_5$ can be partitioned to 
			\begin{align*}
			\dgpi {\e{T}}{5}&= \varnothing,\\
			\ngpi {\e{T}}{5} &=\{v_{5,\set{5}}^{\nd{5}}, -v_{6,\set{6}}^{\nd{5}}, w_{5,\set{5,6}}^{\nd{5}}\},\\
			\pgpi{\e{T}}{5} &=\left\{ \begin{array}{l} w_{5,\set{1,5}}^{\nd{5}}, w_{5,\set{2,5}}^{\nd{5}}, w_{5,\set{3,5}}^{\nd{5}}, w_{5,\set{4,5}}^{\nd{5}}, 
			-w_{6,\set{1,6}}^{\nd{5}}, -w_{6,\set{2,6}}^{\nd{5}}, -w_{6,\set{3,6}}^{\nd{5}}, -w_{6,\set{4,6}}^{\nd{5}},-w_{6,\set{5,6}}^{\nd{5}}\end{array}\right\}.
			\end{align*}
			The symbols in $\ngpi {\e{T}}{5}$  are marked in \eqref{eq:D5}, and set to zero. Recall that only symbols in the first group need to be injected, and since $\dgpi {\e{T}}{5}= \varnothing$, no further child matrix is needed for $\e{T}_5$. 
	\end{ex}
	\begin{rem}
	As mentioned before, we may need multiple child matrices for a parent matrix $\e{P}$ to protect all symbols in  $\dgp{P}$. However, it is worth emphasizing that for each child matrix $\e{Q}$, there is only one parent matrix, whose symbols are injected  into it. This leads to an injection hierarchy which can be represented as a tree structure  (see Fig.~\ref{fig:tree}). The ultimate super-message matrix will be obtained by concatenation of all modified (after injection) message matrices. \hfill $\diamond$
	\end{rem}
	 Note that the process of injection will eventually terminate since the mode of a child code is strictly less than that of its parent code, and modes are limited to non-negative integers. 
	\begin{rem}
		Consider a  $w$-symbol in a child matrix $\e{Q}$ into which a symbol from the parent matrix $\e{P}$ is injected. From Remark~\ref{rem:inj:explain}, this symbol either appears in $\up{\e{Q}}$, or it belongs to $\pgp{\e{Q}}$ if it lies in the lower part of $\e{Q}$. Therefore, the host symbol in $\e{Q}$ does not need to be injected into a child matrix of $\e{Q}$ (i.e., a grandchild matrix of $\e{P}$). The implication is that a $w$-symbol in a child matrix hosting a symbol from the parent matrix is never injected into a grandchild matrix, and therefore injected symbols will not be carried for multiple hops of injections. Consequently, the specification of injections form a parent matrix $\e{P}$ into a child matrix $\e{Q}$ does not depend on whether $\e{P}$ is already modified by an earlier injection or not, that is,
		\[
		\tOMat{m}{x, \sen B}{\e{P}} = \tOMat{m}{x, \sen B}{\f{P}}.
		\] 
		\hfill $\diamond$
	\end{rem}
	\begin{ex}
	Continuing with our running example, here we complete the construction of the super message matrix for a cascade code with parameters $(n, k=4, d=6;\mu=4)$. Recall that we started (in Example~\ref{ex:root}) with a root code segment  which is a determinant code $\e{T}_0$ of mode $m_0 = \mu = 4$. This root message matrix has five child matrices that were introduced in Example~\ref{ex:inj:pair}. The child matrices $\e{T}_1,\e{T}_2,$ and $\e{T}_3$ have their own child matrices. For instance, the child matrices of $\e{T}_2$ are discussed in Example~\ref{ex:injection-from-T2}.

	There will be a total of $15$ code segments, namely $\e{T}_0,\e{T}_1,\cdots,\e{T}_{14}$, required to perform all injections. Each code segment $\e{T}_i$ will be modified by injection to obtain $\f{T}_i$, for $i=0,1,\dots, 14$. Fig.~\ref{fig:tree} depicts the tree structure of the hierarchy of injections.  The tree consists of one node for each code segment. The labeled arrows in the figure indicate the injection pair from the parent to the child matrix.
    \begin{figure*}
		\includegraphics[width=\textwidth]{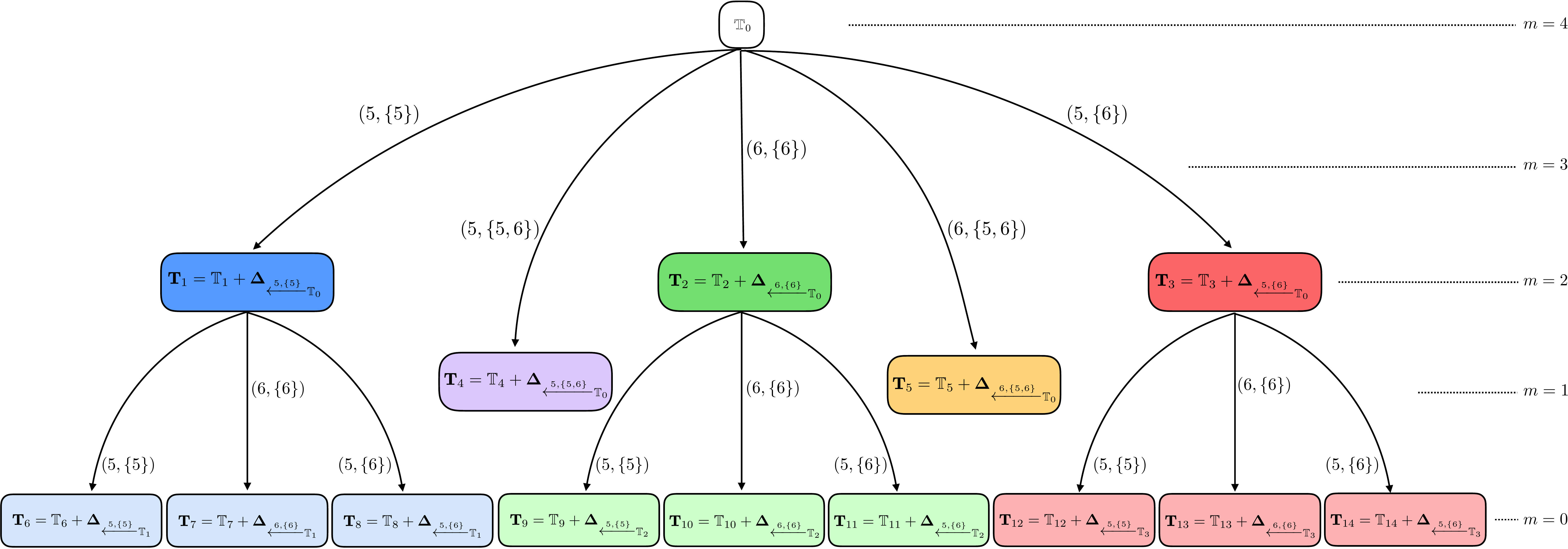}
		\caption{The hierarchical tree for an $(n, k=4, d=6;\mu=4)$ code. Each level on the tree shows code segments with the same mode.  The injection from each parent matrix to its child matrices are shown by arrows, labeled by the corresponding injection pair.} 
		\label{fig:tree}
	\end{figure*}
	The mode of each child matrix is evaluated using ~\eqref{eq:mode:relation},  from the mode of the parent matrix and size of $\sen B$ in the injection pair $(x,\sen B)$. Following this rule, the modes of the code segments are evaluated as
		\begin{align*}
		(m_0,m_1,\dots, m_{14}) = (4,2,2,2,1,1,0,0,0,0,0,0,0,0,0).
		\end{align*}
		where $m_i=\md{\f{T}_i}$.
		
	The signature vector of each child node can be also obtained using the signature vector of the parent matrix and according to~\eqref{eq:inj:sign}. 
	Finally, the super-message matrix of the cascade code, $\MM$, will be formed by concatenation of message matrices $\f{T}_1, \f{T}_2, \dots, \f{T}_{14}$, as shown in Fig.~\ref{fig:cascade-ex}. 
	\begin{figure*}
    		\includegraphics[width=0.9\textwidth,page=2]{Figures/cascade.pdf}
    		\caption{The super message matrix of the cascade code with system parameters $(n,k=4,d=4; \mu=4)$ and code parameters $(\alpha=81, \beta=27,F=324)$, as a concatenation of code segments $\f{T}_0, \f{T}_1, \dots, \f{T}_{14}$. Some of the injections are depicted by arrows. A solid arrow shows a primary injection and a dashed one indicates a secondary injection. }
    		\label{fig:cascade-ex}
    \end{figure*} 
    	
    Note that this matrix has $d=6$ rows, and $\alpha = \sum_{i=0}^{14} \binom{6}{m_i} = 1\times \binom{6}{4} + 3\times \binom{6}{2} + 2\times \binom{6}{1} + 9 \times \binom{6}{0} = 81$ columns. 	
	\end{ex}
	The code construction presented in this section may look sophisticated at first glance. However, it has a simple and recursive nature, which facilitates its implementation. All stages of the super-message matrix construction are efficiently summarized in Algorithm~\ref{alg:supermsg}, in the next page.
	\begin{algorithm*}
	    \setstretch{1.4}
		\SetKwFunction{detcode}{\FuncName{SgnDetCodeMsgMat}}
		\SetKwFunction{injection}{\FuncName{InjectionMat}}
		\KwIn{Parameters $k,d$ and $\mu$.}
		\KwOut{Super-Message Matrix $(\MM)$ of a cascade code operating at mode $\mu$.}    
	    $\mathsf{UnvisitedNodeCollection} \gets \varnothing$\;
		$\e{T}_{0} \gets \textsc {SgnDetCodeMsgMat}(d,\mu,\mathbf{0}_{1\times d})$\; 
		add $\e{T}_0$ to $\mathsf{UnvisitedNodeCollection}$\;
		$\Delta_{0} \gets \mathbf{0}_{d \times \binom{d}{\mu}}$; $\f{T}_0 \gets \e{T}_0 + \Delta_{0}$; 
		$\MM \gets \f{T}_0$\;
		\While {there exists a node $\e{P}$ in  $\mathsf{UnvisitedNodeCollection}$}{
			\ForEach { $\sen B \subseteq \intv{k+1:d} \mathsf{\;with\;} \size{\sen B} <  \md{\e{P}} $}{
			    $m \gets \md{\e{P}}-\size{\sen B}-1$\;
				\ForEach {$x \in \intv{k+1:d} \mathsf{\;with\;} x \leq \max \sen B$ }{
					\lFor{$i \gets 1$ \KwTo $d$}{$\s{\e{Q}}{i} \gets 1+ \s{\e{P}}{i} + \ind{\sen B \cup \set{i}}{i}$}
					$\eSMat{Q}{m}{x,\sen {B}}{\e{P}} \gets \textsc {SgnDetCodeMsgMat}(k,d,m,\sigma_{\e{Q}} )$\;
					$ \tOMat{m}{x,\sen {B}}{\e{P}} \gets \textsc{InjectionMat}(d, m, x, \sen B,  \e{P})$\;
					$\fSMat{Q}{m}{x,\sen {B}}{\e{P}} \gets \eSMat{Q}{m}{x,\sen {B}}{\e{P}} +  \tOMat{m}{x,\sen {B}}{\e{P}}$\;
					$\MM \gets \left[\MM \;|\; \fSMat{Q}{m}{x,\sen {B}}{\e{P}} \right]$\; 
				add $\eSMat{Q}{m}{x,\sen {B}}{\e{P}}$ to $\mathsf{UnvisitedNodeCollection}$\;
				}
			}
			remove $\e{P}$ from $\mathsf{UnvisitedNodeCollection}$; 
		}
 		\Return $\MM$\;
		\hrule
 	    \SetKwProg{myproc}{Procedure}{:}{\KwRet $\e{D};$}
		\myproc(\Comment*[f]{Create one instance of signed det. code using~\eqref{eq:def:S}}){\detcode{$k,d,m,\sigma_{\e{D}}$}}{
		$\e{D} \gets \mathbf{0}_{d \times \binom{d}{m}}$\;
			\For{$x \gets 1$ \KwTo $d$}{
				\ForEach{$\sen I \subseteq \intv {d} \mathsf{\;with\;} \size{\sen I} = m$ }{				
					\lIf{$x \in \sen I$}{
						$\e{D}_{x,\sen I} \gets (-1)^{\sigma_{\e{D}}(x)} v_{x,\sen I}^{\langle \e{D} \rangle}$
					}
					\lElse
					{
						$\e{D}_{x,\sen I}  \gets (-1)^{\sigma_{\e{D}}(x)} w_{x,\set{x} \cup \sen I}^{\langle \e{D} \rangle }$
					}
					\lIf(\Comment*[f]{Null elements belong to $\ngp{D}$}){$x \leq \max \sen I \mathsf{\;and\;} \sen I \cap \intv{k}= \varnothing$ }{	$\e{D}_{x,\sen I} \gets 0$}
				}
			}
 		}
		\hrule
		\SetKwProg{myproc}{Procedure}{:}{\KwRet $\Delta;$}
		\myproc(\Comment*[f]{Create injection matrix using~\eqref{eq:inj:mat}}){\injection{$d, m, x, \sen B, \e{P}$}}{
		$\Delta \gets \mathbf{0}_{d \times \binom{d}{m}}$\;
			\For{$i \gets 1$ \KwTo $d$}{
				\ForEach{ $\sen I \subseteq \intv {d} \mathsf{\;with\;} \size{\sen I} = m$ }{				
					\lIf{$i > \max \sen I \mathsf{\;and\;} i \notin \sen B \mathsf{\;and\;} \sen I \cap \sen B = \varnothing$}{
						$\Delta_{i,\sen I} \gets (-1)^{1+\s{\e{P}}{i}+\ind{\sen I \cup \set{i} \cup \sen B}{i} } \e{P}_{x, \sen I \cup \set{i} \cup \sen B}$
					}
					\lElse
					{
						$\Delta_{i,\sen I} \gets 0$
					}
				}
			}
		}
		\caption{Construction of Cascade Codes Super-Message Matrix}\label{alg:supermsg}
	\end{algorithm*}
	
	\clearpage
	\section{The Exact Repair Property}
	\label{sec:noderepair}
	In the following, we discuss the exact repair property by introducing the repair data  sent in order to repair a failed node $f\in \intv{n}$ from a set of helper nodes $\sen H \subseteq \intv{n}\setminus \set{f}$ with $ \size{\sen H}=d$.
	
	The repair process is performed in a recursive manner from top-to-bottom, i.e., from segments of the codeword of node $f$  with the highest mode to those with the lowest mode. 
	
	The repair data sent from a helper node $h\in \sen H$ to the failed node $f$ is simply formed by the concatenation of the repair data for each code segment. The repair data for each code segment can be obtained by treating each segment as an ordinary signed determinant code. More precisely,  helper node $h$ sends
	\begin{align*}
	\bigcup_{\f{Q} \textrm{ is a code segment in } \MM}  \left\{\enc_{h,:} \cdot \f{Q} \cdot \repMat{f}{\md{\f{Q}}}\right\},
	\end{align*}
	where the union is taken over all message matrices $\f{Q}$ appearing in the super-message matrix $\MM$, the product $\enc_{h,:} \cdot \f{Q}$ is the codeword segment of node $h$ corresponding to code segment $\f{Q}$, and $\repMat{f}{\md{\f{Q}}}$ is the repair-encoder matrix for node $f$ for a code of $\md{\f{Q}}$, as defined in \eqref{eq:rep:enc}. In other words, for each codeword segment, the helper node needs to multiply this codeword segment by the repair encoder matrix of the proper mode, and send the collection of all such multiplications to the failed node. Recall from Proposition~\ref{prop:beta} that the rank of  $ \repMat{f}{\md{\f{Q}}}$ for a code segment of mode $m=\md{\f{Q}}$ is only $\beta^{(m)} = \binom{d-1}{m-1}$. The total repair bandwidth of the code can be obtained by summing up the repair bandwidth of all the code segments, and is evaluated in \eqref{eq:overall:bndwidth} in Section~\ref{sec:parameters}.
	
	Upon receiving all the repair data $\bigcup_{\f{Q}}  \left\{\enc_{h,:} \cdot \f{Q} \cdot \repMat{f}{\md{\f{Q}}} : h\in \sen H\right\}$, the failed node stacks the segments corresponding to each code segment $\f{Q}$, to obtain 
	\[
	\enc[\sen H,:] \cdot \f{Q} \cdot \repMat{f}{\md{\f{Q}}},
	\]
	from which, the repair spaces of code segment $\f{Q}$ can be retrieved as
	\begin{align}
	\rep{f}{\f{Q}} = \enc[\sen H,:]^{-1} \cdot \enc[\sen H,:] \cdot \f{Q} \cdot \repMat{f}{\md{\f{Q}}} =  \f{Q} \cdot \repMat{f}{\md{\f{Q}}}.
	\label{eq:inj:repspc}
	\end{align}
	Recall that Condition~\ref{cond:P} guarantees that matrix  $\enc[\sen H,:]$ is  invertible. Also note that the above repair space is defined for the code segment \emph{after injection}, while the  one defined in~\eqref{eq:eRepSpace} is for the raw determinant code (before injection). Therefore, the relationship between the two repair spaces is given by
	\begin{align*}
	\rep{f}{\f{Q}} = \f{Q} \cdot \repMat{f}{\md{\f{Q}}} = (\e{Q} + \mathbf{\Delta}) \cdot \repMat{f}{\md{\f{Q}}} =\erep{f}{\e{Q}} + \mathbf{\Delta} \cdot \repMat{f}{\md{\f{Q}}},
	\end{align*}
    Where $\mathbf{\Delta}$ is the injection matrix to code matrix $\e{Q}$ from its parent matrix. Thus, the repair process cannot be  performed as indicated in~\eqref{eq:ndd:repair}. However, having the repair spaces for all the code segments, the content of node $f$ can be reconstructed according to the following proposition. Again, it is worth emphasizing that the codeword segment corresponding to $\f{Q}$ is represented by a row vector whose entries have the same labeling as columns of $\f{Q}$ that are subsets of $\intv{d}$ of size $m=\md{\f{Q}}$.
	
	\begin{prop}
		\label{prop:nkd:repair}	
		In an $(n,k,d)$ cascade code introduced in sections~\ref{sec:code} and~\ref{sec:supermessage}, for every failed node $f\in \intv{n}$ and a set of helpers $\sen H \subseteq \intv {n} \setminus\set{f}$ with $\size{\sen H}=d$,  the content of node $f$  can be exactly regenerated from the received repair spaces.
		More precisely, the symbols at position $\sen I$ in a codeword segment corresponding to a code segment $\f{Q}$ will be repaired  through\footnote{Note that for a code in Galois field $\mathsf{GF}(2^s)$ with characteristic $2$, the repair equation reduces to$\left[\enc_{f,:}\cdot \fSMat{Q}{m}{x,\sen {B}}{\e{P}}\right]_{\sen I} =\sum_{i \in \sen I} [\rep{f}{\f{Q}}]_{i,\sen I\setminus \set{i}}+ 		[\rep{f}{\f{P}}]_{x, \sen I \cup \sen B} \1{ \sen I \cap \sen B =\varnothing} $}
		\begin{align}
		\begin{split}
		\left[\enc_{f,:}\cdot \fSMat{Q}{m}{x,\sen {B}}{\e{P}}\right]_{\sen I} &=\sum_{i \in \sen I} (-1)^{\s{\e{Q}}{i} +\ind{\sen I}{i}} [\rep{f}{\f{Q}}]_{i,\sen I\setminus \set{i}}-\begin{cases}
		[\rep{f}{\f{P}}]_{x, \sen I \cup \sen B} & \mathsf{if\;}   \sen I \cap \sen B =\varnothing, \\
		0& \mathsf{otherwise},
		\end{cases}
		\end{split}
		\label{eq:repair}
		\end{align}
		where $\f{P}$ is the parent matrix of $\f{Q}$, and $(x, \sen B)$ is the corresponding injection pair.
	\end{prop}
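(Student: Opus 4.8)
The plan is to verify the explicit repair equation~\eqref{eq:repair} for every code segment $\f{Q}$ appearing in $\MM$; since Proposition~\ref{prop:beta} and~\eqref{eq:inj:repspc} already guarantee that each repair space $\rep{f}{\f{Q}}$ is recovered from the $d$ received vectors, establishing~\eqref{eq:repair} for all $\f{Q}$ and all column labels $\sen I$ reconstructs the entire content $\enc_{f,:}\cdot\MM$ of node $f$. For the root segment $\f{T}_0$ there is no injection, so $\rep{f}{\f{T}_0}=\erep{f}{\e{T}_0}$ and~\eqref{eq:repair} (with no correction term) is exactly Proposition~\ref{prop:ndd:repair}. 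So fix a non-root segment $\f{Q}=\fSMat{Q}{m}{x,\sen B}{\e{P}}$ with parent $\f{P}$ and injection pair $(x,\sen B)$, write $\mathbf{\Delta}=\tOMat{m}{x,\sen B}{\e{P}}$ so that $\f{Q}=\e{Q}+\mathbf{\Delta}$, and use linearity to split $[\enc_{f,:}\cdot\f{Q}]_{\sen I}=[\enc_{f,:}\cdot\e{Q}]_{\sen I}+[\enc_{f,:}\cdot\mathbf{\Delta}]_{\sen I}$. I would then apply Proposition~\ref{prop:ndd:repair} to the \emph{raw} determinant code $\e{Q}$, and substitute $\erep{f}{\e{Q}}=\rep{f}{\f{Q}}-\mathbf{\Delta}\cdot\repMat{f}{m}$, which is immediate from~\eqref{eq:eRepSpace} and~\eqref{eq:inj:repspc} since $\rep{f}{\f{Q}}=\f{Q}\cdot\repMat{f}{m}=(\e{Q}+\mathbf{\Delta})\cdot\repMat{f}{m}$. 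The terms carrying $\rep{f}{\f{Q}}$ then match exactly the first sum in~\eqref{eq:repair}, and the whole proposition collapses to the single identity
\[
[\enc_{f,:}\cdot\mathbf{\Delta}]_{\sen I}-\sum_{i\in\sen I}(-1)^{\s{\e{Q}}{i}+\ind{\sen I}{i}}[\mathbf{\Delta}\cdot\repMat{f}{m}]_{i,\sen I\setminus\set{i}}= -[\rep{f}{\f{P}}]_{x,\sen I\cup\sen B}\cdot\1{\sen I\cap\sen B=\varnothing}.
\]

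When $\sen I\cap\sen B\neq\varnothing$ I would argue both terms on the left vanish: by~\eqref{eq:inj:mat} an entry $[\mathbf{\Delta}]_{i,\sen J}$ is non-zero only if $\sen J\cap\sen B=\varnothing$ and $i\notin\sen B$, and a two-line case distinction — whether a common element of $\sen I$ and $\sen B$ equals the summation index $i$ or not — kills every surviving term in $[\enc_{f,:}\cdot\mathbf{\Delta}]_{\sen I}$ and in each $[\mathbf{\Delta}\cdot\repMat{f}{m}]_{i,\sen I\setminus\set{i}}$. When $\sen I\cap\sen B=\varnothing$, I would expand all three objects as sums over a fresh index $z$ with associated column $\sen K:=\sen I\cup\sen B\cup\set{z}$ (of size $\md{\e{P}}$): the first term contributes the $z$ with $z>\max\sen I$ and $z\notin\sen B$; in $[\mathbf{\Delta}\cdot\repMat{f}{m}]_{i,\sen I\setminus\set{i}}$ the requirement $i>\max\bigl((\sen I\setminus\set{i})\cup\set{z}\bigr)$ coming from~\eqref{eq:inj:mat} forces $i=\max\sen I$ and leaves precisely the $z$ with $z<\max\sen I$ and $z\notin\sen I\cup\sen B$; so jointly $z$ runs over $\intv{d}\setminus(\sen I\cup\sen B)$, which is exactly the index set obtained when $[\rep{f}{\f{P}}]_{x,\sen I\cup\sen B}$ is expanded through $\rep{f}{\f{P}}=\f{P}\cdot\repMat{f}{\md{\f{P}}}$ and~\eqref{eq:rep:enc}. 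This reduces matters to matching the scalar coefficient of $\psi_{f,z}\,\e{P}_{x,\sen K}$ on both sides, after noting that $\e{P}_{x,\sen K}=\f{P}_{x,\sen K}$: indeed $x\le\max\sen B\le\max\sen K$ (using that $\sen B\neq\varnothing$ by Remark~\ref{rem:injpair}), so this entry of $\e{P}$ is not a parity symbol by Remark~\ref{rem:raw:parity} and is therefore untouched by the injection into $\f{P}$.

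The heart of the proof — and the step I expect to be the main obstacle — is this coefficient matching, which is a pure sign computation. Into the exponents produced by~\eqref{eq:inj:mat}, \eqref{eq:rep:enc} and Proposition~\ref{prop:ndd:repair} one substitutes the prescribed child signature $\s{\e{Q}}{i}=1+\s{\e{P}}{i}+\ind{\sen B\cup\set{i}}{i}$ from~\eqref{eq:inj:sign}, and then simplifies the several $\ind{\cdot}{\cdot}$ quantities using the pairwise disjointness of $\sen I,\sen B,\set{z}$ together with whichever of $z>\max\sen I$ or $z<\max\sen I$ holds in the sub-case at hand (e.g.\ $\ind{\sen I}{\max\sen I}=\size{\sen I}$, $\ind{\sen B\cup\set{i}}{i}=\ind{\sen B}{i}+1$ when $i\notin\sen B$, etc.). The bookkeeping has to show that in \emph{both} sub-cases the total exponent reduces to $1+\s{\e{P}}{z}+\ind{\sen K}{z}$, so that the left-hand side equals $-\sum_{z\notin\sen I\cup\sen B}(-1)^{\s{\e{P}}{z}+\ind{\sen K}{z}}\psi_{f,z}\,\f{P}_{x,\sen K}=-[\rep{f}{\f{P}}]_{x,\sen I\cup\sen B}$, which finishes the proof. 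It is precisely here that the otherwise opaque choice of child signature~\eqref{eq:inj:sign} is used: it is calibrated so that the extra sign introduced by each injection is absorbed exactly by the determinant-code repair formula of the child.
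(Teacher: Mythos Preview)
Your proposal is correct and follows essentially the same route as the paper: reduce to the identity $\term_1-\term_2=\term_3$ with $\term_1=[\enc_{f,:}\cdot\mathbf{\Delta}]_{\sen I}$, $\term_2=\sum_{i\in\sen I}(-1)^{\s{\e{Q}}{i}+\ind{\sen I}{i}}[\mathbf{\Delta}\cdot\repMat{f}{m}]_{i,\sen I\setminus\set{i}}$, $\term_3=-[\rep{f}{\f{P}}]_{x,\sen I\cup\sen B}\1{\sen I\cap\sen B=\varnothing}$, then split into the two cases $\sen I\cap\sen B\neq\varnothing$ (both sides zero) and $\sen I\cap\sen B=\varnothing$, where the constraint in~\eqref{eq:inj:mat} forces $i=\max\sen I$ in $\term_2$, the ranges of the running index combine to $[d]\setminus(\sen I\cup\sen B)$, one checks $\f{P}_{x,\sen K}=\e{P}_{x,\sen K}$ via $x\le\max\sen B$, and the sign bookkeeping using~\eqref{eq:inj:sign} closes the argument. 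The paper carries out exactly these steps with the same case split and the same sign simplifications.
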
	
	Note that for positions $\sen I$ that have an overlap with $\sen B$ (the second injection parameter), the repair identity above reduces to \eqref{eq:ndd:repair}, and the repair will be performed as in an ordinary signed determinant code. This is due to the fact that no symbol is injected to any position in column $\sen I$ of message matrix $\f{Q}$, as indicated in~\eqref{eq:inj:mat}, and hence the symbol $\left[\enc_{f,:}\cdot \fSMat{Q}{m}{x,\sen {B}}{\e{P}}\right]_{\sen I}$ is not affected by the injection process. However, for a position $\sen I$ which is disjoint from $\sen B$, the coded symbol at position $\sen I$ of the codeword stored at node $f$ has an interference caused by the injected symbols. However, this interference can be canceled using the repair space of the parent matrix, as indicated in~\eqref{eq:repair}.
	
	In the following, first the repair property is explained for the $(n,k=4,d=6)$ code from our running example, and then the formal proof of Proposition~\ref{prop:nkd:repair} is provided.
	\begin{ex}
		\label{ex:repair}
		In the $(n,k=4,d=6)$ example of the previous section, assume a node $f$ is failed, and its codeword needs to be repaired using the repair data received from a set of helper nodes, say $\sen H$ with $|\sen H|=d=6$. The cascade code, as it is indicated in~Fig.~\ref{fig:tree}, has $15$ segments, which will be repaired in a sequential manner, from $\f{T}_0$ to $\f{T}_{14}$. Each helper node $h\in \sen H$ computes and sends $\bigcup_{i=0}^{14} \{ \enc_{h,:} \cdot \f{T}_i \cdot \repMat{f}{m_i}\}$ to node $f$, where $m_i$ is the mode of code $\f{T}_i$, e.g., $m_0=4$.
		
		The process starts from the first codeword segment (corresponding to the code segment with mode $\mu=m_0=4$, located at the root of the hierarchical tree) of the failed node $f$.
		Note that no symbol is injected into  $\f{T}_0$ and hence we have $\f{T}_0 = \e{T}_0$, and  the repair of the first $\alpha^{(4)} = 15$ symbols of node $f$ is identical to that of  a signed determinant code, as described in Proposition~\ref{prop:ndd:repair}. For the sake of demonstration, we focus on the repair of the symbol at position $\sen I = \seq{1,2,3,6}$ of the codeword segment corresponding to $\f{T}_0$ of the failed node, i.e.,
		\begin{align*}
		\begin{split}
		\Big[\enc_{f,:} \cdot \f{T}_{0}\Big]_{\seq{1,2,3,6}} &= \psi_{f,1}v_{1,\set{1,2,3,6}}^{\nd{0}}+\psi_{f,2}v_{2,\set{1,2,3,6}}^{\nd{0}}+\psi_{f,3}v_{3,\set{1,2,3,6}}^{\nd{0}}\\&\quad+\psi_{f,4}w_{4,\set{1,2,3,4,6}}^{\nd{0}}+\psi_{f,5} w_{5,\set{1,2,3,5,6}}^{\nd{0}}+\psi_{f,6}v_{6,\set{1,2,3,6}}^{\nd{0}}
		\end{split}
		\end{align*}
		Following from~\eqref{eq:ndd:repair}, and since  $\rep{f}{\f{T}_0}=\erep{f}{\e{T}_0}$, for the repair of the  normal signed determinant code at $\sen I=\set{1,2,3,6}$ we have
		\begin{align}
		\begin{split}
		\sum_{i\in \sen I} (-1)^{\s{\e{T}_0}{i}+\ind{\sen I}{i}} \left[\erep{f}{\e{T}_0}\right]_{i,\sen I \setminus \seq{i}}
		&=-\left[\erep{f}{\e{T}_0}\right]_{1,\set{2,3,6}}+\left[\erep{f}{\e{T}_0}\right]_{2,\set{1,3,6}}\\
		&\quad \phantom{=}-\left[\erep{f}{\e{T}_0}\right]_{3,\set{1,2,6}}+\left[\erep{f}{\e{T}_0}\right]_{6,\set{1,2,3}}.
		\end{split}
		\label{eq:ex:rep:t0:3}
		\end{align}
		
		Note that matrix $\erep{f}{\e{T}_0}$ has $\binom{6}{3}=20$ columns, and cannot be fully written here. However. the submatrix of $\erep{f}{\e{T}_0}$ corresponding to the columns with labels in  $\sen Y = \set{\set{1,2,3},\set{1,2,6},\set{1,3,6},\set{2,3,6}}$ is given in~\eqref{eq:RT0}.
		\begin{align}
    		\centering
    		\includegraphics[width=\textwidth]{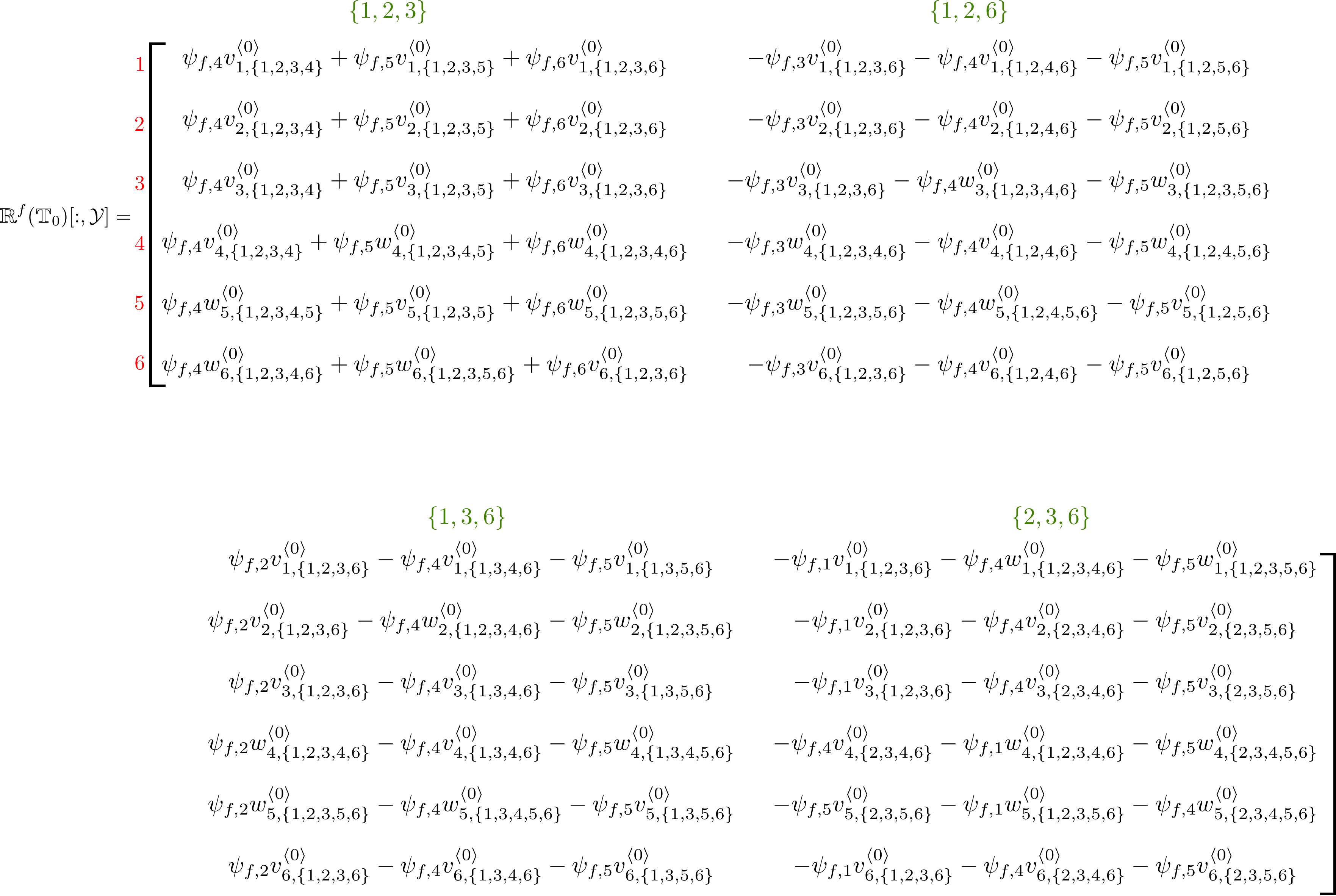}\qquad
    		\label{eq:RT0}
		\end{align}

		Continuing from \eqref{eq:ex:rep:t0:3}, we have
	    \begin{align*}
        	\sum_{i\in \sen I}(-1)^{\s{\f{T}_0}{i}+\ind{\sen I}{i}} \left[\rep{f}{\f{T}_0}\right]_{i,\sen I \setminus \seq{i}}
        	&= \psi _{f,1}v_{1,\{1,2,3,6\}}^{\left<0\right>}+\psi _{f,4} w_{1,\{1,2,3,4,6\}}^{\left<0\right>}+\psi _{f,5} w_{1,\{1,2,3,5,6\}}^{\left<0\right>}\\
        	&\phantom{=}+\psi _{f,2} v_{2,\{1,2,3,6\}}^{\left<0\right>}-\psi _{f,4} w_{2,\{1,2,3,4,6\}}^{\left<0\right>}-\psi _{f,5} w_{2,\{1,2,3,5,6\}}^{\left<0\right>} \\
        	&\phantom{=}+\psi _{f,3}v_{3,\{1,2,3,6\}}^{\left<0\right>}+\psi _{f,4} w_{3,\{1,2,3,4,6\}}^{\left<0\right>}+\psi _{f,5} w_{3,\{1,2,3,5,6\}}^{\left<0\right>}\\
        	&\phantom{=}+\psi _{f,4} w_{6,\{1,2,3,4,6\}}^{\left<0\right>}+\psi _{f,5} w_{6,\{1,2,3,5,6\}}^{\left<0\right>}+ \psi _{f,6} v_{6,\{1,2,3,6\}}^{\left<0\right>}\\
        	&= \psi_{f,1}v_{1,\{1,2,3,6\}}^{\left<0\right>}+\psi _{f,2} v_{2,\{1,2,3,6\}}^{\left<0\right>}+\psi _{f,3}v_{3,\{1,2,3,6\}}^{\left<0\right>}+ \psi _{f,6} v_{6,\{1,2,3,6\}}^{\left<0\right>}\\
        	&\phantom{=}+\psi _{f,4} \left(w_{1,\{1,2,3,4,6\}}^{\left<0\right>}-w_{2,\{1,2,3,4,6\}}^{\left<0\right>}+w_{3,\{1,2,3,4,6\}}^{\left<0\right>}+w_{6,\{1,2,3,4,6\}}^{\left<0\right>}\right)\\
        	&\phantom{=}+\psi _{f,5} \left(w_{1,\{1,2,3,5,6\}}^{\left<0\right>}-w_{2,\{1,2,3,5,6\}}^{\left<0\right>}+w_{3,\{1,2,3,5,6\}}^{\left<0\right>}+w_{6,\{1,2,3,5,6\}}^{\left<0\right>}\right)\\
        	&= \psi_{f,1}v_{1,\{1,2,3,6\}}^{\left<0\right>}+\psi _{f,2} v_{2,\{1,2,3,6\}}^{\left<0\right>}+\psi _{f,3}v_{3,\{1,2,3,6\}}^{\left<0\right>}\\
        	&\phantom{=}+ \psi _{f,6} v_{6,\{1,2,3,6\}}^{\left<0\right>}+\psi _{f,4} w_{4,\{1,2,3,4,6\}}^{\left<0\right>}+\psi _{f,5} w_{5,\{1,2,3,5,6\}}^{\left<0\right>}.
	    \end{align*}

		As it can be seen, the $v$-symbols are repaired directly, while the $w$-symbols are repaired using the parity equation in~\eqref{eq:parityeq}. Other symbols of the code segment corresponding to $\f{T}_0$ can be also repaired in a similar manner. Once the segment corresponding to $\f{T}_0$ is repaired, we can proceed with the codeword segments corresponding to the child matrices in the injection hierarchy.
		
		Let us focus on the repair of the segment corresponding to the child matrix $\f{T}_2$. In particular, we focus on the symbol at position $\sen I=\seq{2,3}$. The missing symbol at this position (which should be regenerated by the repair process) is given by
		\begin{align}
		\Big[\enc_{f,:} \cdot \f{T}_{2}\Big]_{\seq{2,3}} &= \Big[\enc_{f,:} \cdot (\e{T}_{2}+\mathbf{\Delta}_2)\Big]_{\seq{2,3}}  \nonumber
		\\&= \psi _{f,1} w_{1,\{1,2,3\}}^{\nd {2}}+\psi _{f,2} v_{2,\{2,3\}}^{\nd {2}}+\psi _{f,3} v_{3,\{2,3\}}^{\nd {2}}
		+\psi _{f,4}\left( w_{4,\{2,3,4\}}^{\nd {2}}+ v_{6,\{2,3,4,6\}}^{\nd {0}} \right)\nonumber\\
		&\phantom{=}+ \psi _{f,5} \left(w_{5,\{2,3,5\}}^{\nd {2}} + v_{6,\{2,3,5,6\}}^{\nd {0}}\right)
		+\psi _{f,6} w_{6,\{2,3,6\}}^{\nd {2}} .
		\label{eq:ex-ic-1}
		\end{align}
		Note that symbols $v_{6,\{2,3,4,6\}}^{\nd {0}}$ and  $v_{6,\{2,3,5,6\}}^{\nd {0}}$ are injected from $\e{T}_0$ into $\e{T}_2$. 
		Upon receiving the repair symbols, node $f$  recovers $\rep{f}{\f{T}_2}= \f{T}_2  \cdot \repMat{f}{2}$, where  $\repMat{f}{2}$ is defined in~\eqref{eq:rep:enc}. 
		Evaluating~\eqref{eq:repair} for $\sen I=\{2,3\}$, we first obtain
		\begin{align}
		\sum_{x \in \{2,3\}} &(-1)^{\s{\e{T}_2}{x} + \ind{\{2,3\}}{x}}  [\mathbf{R}^{f} (\f{T}_2)]_{x,\{2,3\}\setminus \set{x}} =-\left[\mathbf{R}^{ f} (\f{T}_2 )  \right]_{2,\seq{3}} + \left[\mathbf{R}^{ f}(\f{T}_2)  \right]_{3,\seq{2}}\nonumber\\
		&= -\sum_{\substack{\sen L \subseteq \intv{6} \nonumber\\ \size{\sen L}=2}} \left[\f{T}_2\right]_{2,\sen L} \repMat{f}{2}_{\sen L,\set{3}}+\sum_{\substack{\sen L \subseteq \intv{6} \nonumber\\ \size{\sen L}=2}} \left[\f{T}_2\right]_{3,\sen L} \repMat{f}{2}_{\sen L,\set{2}} \nonumber\\
		&= -\sum_{y\in \{1,2,4,5,6\}} \left[\f{T}_2\right]_{2,\{y,3\}} \repMat{f}{2}_{\{y,3\},\set{3}} \nonumber \\
		&\phantom{+} +\sum_{y\in \{1,3,4,5,6\}} \left[\f{T}_2\right]_{3,\{y,2\}} \repMat{f}{2}_{\{y,2\},\set{2}} \nonumber\\
		&= -\left(-\psi_{f,1} w_{2,\{1,2,3\}}^{\nd {2}}-\psi_{f,2} v_{2,\{2,3\}}^{\nd {2}}+\psi_{f,4} w_{2,\{2,3,4\}}^{\nd {2}}+\psi _{f,5} w_{2,\{2,3,5\}}^{\nd {2}}  +\psi_{f,6} w_{2,\{2,3,6\}}^{\nd {2}}\right)\nonumber\\
		&\phantom{=} 
		+\left(-\psi_{f,1} \left(w_{3,\{1,2,3\}}^{\nd {2}} + v_{6, \{1,2,3,6\}}^{\nd{0}}\right)+\psi_{f,3} v_{3,\{2,3\}}^{\nd {2}} +\psi_{f,4} w_{3,\{2,3,4\}}^{\nd {2}} +\psi _{f,5} w_{3,\{2,3,5\}}^{\nd {2}}  +\psi _{f,6} w_{3,\{2,3,6\}}^{\nd {2}}  \right)\nonumber\\
		&= 	\psi _{f,1} w_{1,\{1,2,3\}}^{\nd {2}}
		+\psi_{f,2} v_{2,\{2,3\}}^{\nd {2}}
		+\psi_{f,3} v_{3,\{2,3\}}^{\nd {2}}+\psi_{f,4} w_{4,\{2,3,4\}}^{\nd {2}}\nonumber\\
		&\phantom{=}
		+\psi _{f,5} w_{5,\{2,3,5\}}^{\nd {2}} 
		+\psi _{f,6} w_{6,\{2,3,6\}}^{\nd {2}} -\psi_{f,1}  v_{6, \{1,2,3,6\}}^{\nd{0}}\label{eq:ex-ic-2}
		\end{align}
		Comparing \eqref{eq:ex-ic-1} and \eqref{eq:ex-ic-2}, we  obtain
		\begin{align}
		\Big[\enc_{f,:} \cdot \f{T}_{2}\Big]_{\seq{2,3}} &- \sum_{x \in \{2,3\}} (-1)^{+\s{\f{T}_2}{x} + \ind{\{2,3\}}{x}}  [\mathbf{R}^{f} (\f{T}_2)]_{x,\{2,3\}\setminus \set{x}} \nonumber\\
		&= \psi_{f,1}  v_{6, \{1,2,3,6\}}^{\nd{0}}
		+\psi _{f,4} v_{6,\{2,3,4,6\}}^{\nd {0}} 
		+ \psi _{f,5}  v_{6,\{2,3,5,6\}}^{\nd {0}}.
		\label{eq:repair-cascade-diff}
		\end{align}
		That means the simple repair strategy of (signed) determinant codes given in \eqref{eq:ndd:repair} cannot be directly applied to repair a this symbol.  However, note that all the three terms in the difference given in~\eqref{eq:repair-cascade-diff} are symbols of the code segment $\e{T}_0$. Also, recall that $\e{T}_0$ is the parent matrix of $\e{T}_2$. The plan is to compute the RHS of \eqref{eq:repair-cascade-diff}  from the repair space of the parent matrix $\f{T}_0$, and subtract it from \eqref{eq:ex-ic-2}, to recover the missing symbol in~\eqref{eq:ex-ic-1}. Recall that the injection pair for $\e{T}_2$ is $(x,\sen B)=(6,\{6\})$.  Interestingly, the entry at position $(x, \sen I \cup \sen B) = (6,\{2,3,6\})$ of  $\rep{f}{\TM_0}$ is given by (see \eqref{eq:RT0})
		\begin{align}
		-[\rep{f}{\TM_0}]_{6,\{2,3,6\}}
		&=-\sum_{\substack{\sen L \subseteq \intv{6} \\ \size{\sen L}=4}} \left[\f{T}_0\right]_{6,\sen L} \repMat{f}{4}_{\sen L,\set{2,3,6}}\nonumber\\
		&=-\sum_{y\in \{1,4,5\}} \left[\f{T}_0\right]_{6,\{y,2,3,6\}} \repMat{f}{4}_{\{y,2,3,6\},\set{2,3,6}}\nonumber\\
		&=\psi _{f,1}v_{6,\{1,2,3,6\}}^{\nd 4}+\psi _{f,4} v_{6,\{2,3,4,6\}}^{\nd 4}+\psi _{f,5} v_{6,\{2,3,5,6\}}^{\nd 4},\label{eq:ex-ic-3}
		\end{align}
		which is exactly identical to the difference  in~\eqref{eq:repair-cascade-diff}. This term $-[\rep{f}{\TM_0}]_{6,\{2,3,6\}}$ is actually the second term in~\eqref{eq:repair} for the case when the intersection of $\sen I$ and $\sen B$ is non-empty. Therefore, we have 
		\begin{align*}
		   \Big[\enc_{f,:} \cdot \f{T}_{2}\Big]_{\seq{2,3}} =  \sum_{x \in \{2,3\}} (-1)^{+\s{\f{T}_2}{x} + \ind{\{2,3\}}{x}}  [\mathbf{R}^{f} (\f{T}_2)]_{x,\{2,3\}\setminus \set{x}} - [\rep{f}{\TM_0}]_{6,\{2,3,6\}}.
		\end{align*}
		 This provides an exact recovery for the failed symbol in \eqref{eq:ex-ic-1}. A similar procedure can be used to repair all the other symbols and codeword segments of the failed node $f$. 
		
	\end{ex}	
	
	\begin{proof}[Proof of Proposition~\ref{prop:nkd:repair}]
		Let $f$ be a failed node, and its content needs to be repaired using the repair data received from the helper nodes in $\sen H$ with $|\sen H|=d$. The content of node $f$ will be reconstructed segment by segment. 
		
		Consider a code segment $\f{Q}=\fSMat{Q}{m}{x,\sen {B}}{\e{P}}$, that is a determinant code with $\md{\f{Q}} = m$ and injection pair $(x,\sen B)$, into which symbols from its parent code segment $\e{P}$, with $\md{\e{P}}=j>m$, are injected. Recall that the corresponding code segment matrix can be written as 
		\begin{align*}
		\fSMat{Q}{m}{x,\sen {B}}{\e{P}} =  \eSMat{Q}{m}{x,\sen {B}}{\e{P}} +  \tOMat{m}{x,\sen {B}}{\e{P}},
		\end{align*}
		where the first term is a signed determinant code and the second term indicates the contribution of injection. For a given position $\sen I$ within this codeword segment,  i.e, $\sen I\subseteq \intv{d}$ and $|\sen I| = m$, the corresponding symbol of the failed node is given by
		\begin{align}
		\left[\enc_{f,:} \cdot \fSMat{Q}{m}{x,\sen {B}}{\e{P}} \right]_{\sen I}= \left[ \enc_{f,:} \cdot \eSMat{Q}{m}{x,\sen {B}}{\e{P}} \right]_{\sen I}+ \left[ \enc_{f,:}\cdot  \tOMat{m}{x,\sen {B}}{\e{P} } 	\right]_{\sen I}.
		\label{eq:prf:repair:1}
		\end{align}
		As it is clear from~\eqref{eq:repair} that the repair of segment $\f{Q}$ of the codeword of node $f$ will be performed similarly to that of the determinant codes using the repair space $\rep{f}{\f{Q}} = \f{Q} \cdot \repMat{f}{m}$, together with additional \emph{correction} from the repair space of the repair of the parent matrix, that is  $\rep{f}{\f{P}}= \f{P} \cdot \repMat{f}{j}$. Note this latter correction will take care of the deviation of the code segment from the standard determinant code, which is caused by the injection of the symbols from the parent matrix $\e{P}$ into $\e{Q}$. 
		
		We start with the first term in the right-hand-side of \eqref{eq:repair}, which is
		\begin{align}
		\sum_{i \in \sen I} &(-1)^{\s{\e{Q}}{i} +\ind{\sen I}{i}} \left[\rep{f}{\f{Q}}\right]_{i,\sen I\setminus \set{i}}=
		\sum_{i \in \sen I} (-1)^{\s{\e{Q}}{i} +\ind{\sen I}{i}} \left[\fSMat{Q}{m}{x,\sen {B}}{\e{P}} \cdot \repMat{f}{m}\right]_{i,\sen I \setminus \set{i}} \nonumber\\
		&=\sum_{i \in \sen I} (-1)^{\s{\e{Q}}{i} +\ind{\sen I}{i}} \left[\eSMat{Q}{m}{x,\sen {B}}{\e{P}} \cdot \repMat{f}{m}\right]_{i,\sen I \setminus \set{i}}+\sum_{i \in \sen I} (-1)^{\s{\e{Q}}{i} +\ind{\sen I}{i}} \left[\tOMat{m}{x,\sen {B}}{\e{P}} \cdot \repMat{f}{m}\right]_{i,\sen I \setminus \set{i}}\label{eq:rep:first:1}\\
		&=\sum_{i \in \sen I} (-1)^{\s{\e{Q}}{i} +\ind{\sen I}{i}} \left[\erep{f}{\e{Q}}\right]_{i,\sen I \setminus \set{i}}+\sum_{i \in \sen I} (-1)^{\s{\e{Q}}{i} +\ind{\sen I}{i}} \left[\tOMat{m}{x,\sen {B}}{\e{P}} \cdot \repMat{f}{m}\right]_{i,\sen I \setminus \set{i}}\nonumber\\
		&=\left[\enc_{f,:} \cdot \eSMat{Q}{m}{x,\sen {B}}{\e{P}}\right]_{\sen I} +\sum_{i \in \sen I}  (-1)^{\s{\e{Q}}{i} +\ind{\sen I}{i}} \left[\tOMat{m}{x,\sen {B}}{\e{P}}\cdot \repMat{f}{m}\right]_{i,\sen I\setminus \set{i}},\label{eq:rep:first:3}
		\end{align}
		where~\eqref{eq:rep:first:1} holds due to the linearity of the operations, and 	
		in~\eqref{eq:rep:first:3} we used Proposition~\ref{prop:ndd:repair} for the repair process of $\eSMat{Q}{m}{x,\sen {B}}{\e{P}}$, which is a  signed determinant code with parameters $(d,m)$ and the  signature vector $\sigma_{\e{Q}}$.  Therefore, from~\eqref{eq:prf:repair:1} and~\eqref{eq:rep:first:3}, we can conclude that proving the claimed identity in~\eqref{eq:repair} is equivalent to show
		\begin{align}
		\term_1 - \term_2 =  \term_3,
		\label{eq:repair:leftover}
		\end{align}
		where
		\begin{align}
		\term_1 & = \left[\enc_{f,:} \cdot \tOMat{m}{x,\sen {B}}{\e{P}}\right]_{\sen I}\nonumber\\
		\term_2 &= \sum_{i \in \sen I} (-1)^{\s{\e{Q}}{i} +\ind{\sen I}{i}} \left[\tOMat{m}{x,\sen {B}}{\e{P}}\cdot \repMat{f}{m}\right]_{i,\sen I \setminus \set{i}} \nonumber\\
		\term_3 &=
		- \left[\rep{f}{\PM}\right]_{x, \sen I \cup \sen B}
		\mathbbm{1}\left\{ \sen I \cap \sen B = \varnothing\right\}
		\label{eq:rep:second:1}
		\end{align}
		Note that all the data symbols appearing in~\eqref{eq:repair:leftover} belong the parent matrix $\e{P}$. We can distinguish the following two cases in order to prove~\eqref{eq:repair:leftover}.

		\noindent \textbf{Case I: $\sen I \cap \sen B = \varnothing$}: 
		Starting from $\term_1$ we have	
		\begin{align}
		\term_1 &= \left[\enc_{f,:}\cdot \tOMat{m}{x,\sen {B}}{\e{P}}\right]_{\sen I}\nonumber\\ &=\sum_{y\in \intv {d}} \psi_{f,y}\left[\tOMat{m}{x,\sen {B}}{\e{P}}\right]_{y,\sen I}\nonumber\\
		&= \sum_{y \in [\max \sen I+1:d] \setminus \sen B }\psi_{f,y} \left[\tOMat{m}		{x,\sen {B}}{\e{P} }\right]_{y,\sen I} \label{eq:rep:third:1} \\
		&= \sum_{y \in [\max \sen I+1:d] \setminus (\sen I \cup \sen B) }\psi_{f,y} 		\left[\tOMat{m}{x,\sen {B}}{\e{P} }\right]_{y,\sen I} \label{eq:rep:third:1-2} \\
		&= \hspace{-10mm} \sum_{y \in [\max \sen I+1:d] \setminus (\sen I \cup \sen B) } \hspace{-10mm}
		(-1)^{1+\s{\e{P}}{y} + \ind{\sen I \cup \set{y} \cup \sen B}{y}  } 
		\psi_{f,y} \e{P}_{x, \sen I \cup \set{y} \cup \sen B},
		\label{eq:rep:third:2} 
		\end{align}
		where \eqref{eq:rep:third:1} follows the definition of the injection symbols in \eqref{eq:inj:mat} which implies a non-zero injection occurs at position $(y,\sen I)$ only if $y>\max \sen I$ and $y\notin \sen B$; \eqref{eq:rep:third:1-2} holds since $\intv{\max \sen I +1 :d} \cap \sen I = \varnothing$; and in  \eqref{eq:rep:third:2} we plugged in the entries of $\tOMat{m}{x,\sen {B}}{\e{P} }$  from \eqref{eq:inj:mat}.
	    Next, $\term_2$ can be expanded as
  	 	\begin{align}
        	\term_2 &= \sum_{i \in \sen I} (-1)^{\s{\e{Q}}{i} +\ind{\sen I}{i}} \left[\tOMat{m}{x,\sen {B}}{\e{P}}\repMat{f}{m} \right]_{i,\sen I \setminus \set{i}} \nonumber\\
        	&=\sum_{i \in \sen I} (-1)^{\s{\e{Q}}{i} +\ind{\sen I}{i}} \sum_{\substack{\sen L\subseteq \intv{d} \\ \size{\sen L}=m}}\left[\tOMat{m}{x,\sen {B}}{\e{P}}\right]_{i, \sen L} \left[\repMat{f}{m} \right]_{\sen L,\sen I \setminus \set{i}}\nonumber\\
        	&=\sum_{i \in \sen I} (-1)^{\s{\e{Q}}{i} +\ind{\sen I}{i}} \sum_{\substack{y\in \intv{d} \\ y\notin \sen I\setminus\{i\}}}  \left[\tOMat{m}{x,\sen {B}}{\e{P}}\right]_{i, (\sen I\setminus\{i\}) \cup \{y\}} \left[\repMat{f}{m} \right]_{(\sen I\setminus\{i\}) \cup \{y\},\sen I \setminus \set{i}} \label{eq:rep:fourth:3}\\
        	&= \sum_{i =\max \sen I} (-1)^{\s{\e{Q}}{i} +\ind{\sen I}{i}} \sum_{\substack{y<i \\ y\notin \sen I \cup \sen B}}  \left[\tOMat{m}{x,\sen {B}}{\e{P}}\right]_{i, (\sen I\setminus\{i\}) \cup \{y\}} \left[\repMat{f}{m} \right]_{(\sen I\setminus\{i\}) \cup \{y\},\sen I \setminus \set{i}} \label{eq:rep:fourth:4} \\
        	&=(-1)^{\s{\e{Q}}{\max \sen I} + \ind{\sen I}{\max \sen I}} \hspace{-8mm}\sum_{y\in \intv{\max{\sen I}}\setminus (\sen I \cup \sen B)}   \left[\tOMat{m}{x,\sen {B}}{\e{P}}\right]_{\max \sen I ,\sen I\cup \set{y} \setminus \set{\max \sen I}} \left[\repMat{f}{m}\right]_{ \sen I\cup \set{y} \setminus \set{\max \sen I} ,\sen I \setminus \set{\max \sen I}} \label{eq:rep:fourth:4-2} \\
        	&=(-1)^{\s{\e{Q}}{\max \sen I} + \ind{\sen I}{\max \sen I}} \!\!\!\!\sum_{y\in \intv{\max{\sen I}}\setminus (\sen I \cup \sen B)} \Bigg\{ \left[(-1)^{1+\s{\e{P}}{\max \sen I} + \ind{\sen I \cup \{y\} \cup \sen B}{\max \sen I}} \e{P}_{x, \sen I \cup \set{y} \cup \sen B}\right] \nonumber\\
        	&\hspace{70mm} 
        	\cdot 	 \left[(-1)^{\s{\e{Q}}{y} + \ind{(\sen I\setminus\{\max \sen I\}) \cup \set{y}}{y}}\psi_{f,y}\right]\Bigg\}. \label{eq:rep:fourth:5} 
		\end{align}
		Note that in~\eqref{eq:rep:fourth:3} we have used the definition of matrix $\repMat{f}{m}$ given in~\eqref{eq:rep:enc}, that implies the entry in position $(\sen L, \sen I\setminus\{i\})$ is non-zero only if $\sen L= (\sen I\setminus\{i\}) \cup \{y\}$ for some $y\notin \sen I\setminus\{i\}$. Moreover, \eqref{eq:rep:fourth:4} follows from the definition of injected entries in \eqref{eq:inj:mat}, which implies the entry of $\tOMat{m}{x,\sen {B}}{\e{P}}$ at position $(i, (\sen I\setminus\{i\}) \cup \{y\})$ is non-zero only if all the following conditions hold:
		\begin{align*}
		\left\{ \hspace{-2mm}
		\begin{array}{ll}
		i > \max  \Big[(\sen I \setminus \set{i} )\cup \set{y} \Big]  &\hspace{-2mm}\Rightarrow  i>y \text{ and } i > \max  (\sen I \setminus \set{i} ), \\
		\left((\sen I \setminus \set{i} )\cup \set{y} \right) \cap \sen B= \varnothing &\hspace{-2mm} \Rightarrow y\notin \sen B.
		\end{array}
		\right.
		\end{align*} 
		These together with the fact that the outer summation is taken over $i\in \sen I$ imply $i=\max \sen I$. Moreover, the inner summation over  $y \in \intv{d},y\notin \sen I \setminus\{i\}$ reduces to a summation over $y$'s satisfying $y<i=\max \sen I$ and $y\notin \sen I \cup \sen B$, or simply
		$y\in [\max \sen I]\setminus (\sen I \cup \sen B)$ as indicated in~\eqref{eq:rep:fourth:4}. Finally, in~\eqref{eq:rep:fourth:5} the matrix entries are replaced from their definitions from \eqref{eq:rep:enc} and \eqref{eq:inj:mat}. 
		
		Next, we simplify the overall sign in \eqref{eq:rep:fourth:5}.  First, for every $y<\max \sen I$ we have
        \begin{align}
		&\s{\e{Q}}{\max \sen I} \hspace{-0.8mm}+\hspace{-0.8mm} \ind{\sen I}{\max \sen I} \hspace{-0.8mm}+\hspace{-0.8mm}  \s{\e{P}}{\max \sen I} \hspace{-0.8mm}+\hspace{-0.8mm} \ind{\sen I \cup \{y\} \cup \sen B}{\max \sen I} \nonumber\\
		&=  \left[1\hspace{-0.8mm}+\hspace{-0.8mm} \s{\e{P}}{\max \sen I} \hspace{-0.8mm}+\hspace{-0.8mm} \ind{\sen B \cup \set{\max \sen I}}{\max \sen I}\right] \hspace{-0.8mm}+\hspace{-0.8mm} \ind{\sen I}{\max \sen I} \hspace{-0.8mm}+\hspace{-0.8mm}  \s{\e{P}}{\max \sen I} \hspace{-0.8mm}+\hspace{-0.8mm} \ind{\sen I \cup \{y\} \cup \sen B}{\max \sen I} \label{eq:rep:sign:first:A}\\
		&\equiv 1\hspace{-0.8mm}+\hspace{-0.8mm} \ind{\sen B \cup \set{\max \sen I}}{\max \sen I} \hspace{-0.8mm}+\hspace{-0.8mm} \ind{\sen I}{\max \sen I}  \hspace{-0.8mm}+\hspace{-0.8mm} \ind{\sen I \cup \{y\} \cup \sen B}{\max \sen I}\qquad \textrm{(mod $2$)} \nonumber\\
		&= 1\hspace{-0.8mm}+\hspace{-0.8mm} \ind{\sen B}{\max \sen I} \hspace{-0.8mm}+\hspace{-0.8mm} \ind{\set{\max \sen I}}{\max \sen I} \hspace{-0.8mm}+\hspace{-0.8mm} \ind{\sen I}{\max \sen I}  \hspace{-0.8mm}+\hspace{-0.8mm} \ind{\sen I}{\max \sen I}\hspace{-0.8mm}+\hspace{-0.8mm} \ind{\{y\}}{\max \sen I}\hspace{-0.8mm}+\hspace{-0.8mm} \ind{\sen B}{\max \sen I} \label{eq:rep:sign:first:B} \\
		&\equiv 1 \qquad \textrm{(mod $2$)}, \label{eq:rep:sign:first:C}
		\end{align}
		where~\eqref{eq:rep:sign:first:A} is due to the definition of the child matrix's signature in \eqref{eq:inj:sign}; equality in \eqref{eq:rep:sign:first:B} holds since $\sen I,\sen B$, and $\set{y}$ are disjoint sets; in  \eqref{eq:rep:sign:first:C} we used the fact that $y< \max \sen I$. Similarly, we can write
        \begin{align}
		\s{\e{Q}}{y}+  \ind{(\sen I \setminus \set{\max \sen I}) \cup \set{y}}{y}  &= 
		\Big[ 1+ 
		\s{\e{P}}{y} + \ind{\set{y} \cup \sen B}{y}\Big]  + \ind{(\sen I \setminus \set{\max \sen I}) \cup \set{y}}{y}\label{eq:rep:sign:second:A}\\
		&  = 
		1+ 
		\s{\e{P}}{y} + \ind{\set{y} \cup \sen B}{y} + \ind{\sen I \cup \set{y}}{y}\label{eq:rep:sign:second:B}\\
		&  = 
		1+ 
		\s{\e{P}}{y} + \ind{\set{y}}{y} + \ind{\sen I \cup \set{y} \cup \sen B}{y} \label{eq:rep:sign:second:C}\\
		&\equiv \s{\e{P}}{y} + \ind{\sen I \cup \set{y} \cup \sen B}{y} \qquad \textrm{(mod $2$)},\label{eq:rep:sign:second:D}
		\end{align}
		where in \eqref{eq:rep:sign:second:A} we used the definition of the child matrix's signature vector in \eqref{eq:inj:sign}; equality in \eqref{eq:rep:sign:second:B} follows the fact that $y<\max \sen I$; and \eqref{eq:rep:sign:second:C} holds since $\sen I$, $\set{y}$, and $\sen B $ are disjoint sets. 
		
		Plugging \eqref{eq:rep:sign:first:C} and \eqref{eq:rep:sign:second:D} into \eqref{eq:rep:fourth:5} we get 
		\begin{align}
		\term_2 
		&=
		- \sum_{\mathclap{y\in \intv{\max{\sen I}}\setminus (\sen I \cup \sen B)}}
		(-1)^{1+ \s{\PM}{y} + \ind{ \sen I \cup \set{y} \cup \sen B}{y} } \psi_{f,y} \e{P}_{x, \sen I \cup \set{y} \cup \sen B}. \label{eq:rep:fifth:6}
		\end{align} 	
		
		Lastly, since $\md{\f{P}}=j$, we have
		\begin{align}
		\term_3 = - \left[\rep{f}{\f{P}}\right]_{x, \sen I \cup \sen B} 
		& = -\left[\f{P}\cdot \repMat{f}{j}\right]_{x, \sen I \cup \sen B}\nonumber\\
		&=\hspace{-1pt}  -\sum_{\substack{\sen L \subseteq \intv {d} \\ \size{\sen{L}}=j}} \f{P}_{x,\sen L}  \cdot [\repMat{f}{j}]_{\sen L,\sen I \cup \sen B} \nonumber\\
		&= - \hspace{-10pt}\sum_{y \in \intv{d} \setminus (\sen I \cup \sen B)} \f{P}_{x, \sen I  \cup \set{y} \cup \sen B}  \cdot \left[\repMat{f}{j}\right]_{\sen I \cup \set{y} \cup \sen B ,\sen I \cup \sen B} \label{eq:prf:repair:2}\\
		&= \hspace{-10pt} \sum_{y \in \intv{d} \setminus (\sen I \cup \sen B)} \hspace{-5pt} (-1)^{1+\s{\e{P}}{y} +\ind{\sen I \cup \set{y} \cup \sen B}{y}} \psi_{f,y} \f{P}_{x, \sen I \cup \set{y} \cup \sen B} \label{eq:prf:repair:3}\\
		&= \hspace{-10pt} \sum_{y \in \intv{d} \setminus (\sen I \cup \sen B)}\hspace{-2mm} \hspace{-5pt} (-1)^{1+\s{\e{P}}{y} +\ind{\sen I \cup \set{y} \cup \sen B}{y}} \psi_{f,y} \e{P}_{x, \sen I \cup \set{y} \cup \sen B},
		\label{eq:fP=eP}
		\end{align}
		where in~\eqref{eq:prf:repair:2} we used the definition of matrix $\repMat{f}{j}$ in \eqref{eq:rep:enc} that implies the entry in position $(\sen L, \sen I\cup \sen B)$ is non-zero only if $\sen L= \sen I \cup \set{y} \cup \sen B$ for some $y\notin \sen I\cup \sen B$, in~\eqref{eq:prf:repair:3} we plugged in the entry at position $({\sen I \cup \set{y} \cup \sen B ,\sen I \cup \sen B})$ of  $\repMat{f}{j}$ using~\eqref{eq:rep:enc}. Moreover, \eqref{eq:fP=eP} holds since no injection occurs at position $(x, \sen I \cup \set{y} \cup \sen B)$ of matrix $\e{P}$. To see this, note that $(x, \sen B)$ is a valid injection pair from parent $\f{P}$ to the child $\f{Q}$ This means that by Condition~\ref{cond:IP-iv} of Remark~\ref{rem:injpair}, $(x,\sen B)$ should satisfy $x\leq \max \sen B$. On the other hand, if $\f{P}$ is hosting any injected symbol at position $(x, \sen I \cup \set{y} \cup \sen B)$ from its own parent, based on~\eqref{eq:inj:mat}, the symbol $\e{P}_{x, \sen I \cup \set{y} \cup \sen B}$ must be a parity symbol and the relation $x > \max \Big[\sen I \cup \set{y} \cup \sen B \Big]$ should hold. This yields the fact that $x > \max \Big[\sen I \cup \set{y} \cup \sen B \Big] > \max \sen B$ which is in contradiction with $x \leq \max \sen B$. Therefore, $\e{P}_{x, \sen I \cup \set{y} \cup \sen B}$ is not hosting any injected symbol.
		Putting  \eqref{eq:rep:third:2}, \eqref{eq:rep:fifth:6}, and \eqref{eq:fP=eP} together, we get
		\begin{align*}
		\term_1 - \term_2 &= 
		\sum_{y \in [\max \sen I+1:d] \setminus (\sen I \cup \sen B) }\hspace{-5mm}
		(-1)^{1+\s{\e{P}}{y} + \ind{\sen I \cup \set{y} \cup \sen B}{y}  }
		\psi_{f,y} \e{P}_{x, \sen I \cup \set{y} \cup \sen B}\nonumber\\
		&\phantom{=}
		\ \ \ +\!\!\!
		\sum_{y\in \intv{\max{\sen I}}\setminus (\sen I \cup \sen B)} \hspace{-5mm}
		(-1)^{1+\s{\e{P}}{y} + \ind{\sen I \cup \set{y} \cup \sen B}{y}  } 
		\psi_{f,y} \e{P}_{x, \sen I \cup \set{y} \cup \sen B} \nonumber\\
		&= \sum_{y \in [d] \setminus (\sen I \cup \sen B) } 
		(-1)^{1+\s{\e{P}}{y} + \ind{\sen I \cup \set{y} \cup \sen B}{y}  } 
		\psi_{f,y} \e{P}_{x, \sen I \cup \set{y} \cup \sen B} = \term_3,
		\end{align*}
		which is the desired identity in \eqref{eq:repair:leftover}. 
		
		\noindent \textbf{Case II: $\sen I \cap \sen B \neq \varnothing$}: 
		Similar to case I, we can expand $\term_1$ to get the summation in~\eqref{eq:rep:third:1}. Then each term in \eqref{eq:rep:third:1} consists of an entry from $\tOMat{m}{x,\sen {B}}{\e{P}}$ at position $(y, \sen I)$, which is zero for  $\sen I \cap \sen B \neq \varnothing$. Therefore we have  $\term_1=0$. 
		
		Similarly, $\term_2$ can be expanded to the summation given in~\eqref{eq:rep:fourth:3}. However, based on ~\eqref{eq:inj:mat}, the entry of $\tOMat{m}{x,\sen {B}}{\e{P}}$ at position $(i, (\sen I\setminus\{i\}) \cup \{y\})$ is non-zero only if $\left((\sen I\setminus\{i\}) \cup \{y\}\right) \cap \sen B= \varnothing$ and $i\notin \sen B$. This implies
			\begin{align*}
			0 = \left|\left((\sen I \setminus \{i\}) \cup \set{y}\right) \cap \sen B\right| &\geq \left|(\sen I \setminus \{i\})  \cap \sen B\right|= \left|\sen I \cap (\sen B \setminus \{i\})\right|  = \left| \sen I \cap \sen B\right| \geq 1. 
			\end{align*}
		This contradiction implies that all the terms in \eqref{eq:rep:fourth:3} are zero, and thus $\term_2=0$. Finally, $\term_3$ is zero whenever $\sen I \cap \sen B \neq \varnothing$ as it is defined in~\eqref{eq:rep:second:1}. Therefore, the 
		identity \eqref{eq:repair:leftover} clearly holds. 
		This completes the proof of Proposition~\ref{prop:nkd:repair}. 
	\end{proof}

	\section{The Data Recovery Property}
	\label{sec:datarec}
	In this section, the data recovery property of the proposed code is discussed. This property guarantees the availability of the storage system in spite of up to $n-k$ failures, meaning that the original data file can be recovered from any subset of $k$ nodes among $n$ nodes. The following proposition is a formal statement for this property.
	\begin{prop}
		Consider an arbitrary subset $\sen K \subseteq \intv{n}$ with $\size{\sen K} = k$. In a distributed storage system with an $(n,k,d;\mu)$ cascade code and parameters defined in~\eqref{eq:params}, all of the $F(k,d;\mu)$ information symbols can be recovered from the coded data stored in the nodes indexed by $i \in \sen K$.
		\label{prop:nkd:recovery}
	\end{prop}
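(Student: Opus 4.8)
The plan is to first put the observations into a normal form and then unwind the injection hierarchy.

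\emph{Reduction.} Since $\size{\sen K}=k$, Condition~\ref{cond:G} makes $\mathbf{\Gamma}[\sen K,:]$ an invertible $k\times k$ matrix. Writing the retrieved content as $\enc[\sen K,:]\cdot\MM=\mathbf{\Gamma}[\sen K,:]\cdot\up{\MM}+\mathbf{\Upsilon}[\sen K,:]\cdot\down{\MM}$ and left-multiplying by $\mathbf{\Gamma}[\sen K,:]^{-1}$, the data collector obtains $\up{\MM}+\boldsymbol{\Theta}\cdot\down{\MM}$, where $\boldsymbol{\Theta}:=\mathbf{\Gamma}[\sen K,:]^{-1}\mathbf{\Upsilon}[\sen K,:]$ is a fixed $k\times(d-k)$ matrix depending only on $\sen K$ and the encoder. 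Equivalently, for every code segment $\f{Q}$ of $\MM$ the data collector has access to the matrix $\up{\f{Q}}+\boldsymbol{\Theta}\cdot\down{\f{Q}}$, and it suffices to recover all $F(k,d;\mu)$ symbols from these matrices for an \emph{arbitrary} $\boldsymbol{\Theta}$. The value $\boldsymbol{\Theta}=\mathbf 0$ corresponds exactly to the semi-systematic / top-$k$ situation sketched in Section~\ref{sec:cascade}; a general $\boldsymbol{\Theta}$ contributes an interference term that must be cancelled on the fly.

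\emph{Recursive recovery.} I would then recover the code segments in order of \emph{increasing} mode, and within a segment $\f{Q}$ of mode $m$ process its columns in order of \emph{decreasing} largest label. Fix such a column $\sen I$. By Definition~\ref{def:groups}, every entry of $\down{\e{Q}}$ in column $\sen I$ is either a symbol of $\ngp{Q}$ (zero), or a parity symbol of $\pgp{Q}$, which by the parity equation~\eqref{eq:parityeq} is a signed combination of the other members of its $w$-group, each sitting in a column with a strictly larger maximum, hence already processed, or a symbol of $\dgp{Q}$, which is primarily injected into a child segment of strictly smaller mode, hence already processed. After injection, such a bottom-row entry may additionally carry a secondary-injection term $\pm\e{P}_{x,\sen I\cup\set{r}\cup\sen B}$ inherited from the parent $\f{P}$ (which has larger mode and has \emph{not} yet been processed); however, the injection-pair constraint $x\le\max\sen B$ from Remark~\ref{rem:injpair} forces $\e{P}_{x,\sen I\cup\set{r}\cup\sen B}$ to be a data symbol of $\e{P}$ that is either nulled or belongs to $\dgp{P}$, and in the latter case it is primarily injected into a \emph{sibling} of $\f{Q}$ whose mode is at most $m-1$ (its injection set restricted to $\intv{k+1:d}$ strictly contains $\sen B\cup\set{r}$), hence already recovered. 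Thus the entire vector $\down{\f{Q}}_{:,\sen I}$ is known, and subtracting $\boldsymbol{\Theta}\cdot\down{\f{Q}}_{:,\sen I}$ from the available column yields $\up{\f{Q}}_{:,\sen I}$. From $\up{\f{Q}}_{:,\sen I}$ the $v$-symbols and non-parity $w$-symbols of $\f{Q}$ in column $\sen I$ are read off directly, while at a parity position in the top $k$ rows the value is a parity symbol of $\f{Q}$ (again determined by its $w$-group, which lies in columns of larger maximum) plus a single primary injection from $\f{P}$, so subtracting that parity, with the sign prescribed by the signature relation~\eqref{eq:inj:sign}, extracts the injected symbol. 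The base cases are the segments with $\dgp{Q}=\varnothing$, which include every mode-$0$ and mode-$1$ segment; a mode-$0$ segment even has $\down{\f{Q}}=\mathbf 0$, so its available matrix equals $\up{\f{Q}}$ and directly exposes the parent symbols hosted in it. Iterating from the base cases up to the root $\f{T}_0$ recovers all $F(k,d;\mu)$ symbols.

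\emph{Main obstacle.} The heart of the proof is turning the ordering above into a genuine well-founded induction: one must exhibit a total order on (segment, column, row) triples along which \emph{every} dependency points strictly backward — the interference needed to uncover $\up{\f{Q}}_{:,\sen I}$, the $w$-group members needed to resolve each parity symbol, the child needed to supply each $\cG_1$-symbol, and, most delicately, the fact that a secondary injection inherited from an as-yet-unprocessed parent is nonetheless already available (it reduces to a nulled symbol or to a parent $\cG_1$-symbol that was primarily injected into a strictly lower-mode sibling). Running in parallel is the $(\pm1)$ bookkeeping: the cancellations above carry signs coming from~\eqref{eq:parityeq}, from the signature vectors~\eqref{eq:inj:sign}, and from the injection matrices~\eqref{eq:inj:mat}, and these must be shown to match globally. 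Once the order and the signs are fixed, each individual step is a short linear identity, entirely analogous to the computations in the proof of Proposition~\ref{prop:nkd:repair}.
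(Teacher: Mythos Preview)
Your proposal is correct and follows essentially the same route as the paper: normalize via $\mathbf{\Gamma}[\sen K,:]^{-1}$ to obtain $\up{\f{Q}}+\boldsymbol{\Theta}\cdot\down{\f{Q}}$ for each segment, then recover segments in increasing mode and columns in an order compatible with the parity dependencies, handling the three groups $\cG_1,\cG_2,\cG_3$ exactly as the paper does and treating secondary injections via their primary copy in a lower-mode sibling. Two minor wording points: the sibling's $\sen B'$ need not \emph{strictly} contain $\sen B\cup\{r\}$ (equality can occur when $\sen I\subseteq[k]$), but since $r\notin\sen B$ one still gets $|\sen B'|\geq|\sen B|+1$ and hence sibling mode $\leq m-1$; and ``decreasing largest label'' is a coarser ordering than the paper's reverse-lexicographic, but it suffices here because every column $\sen J$ needed for a parity at $(\cdot,\sen I)$ satisfies $\max\sen J>\max\sen I$.
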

	
	We provide an algorithmic proof for the above proposition. The general description of the data recovery algorithm is discussed in Subsection~\ref{subsec:data:rec:desc}. Then, the full data recovery process is presented in Algorithm~\ref{alg:data:rec}. Next, in Subsection~\ref{subsec:data:rec:proof} the formal proof of Proposition~\ref{prop:nkd:recovery} is presented which also explains how each part of the data recovery algorithm functions. Finally, in Subsection~\ref{subsec:data:rec:exmp} the details of data recovery are explained for the running example of this paper.
	
	\subsection{The Data Recovery Algorithm}
	\label{subsec:data:rec:desc}
	Consider an arbitrary subset of nodes $\sen K\subseteq  \intv {n}$ with $|\sen K| = k$. Let $\encdc$ denote the $k \times d$ sub-matrix formed by collecting the corresponding $k$ rows of $\enc$. The collection of coded symbols  stored in these $k$ nodes can be written as $\encdc \cdot \MM$. The goal is to recover the original super message matrix $\MM$ from the observed data $\encdc \cdot \MM$.  Unlike the $(n,k=d,d)$ signed determinant codes, here the encoder matrix $\pdc$ is not square, and hence is not invertible. So, one cannot simply multiply the stacked data by the inverse of the encoder matrix to retrieve $\MM$ from $\pdc \cdot \MM $. Nevertheless, using the properties of the encoder matrix in Definition~\ref{def:encoder}, the matrix $\pdc$ can be decomposed into  
	\begin{align*}
	    \pdc = \left[ \gdc \Big| \ddc \right],
	\end{align*} 
	where $\gdc$ is a $k\times k$ invertible matrix. 
	Recall that the super message matrix $\MM$ is formed by the concatenation of several (after injection) message matrices. Therefore, the matrix $\encdc \cdot \MM$ consists of several codeword segments, each of the form $\encdc \cdot \fSMat{\seg}{m}{y,\sen {Y}}{\e{P}}$, where $\fSeg=\fSMat{\seg}{m}{y,\sen {Y}}{\e{P}}$ is a segment of $\MM$ operating at mode $m=\md{\fSeg}$, $\e{P}$ is the message matrix of the parent code of $\e{S}$, and $(y,\sen Y)$ is the injection pair used to generate $\f{S}$. The matrix $\fSeg$ is a $d\times \alpha_m = d\times \binom{d}{m}$ matrix and can be partitioned into $\up{\fSeg}$ and $\down{\fSeg}$, corresponding to the top $k$ and bottom $(d-k)$ rows of $\fSeg$, respectively. Therefore, after multiplying $\encdc \cdot \MM$ by $\gdc^{-1}$, for codeword segment corresponding to $\fSeg$ we get
    \begin{align}
        \gdc^{-1} \cdot \left( \encdc \cdot\fSeg \right) &= \gdc^{-1} \cdot \left[\gdc \Big|\ddc \right]\cdot\fSeg \nonumber\\ &=
	    \left[\mathbf{I}_{k \times k}\Big|\gdc^{-1} \cdot  \ddc \right]\cdot\fSeg \nonumber  \\
	    &=  \left[\mathbf{I}_{k \times k}\Big|\gdc^{-1} \cdot  \ddc \right]\cdot \left[\begin{array}{c}\up{\fSeg} \nonumber \\ 
	    \hline \down{\fSeg}\end{array}\right] \nonumber \\
	    &=  \up{\fSeg}+\gdc^{-1} \cdot  \ddc \cdot \down{\fSeg}  \label{eq:proc:rec:B}
	\end{align}
	In this section, we explain how we first recover the bottom matrix $\down{\fSeg}$ separately. Then we compute $\gdc^{-1} \cdot  \ddc \cdot \down{\fSeg}$ and use it to get a copy of $\up{\fSeg}$ from \eqref{eq:proc:rec:B}. Recall that the goal of the symbol injection introduced in this paper is to provide data recovery. Therefore, some symbols of the bottom matrix $\down{\fSeg}$ may be retrieved from child matrices of $\f{S}$. This will impose an order for the data recovery, where we start by recovering the data symbols of the segments at the lowest level of the hierarchical tree (with the smallest mode) and proceed to segments with higher modes, until we reach to the code segment at the root of the hierarchical tree. Hence, data recovery is considered as a \emph{bottom-to-top} process. 
	\begin{ex}
    	In the $(n,k=4,d=6;\mu=4)$ example of Section~\ref{sec:supermessage}, consider the data recovery from an arbitrary subset of $k=4$ nodes say $\sen K=\set{1,3,6,7}$. In this example, the data collector observes an encoded matrix $\encdc  \hspace{-1pt}\cdot  \hspace{-1pt} \MM  \hspace{-1pt}= \hspace{-1pt} \begin{bmatrix} \encdc \cdot \f{T}_0 \hspace{-1pt}&\hspace{-1pt} \encdc \cdot \f{T}_1 \hspace{-1pt}&\hspace{-1pt} \cdots  \hspace{-1pt}&\hspace{-1pt}  \encdc \hspace{-1pt}\cdot\hspace{-1pt} \f{T}_{14} \end{bmatrix}$. The goal of data recovery is to extract a complete copy of \begin{align*}
    	    \MM=\begin{bmatrix} \f{T}_1 & \f{T}_2 & \cdots  &  \f{T}_{14} \end{bmatrix}.
    	\end{align*} We recover segments starting from $\f{T}_{14}$ and finish the decoding at $\f{T}_{0}$. Now, for instance in the data recovery of coded segment $\encdc \cdot \f{T}_2$ and for column $\sen I = \set{2,3}$ the observed data by the data collector is given by (see \eqref{eq:D2} and \eqref{eq:O2})
        \vspace{15pt}\begin{align*}
             \encdc \cdot \left[\f{T}_2\right]_{:,\set{2,3}}= 
             \vcenter{\vspace{-15pt}\includegraphics[width=0.5\linewidth]{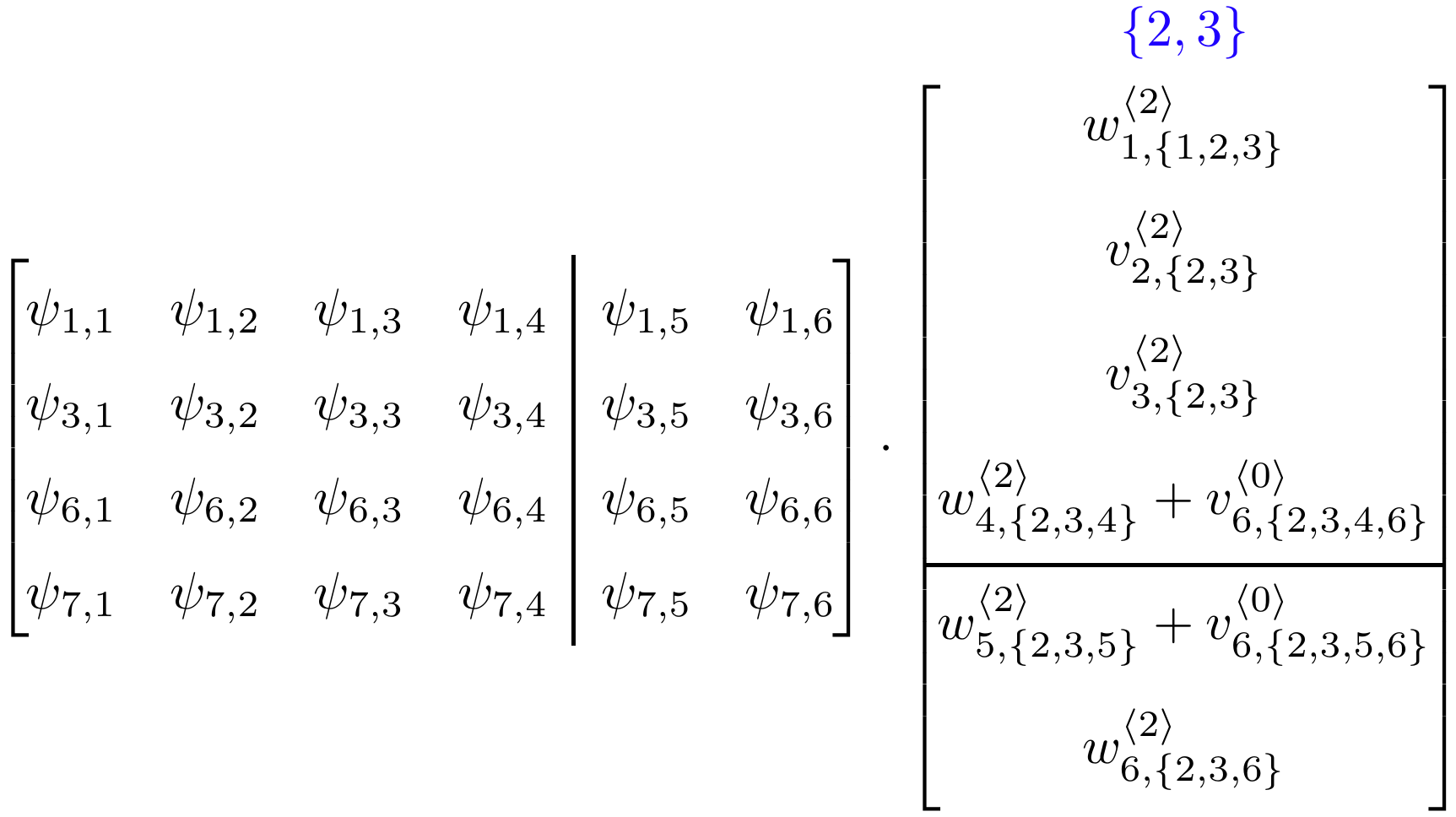}}
        \end{align*}
        In the above equation, $\left[\f{T}_2\right]_{:,\set{2,3}}$ cannot be extracted from $\encdc \cdot \left[\f{T}_2\right]_{:,\set{2,3}}$, since $\encdc$ is not square/invertible. However, we will explain how the symbols $\set{w_{5,\set{2,3,5}}^{\nd{2}},v_{6,\set{2,3,5,6}}^{\nd{0}},w_{6,\set{2,3,6}}^{\nd{2}}}$  can be recovered from another piece of collected data, and then the top rows of $\left[\f{T}_2\right]_{:,\set{2,3}}$ can be computed using \eqref{eq:proc:rec:B}, that is,
        \begin{align*}               \left[\up{\f{T}}_2\right]_{:,\set{2,3}}&=\gdc^{-1} \cdot \left( \encdc \cdot \left[\f{T}_2\right]_{:,\set{2,3}} \right)  -\gdc^{-1} \cdot  \ddc \cdot \left[\down{\f{T}}_2\right]_{:,\set{2,3}}.    
        \end{align*}
 	\end{ex}
	The major steps of data recovery are described below and the full process is presented in Algorithm~\ref{alg:data:rec}.
	\begin{enumerate}[label=\bf{(St.\arabic*)}, ref=\bf{(St.\arabic*)}, leftmargin=9.5mm ]
		\item The segments are decoded from segments with mode $0$ to those with higher modes, until the algorithm reaches the root of the tree (Loops $1$ and $2$ in Algorithm~\ref{alg:data:rec}).\label{instr:I1}
		\item Within each mode and each code segment $\fSMat{\seg}{m}{y,\sen {Y}}{\e{P}}$, the columns are decoded according to the reverse-lexicographical order (\textsc{\decode} procedure in Algorithm~\ref{alg:data:rec}). \label{instr:I2}
		\item For each column of $\fSMat{\seg}{m}{y,\sen {Y}}{\e{P}}$ labeled by $\sen I$,  entries located in the lower $(d-k)$ rows are decoded first \\(\textsc{\column} procedure in Algorithm~\ref{alg:data:rec}). To do so, recall that a given symbol $\left[\fSMat{\seg}{m}{y,\sen {Y}}{\e{P}}\right]_{x,\sen I}$ is made from the original and injected part, i.e.,
		\begin{align*}
		    \left[\fSMat{\seg}{m}{y,\sen {Y}}{\e{P}}\right]_{x,\sen I}= \left[\eSMat{\seg}{m}{y,\sen {Y}}{\e{P}}\right]_{x,\sen I}+ \left[\tOMat{m}{y,\sen {Y}}{\e{P}}\right]_{x,\sen I}.
		\end{align*}
		 The  decoding strategy adopted for each symbol $\eSeg_{x,\sen I}=\left[\eSMat{\seg}{m}{y,\sen {Y}}{\e{P}}\right]_{x,\sen I}$ depends on the group that the symbol belongs to (see Definition~\ref{def:groups}): \label{instr:I3}
		\begin{enumerate}[label= \bf{(St.\arabic{enumi}.\arabic*)},ref = \bf{(St.\arabic{enumi}.\arabic*)}, leftmargin=13mm]
		    \item If $\eSeg_{x,\sen I} \in \dgp{\eSeg} $ then $\fSeg_{x,\sen I}=\eSeg_{x,\sen I}$, and $\left[\tOMat{m}{y,\sen {Y}}{\e{P}}\right]_{x,\sen I}=0$. This symbol will be decoded using the child matrices of $\f{S}$ who have lower modes, and hence are already decoded before $\fSeg$. \label{instr:I3:1}
			\item If $\eSeg_{x,\sen I} \in \ngp{\eSeg} $ then $\eSeg_{x,\sen I}=0$. Moreover, no symbol from $\e{P}$ is injected into this entry, i.e.,   $\left[\tOMat{m}{y,\sen {Y}}{\e{P}}\right]_{x,\sen I}=0$. Therefore, we have $  
			\fSeg_{x,\sen I}= \eSeg_{x,\sen I} =0$. \label{instr:I3:2}
			\item If $\eSeg_{x,\sen I} \in \pgp{\eSeg}$ then the original part $\eSeg_{x,\sen I}$ and the (potentially) injected part $\left[\tOMat{m}{y,\sen {Y}}{\e{P}}\right]_{x,\sen I}$ will be decoded separately. To this end, the symbol $\e{S}_{x,\sen I}$ will be recovered using the parity equation in~\eqref{eq:parityeq}. We will show that the other symbols participating in the parity equation appear in columns $\fSeg_{:,\sen J}$, with $\sen J \succ \sen I$ (according to the lexicographical order defined in Subsection~\ref{subsec:notation}). Therefore, based on Step~\ref{instr:I2}, all such symbols are already decoded, before column $\fSeg_{:,\sen I}$. Next, note that the injection of symbol $\left[\tOMat{m}{y,\sen {Y}}{\e{P}}\right]_{x,\sen I}$ ( a possible injection from $\e{P}$)  is a secondary one. Hence, such a symbol is also primarily injected to a sibling code segment of $\e{S}$, say $\e{Q}$ (another child code of $\e{P}$), with $\md{\f{Q}} < \md{\f{S}}$. Thus, the injected symbol is already decoded when the message matrix $\f{Q}$ is decoded.\label{instr:I3:3}
    	\end{enumerate}
		\item Once the lower part of column $\sen I$ of code segment $\f{S}$ (i.e., $\down{\fSeg}_{:,\sen I}$) is decoded, its upper part $\up{\fSeg}_{:,\sen I}$ will be recovered using the identity  in~\eqref{eq:proc:rec:B} (Line \ref{alg:upper} of Algorithm~\ref{alg:data:rec}). \label{instr:I4}
		\item Once codes segment $\fSMat{\seg}{m}{y,\sen {Y}}{\e{P}} = \eSMat{\seg}{m}{y,\sen {Y}}{\e{P}} + \tOMat{m}{y,\sen {Y}}{\e{P}}$ is decoded, both $\eSMat{\seg}{m}{y,\sen {Y}}{\e{P}}$ and $\tOMat{m}{y,\sen {Y}}{\e{P}}$ matrices will be extracted (\textsc{\extract} procedure of Algorithm~\ref{alg:data:rec}). The matrix $\tOMat{m}{y,\sen {Y}}{\e{P}}$ will be later used in the recovery of the parent matrix, $\e{P}$.  \label{instr:I5}
	\end{enumerate}	
	In Algorithm~\ref{alg:data:rec} of the next page, it is assumed that the structure of the code, including the hierarchical tree which specifies parent-child relationship between nodes and all the injection pairs are known at the data collector. Note that the procedure may look sophisticated at first glance. However, it follows a recursive and identical routine for all code segments.
	
	In the following subsection, we first present the formal proof for the data recovery, which also provides a detailed description of the above instruction. This is followed by applying the data recovery algorithm to our running example in Section~\ref{subsec:data:rec:exmp}.  
    \begin{algorithm*}
        	\setstretch{1.4}
			\KwIn{Stacked contents of $k$ nodes $\sen K\subseteq \intv{n}$ in the form of $\enc[\sen K,:]\cdot \MM$.}
			\KwOut{Recovered data file (entries of the super-message matrix $\MM$).}  
			$\MM \gets [\;]$\;
			\For (\Comment*[f]{Loop $1$})  {$m\gets 0$  to $\mu$} {
				\ForEach(\Comment*[f]{Loop $2$}) {$
				\fSMat{\seg}{m}{y,\sen {Y}}{\e{P}}$ in $
				\MM$ with $\md{\fSMat{\seg}{m}{y,\sen {Y}}{\e{P}}}=m$}{
				   	$\fSMat{\seg}{m}{y,\sen {Y}}{\e{P}} \gets \textsc {DecodeSegment}(\enc[\sen K,:]\cdot \fSMat{\seg}{m}{y,\sen {Y}}{\e{P}})$\; 
					\vspace{1mm}$\MM \gets \left[\begin{array}{c|c}\fSMat{\seg}{m}{y,\sen {Y}}{\e{P}} & \MM \end{array}\right]$\;
					$\eSMat{\seg}{m}{y,\sen {Y}}{\e{P}},\tOMat{T}{y,\sen {Y}}{\e{P} } \gets \textsc {\extract}(\fSMat{\seg}{m}{y,\sen {Y}}{\e{P}})$ \label{alg:extract} \Comment*{Globally store $\eSMat{\seg}{m}{y,\sen {Y}}{\e{P}}$ and $\tOMat{T}{y,\sen {Y}}{\e{P} }$}
				}	
			}
			\Return $\MM$;
		\hrule
		\SetKwProg{myproc}{Procedure}{:}{\KwRet $\fSMat{\seg}{m}{y,\sen {Y}}{\e{P}}$}
 		\myproc{\decode{$\enc[\sen K,:]\cdot\fSMat{\seg}{m}{y,\sen {Y}}{\e{P}}$}}{
		    \ForEach  { $\sen I \subseteq \intv{d}$ with $\size{\sen I} = m$ in the reverse lexicographical order}{
		        $\left[\,\down{\fSeg}\,\right]_{:,\sen I} \gets \textsc {DecodeColumnBottom}(\enc[\sen K,:]\cdot \fSMat{\seg}{m}{y,\sen {Y}}{\e{P}},\sen I)$\;
			    $ \up{\fSeg}_{:,\sen I} \gets \gdc^{-1} \cdot \left[\enc[\sen K,:]\cdot \fSeg \right]_{:,\sen I} -\gdc^{-1} \cdot \enc[\sen K,:]\cdot \left[\,\down{\fSeg}\,\right]_{:,\sen I}$ \label{alg:upper} \Comment*{Decode upper part using~\eqref{eq:proc:rec:B}} 
			}
			$\fSMat{\seg}{m}{y,\sen {Y}}{\e{P}} \gets \begin{bmatrix} \up{\fSeg} \\[-1mm] \down{\fSeg} \end{bmatrix}$\;
 		}
 		\hrule
 		\SetKwProg{myproc}{Procedure}{:}{\KwRet $\left[\,\down{\fSeg}\,\right]_{:,\sen I}$}
		\myproc{\column{$\enc[\sen K,:]\cdot \fSMat{\seg}{m}{y,\sen {Y}}{\e{P}},\sen I$}}{
		 $\sen A \gets \sen I \cap \intv{k}$; $\sen B \gets \sen I \cap \intv{k+1:d}$ \label{alg:AB}\;
		 \For{$x \gets k+1 \;\KwTo\; d$}{
		            \uIf (\Comment*[f]{Symbol belonging to $\dgp{S}$}){$x \leq \max \sen B \mathsf{\;and\;} \sen A \neq \varnothing$ \label{alg:dgp}}{$\left[\,\down{\fSeg}\,\right]_{x,\sen I} \gets (-1)^{1+\s{\eSeg}{\max \sen A} + \ind{\sen A \cup \sen B}{\max {\sen A}}} \left[\tOMat{m}{x,\sen {B}}{\eSeg} \right]_{\max {\sen A},\sen A \setminus \set{\max \sen{A}}}\hspace{-20mm}$ \label{alg:dgp:A} \Comment*{Get injected symbol using~\eqref{eq:primary:inj}}
    			   }
    			   \uElseIf (\Comment*[f]{Symbol belonging to $\ngp{S}$}){$x \leq \max \sen B \mathsf{\;and\;} \sen A = \varnothing$}{$\left[\,\down{\fSeg}\,\right]_{x,\sen I} \gets 0$ \label{alg:ngp}\;}
    			   \ElseIf (\Comment*[f]{Symbol belonging to $\pgp{S}$}){$x > \max \sen B$}{
    			        $\left[\,\down{\fSeg}\,\right]_{x,\sen I} \gets (-1)^{\sigma_{\e{S}}(x)+ m} \sum_{t\in \sen I} (-1)^{\sigma_{\e{S}}(t)+\ind{\sen I \cup \{x\}}{t}} \left[\fSMat{\seg}{m}{y,\sen {Y}}{\e{P}} \right]_{t,(\sen I \cup \{x\})\setminus \{t\}}$ \label{alg:pgp}\;
    			       \If (\Comment*[f]{ A symbol is secondarily injected into $\left[\,\down{\fSeg}\,\right]_{x,\sen I}$})
    			        {$x\notin \sen Y \mathsf{\;and\;} \sen I \cap \sen Y=\varnothing \mathsf{\;and\;} m>0 $ \label{alg:parent:inj}}
    			        {$\sen {B}' \gets \sen B \cup \set{x}\cup\sen Y$ \label{alg:pgp:A}\; 
    			         $\e{P}_{y,\sen I \cup \set{x} \cup \sen Y}\gets (-1)^{1+\s{\e{P}}{\max \sen A} + \ind{\sen A \cup \sen {B}'}{\max {\sen A}}} \left[\tOMat{m}{y,\sen {B}'}{\e{P}} \right]_{\max \sen A,\sen {A} \setminus \set{\max \sen A}}$ \label{alg:pgp:B} \Comment*{Get injected symbol}
    			        $\left[\tOMat{m}{y,\sen Y}{\e{P}}\right]_{x,\sen I}\gets (-1)^{1+\s{\e{P}}{x} + \ind{\sen I \cup \set{x} \cup \sen Y}{x}} \e{P}_{y,\sen I \cup \set{x} \cup \sen Y}$ \;\label{alg:pgp:C} 
    			        $\left[\,\down{\fSeg}\,\right]_{x,\sen I} \gets \left[\,\down{\fSeg}\,\right]_{x,\sen I} +\left[\tOMat{m}{y,\sen Y}{\e{P}}\right]_{x,\sen I}$ \label{alg:pgp:D}\;}
    			   }
			    }
			  }
			\caption{Data Recovery Algorithm}\label{alg:data:rec}
		\end{algorithm*}	
    \begin{algocontinue}{\ref{alg:data:rec}}
    	    \setstretch{1.4}
    	    \setcounter{AlgoLine}{35}
        	\SetKwProg{myproc}{Procedure}{:}{\KwRet $\eSMat{\seg}{m}{y,\sen {Y}}{\e{P}},\tOMat{T}{y,\sen {Y}}{\e{P}}$}
    		\myproc{\extract{$\fSMat{\seg}{m}{y,\sen {Y}}{\e{P}}$}}{
    		  $m \gets \md{\fSMat{\seg}{m}{y,\sen {Y}}{\e{P}}}$\;     
        		\For{$i\gets 1$ \KwTo $d$}{\label{alg:deltas:loop1}
        		    \For{$\sen I \subseteq \intv {d+1}$ with $\size{\sen I}=m$}{\label{alg:deltas:loop2}
        		      \uIf{$ i > \max \sen I \mathsf{\;and\;} i \notin \sen Y  \mathsf{\;and\;} \sen I \cap \sen Y = \varnothing$ \label{alg:deltas:cond}}{
        		            $\left[\eSMat{\seg}{m}{y,\sen {Y}}{\e{P}}\right]_{i,\sen I} \gets (-1)^{\s{\eSeg}{i}+m}\sum_{t \in \sen I} (-1)^{\s{\e{\eSeg}}{t}+\ind{\sen I}{t}} \left[\fSMat{\seg}{m}{y,\sen {Y}}{\e{P}}\right]_{t,(\sen I \cup\{i\}) \setminus\{t\}}$ \label{alg:extract:cond:A}
        		        }\Else {	$\left[\eSMat{\seg}{m}{y,\sen {Y}}{\e{P}}\right]_{i,\sen I} \gets \left[\fSMat{\seg}{m}{y,\sen {Y}}{\e{P}}\right]_{i,\sen I} $ \label{alg:extract:cond:B}}
        	    $\left[\tOMat{T}{y,\sen {Y}}{\e{P}}\right]_{i,\sen I} \gets \left[ \fSMat{\seg}{m}{y,\sen {Y}}{\e{P}}\right]_{i,\sen I} - \left[\eSMat{\seg}{m}{y,\sen {Y}}{\e{P}}\right]_{i,\sen I}$ \label{alg:extract:delta}\;
        		    }
        		}
    		}
 		\caption{Data Recovery Algorithm (continuation)}
	\end{algocontinue}

    \subsection{The Proof of Data Recovery}
    \label{subsec:data:rec:proof}
        This section is dedicated to the proof of Proposition~\ref{prop:nkd:recovery}. 
        Before diving into the formal proof, we present two lemmas and their proofs, which play an important role in the proof of the proposition. 
        \begin{lm}[Initial Recovery Step]
    		In any $(n,k,d;\mu)$ cascade code we have $\down{\f{S}}=\mathbf{0}_{(d-k)\times 1}$ for every code segment $\f{S}$ with $\md{\fSeg}=0$.
    		\label{lm:bottom-mode0}
    \end{lm}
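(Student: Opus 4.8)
The plan is to combine two observations: the message matrix of a mode-$0$ signed determinant code is, by definition, the all-zero column vector $\mathbf{0}_{d\times1}$; and whatever its parent injects into a mode-$0$ child is annihilated in the bottom $(d-k)$ rows.

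First I would note that a mode-$0$ segment $\fSeg$ is never the root of the hierarchical tree, since the root has mode $\mu\geq1$ (Theorem~\ref{thm:main}); hence $\fSeg$ is a child of some parent matrix $\e{P}$ with injection pair $(x,\sen B)$, and by~\eqref{eq:Qdef},
\[
\fSeg=\eSeg+\tOMat{0}{x,\sen B}{\e{P}},\qquad \eSeg=\mathbf{0}_{d\times1}.
\]
Since $\md{\fSeg}=0$, the only column label is $\sen I=\varnothing$, so $\down{\fSeg}=\down{\eSeg}+\down{\tOMat{0}{x,\sen B}{\e{P}}}$, and it suffices to show that $\big[\tOMat{0}{x,\sen B}{\e{P}}\big]_{i,\varnothing}=0$ for every $i\in\intv{k+1:d}$.

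Next I would substitute $(i,\sen I)=(i,\varnothing)$ into the injection rule~\eqref{eq:inj:mat}. The requirements $i>\max\varnothing=-\infty$ and $\varnothing\cap\sen B=\varnothing$ hold automatically, so the entry is zero unless $i\notin\sen B$, and in that case it equals $(-1)^{1+\s{\e{P}}{i}+\ind{\set{i}\cup\sen B}{i}}\,\e{P}_{x,\set{i}\cup\sen B}$, where $\set{i}\cup\sen B$ is a legitimate column of $\e{P}$ since $\size{\set{i}\cup\sen B}=\size{\sen B}+1=\md{\e{P}}$ by~\eqref{eq:mode:relation} (using $\md{\fSeg}=0$). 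Now fix any $i\in\intv{k+1:d}\setminus\sen B$ and examine the entry $\e{P}_{x,\set{i}\cup\sen B}$ through Definition~\ref{def:groups}: since $\sen B\subseteq\intv{k+1:d}$ (condition~\ref{cond:IP-i}) and $i\in\intv{k+1:d}$, the column $\set{i}\cup\sen B$ is disjoint from $\intv{k}$; the row satisfies $x\in\intv{k+1:d}$ by condition~\ref{cond:IP-iii}; and $x\leq\max\sen B\leq\max(\set{i}\cup\sen B)$ by condition~\ref{cond:IP-iv}. Hence $\e{P}_{x,\set{i}\cup\sen B}\in\ngp{P}$, so this entry of $\e{P}$ has been nulled to $0$, and therefore $\big[\tOMat{0}{x,\sen B}{\e{P}}\big]_{i,\varnothing}=0$. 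As this holds for all $i\in\intv{k+1:d}$, we obtain $\down{\fSeg}=\mathbf{0}_{(d-k)\times1}$.

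The only step that needs care is the last one: recognizing that each secondary injection reaching the lower rows of a mode-$0$ child reads a parent entry $\e{P}_{x,\set{i}\cup\sen B}$ whose column lies entirely below level $k$, and that the injection-pair constraints $x>k$ and $x\leq\max\sen B$ push that entry into the nulled group $\ngp{P}$ rather than the parity group $\pgp{P}$; this is exactly why those constraints are imposed on injection pairs. Everything else is a direct substitution into~\eqref{eq:inj:mat} and~\eqref{eq:mode:relation}. (If one wishes to include $\mu=0$, the root segment is literally $\mathbf{0}_{d\times1}$ with no injection, and the claim is immediate.)
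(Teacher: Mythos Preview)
Your proof is correct and follows essentially the same route as the paper: both show that the pre-injection mode-$0$ matrix is zero by definition, and that any symbol $\e{P}_{x,\{i\}\cup\sen B}$ injected into a lower row $i\in[k+1:d]$ lands in $\ngp{P}$ (since $\{i\}\cup\sen B\subseteq[k+1:d]$ and $x\leq\max\sen B$), hence is nulled. Your version is slightly more careful in separating the case $i\in\sen B$ (where the injection is zero by~\eqref{eq:inj:mat} directly) and in verifying the size condition $\size{\{i\}\cup\sen B}=\md{\e{P}}$, but the argument is the same.
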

    
    \begin{proof}[Proof of Lemma~\ref{lm:bottom-mode0}]
    Consider a segment $\fSeg$ with $\md{\fSeg}=0$, which is introduced by a parent segment $\e{P}$ via an injection pair $(y,\sen Y)$, i.e, $\fSMat{\seg}{m}{y,\sen {Y}}{\e{P}} = \eSMat{\seg}{m}{y,\sen {Y}}{\e{P}} + \tOMat{m}{y,\sen {Y}}{\e{P}}$. Recall from Remark~\ref{rem:injpair} that   $y\in[k+1:d]$ and $\sen Y  \subseteq [k+1:d]$. Also note that the only column of $\eSeg$ is indexed by $\varnothing$, which is an all-zero vector, i.e., $\eSMat{\seg}{m}{y,\sen {Y}}{\e{P}} = \mathbf{0}_{d\times 1}$. According to \eqref{eq:inj:mat}, the  symbol of parent matrix $\e{P}$ to be injected into position $(i, \varnothing)$ of $\e{S}$ is $\e{P}_{y, \varnothing \cup \{i\} \cup \sen Y} = \e{P}_{y, \{i\} \cup \sen Y} $ (up to a sign). Now, if $\left[\eSMat{\seg}{m}{y,\sen {Y}}{\e{P}}\right]_{i,\varnothing}$ belongs to the lower part of $\fSeg$, we have $i\in \intv{k+1:d}$. This together with $\sen Y \subseteq \intv{k+1:d}$ implies that $(\{i\}\cup \sen Y) \cap [k]=\varnothing$, and thus $\e{P}_{y, \{i\} \cup \sen Y} \in \ngp{P}$ (see Definition~\ref{def:groups}). Hence, this symbol is set to zero, and will not be injected. Thus, no injections occur at position $(i,\varnothing)$ with $i\in \intv{k+1:d}$, and we have $\f{S}_{i,\varnothing} = \e{S}_{i,\varnothing} = 0$. This implies the claim of the lemma. 
 
	\end{proof}
	
    	\begin{lm} [Separability of the Injected and Original Symbols]
		If a code segment $\fSeg = \fSMat{\seg}{m}{y,\sen {Y}}{\e{P}} = \eSMat{\seg}{m}{y,\sen {Y}}{\e{P}} + \tOMat{m}{y,\sen {Y}}{\e{P}}$ 
		is decoded (all entries of $\fSeg$ are recovered), then the symbols of the original message matrix  $\eSMat{\seg}{m}{y,\sen {Y}}{\e{P}}$ and the injected symbols in $\tOMat{m}{y,\sen {Y}}{\e{P}}$ can be uniquely extracted. 
		\label{lm:decode:separate}
	\end{lm}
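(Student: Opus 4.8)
The plan is to exhibit an explicit inverse of the map $\eSMat{\seg}{m}{y,\sen {Y}}{\e{P}} \mapsto \eSMat{\seg}{m}{y,\sen {Y}}{\e{P}} + \tOMat{m}{y,\sen {Y}}{\e{P}}$ that recovers each of the two summands from their sum, position by position. The key structural fact is the one recorded in Remark~\ref{rem:inj:explain}: the injection matrix $\tOMat{m}{y,\sen {Y}}{\e{P}}$ has a nonzero entry at position $(i,\sen I)$ only when $i > \max \sen I$, $i\notin \sen Y$, and $\sen I\cap \sen Y=\varnothing$, and in exactly those positions the entry $\eSeg_{i,\sen I}$ of the underlying signed determinant matrix is a \emph{parity} symbol (Remark~\ref{rem:raw:parity}). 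So the support of $\tOMat{m}{y,\sen {Y}}{\e{P}}$ is contained in the set of parity positions of $\eSMat{\seg}{m}{y,\sen {Y}}{\e{P}}$, and at every other position we simply read off $\left[\eSeg\right]_{i,\sen I} = \left[\fSeg\right]_{i,\sen I}$ and $\left[\tOMat{m}{y,\sen {Y}}{\e{P}}\right]_{i,\sen I}=0$. This is precisely the ``\textsc{else}'' branch (Line~\ref{alg:extract:cond:B}) of the \textsc{GetDeltaS} procedure.

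First I would handle the non-parity positions as above. Then, for a parity position $(i,\sen I)$ with $i>\max\sen I$, $i\notin\sen Y$, $\sen I\cap\sen Y=\varnothing$, I would recover $\left[\eSeg\right]_{i,\sen I}$ directly from the parity equation~\eqref{eq:parityeq} applied to the $w$-group of $\sen Y' = \sen I\cup\set{i}$. Concretely, from~\eqref{eq:def:S} the entry $\left[\eSeg\right]_{i,\sen I}$ equals $(-1)^{\s{\eSeg}{i}}w_{i,\sen I\cup\set{i}}$, and since $i=\max(\sen I\cup\set{i})$ this is a parity symbol; rearranging~\eqref{eq:parityeq} expresses it as a signed sum of $\set{w_{t,\sen I\cup\set{i}} : t\in\sen I}$, each of which equals (up to a sign) the entry $\left[\eSeg\right]_{t,(\sen I\cup\set{i})\setminus\set{t}}$, i.e., an entry of $\eSeg$ in a \emph{different} column whose label $(\sen I\cup\set{i})\setminus\set{t}$ lexicographically exceeds $\sen I$ (because we removed an element $t\le\max\sen I$ and inserted the larger element $i$). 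Crucially, none of these helper positions is a parity position into which the same parent symbol was injected — the argument is that a symbol injected into column $\sen I$ via pair $(y,\sen Y)$ sits only at positions disjoint from $\sen Y$, and the companion positions $(t,(\sen I\cup\set{i})\setminus\set{t})$ for $t\in\sen I$ include the index $i$, which lies in $\sen Y'\setminus\set{t}$ but the relevant disjointness/primary-vs-secondary bookkeeping (Remark on ``no multi-hop injection'') guarantees those entries were already cleanly decoded. This yields $\left[\eSeg\right]_{i,\sen I}$, and then $\left[\tOMat{m}{y,\sen {Y}}{\e{P}}\right]_{i,\sen I} = \left[\fSeg\right]_{i,\sen I} - \left[\eSeg\right]_{i,\sen I}$ — exactly Line~\ref{alg:extract:delta}.

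For the formal write-up I would phrase this as an induction on the reverse-lexicographical order of the column labels $\sen I$ of $\fSeg$ (matching the order in which \textsc{DecodeSegment} processes columns): assuming all columns $\sen J\succ\sen I$ of $\eSeg$ have been extracted, the parity equation recovers every parity entry in column $\sen I$ from those later columns, and the non-parity entries in column $\sen I$ are copied verbatim; hence column $\sen I$ of $\eSeg$, and therefore of $\tOMat{m}{y,\sen {Y}}{\e{P}}$, is determined. Uniqueness is immediate from this constructive procedure. I expect the main obstacle to be the bookkeeping in the last point of the previous paragraph: verifying that when we invoke the parity equation to pin down a parity entry, \emph{all} the other symbols of that $w$-group genuinely live in already-decoded columns and are themselves free of residual injected contributions. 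This requires carefully combining (i) the column-order argument, (ii) the fact from Remark~\ref{rem:inj:explain} that injections land only in parity positions, and (iii) the observation that a hosted $w$-symbol in a child matrix is never re-injected into a grandchild, so each $w$-group of $\eSeg$ is internally consistent once the parent's injection matrix has been subtracted off. Everything else is routine sign arithmetic using~\eqref{eq:def:S}, \eqref{eq:parityeq}, and~\eqref{eq:inj:mat}.
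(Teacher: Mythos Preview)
Your approach is correct and matches the paper's: separate the non-injected positions trivially, then at each injected position $(i,\sen I)$ recover $\left[\eSeg\right]_{i,\sen I}$ via the parity equation~\eqref{eq:parityeq} for the $w$-group $\sen I\cup\{i\}$, and finally subtract to get $\left[\tOMat{m}{y,\sen Y}{\e{P}}\right]_{i,\sen I}$.

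However, the induction on reverse-lexicographic column order and the appeal to ``no multi-hop injection'' are unnecessary, and your uncertainty about the ``bookkeeping'' dissolves once you see the paper's direct argument. The companion entries $\left[\eSeg\right]_{t,(\sen I\cup\{i\})\setminus\{t\}}$ for $t\in\sen I$ satisfy $t\le \max\sen I < i = \max\bigl((\sen I\cup\{i\})\setminus\{t\}\bigr)$, so the row index $t$ is \emph{not} strictly larger than the maximum of the column label. By the injection rule~\eqref{eq:inj:mat} this alone forces $\left[\tOMat{m}{y,\sen Y}{\e{P}}\right]_{t,(\sen I\cup\{i\})\setminus\{t\}}=0$, hence $\left[\eSeg\right]_{t,(\sen I\cup\{i\})\setminus\{t\}}=\left[\fSeg\right]_{t,(\sen I\cup\{i\})\setminus\{t\}}$ directly, with no need to have ``already extracted'' those columns. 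This gives the closed-form expression
\[
\left[\eSeg\right]_{i,\sen I}=(-1)^{\s{\eSeg}{i}+m}\sum_{t\in\sen I}(-1)^{\s{\eSeg}{t}+\ind{\sen I}{t}}\left[\fSeg\right]_{t,(\sen I\cup\{i\})\setminus\{t\}},
\]
which is exactly Line~\ref{alg:extract:cond:A}, and the lemma follows without any inductive hypothesis.
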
	
	\begin{proof}[Proof of Lemma~\ref{lm:decode:separate}]
			First note that if no injection occurs into the position $(i, \sen I)$, then we have $\left[\eSMat{\seg}{m}{y,\sen {Y}}{\e{P}}\right]_{i,\sen I}=\left[\fSMat{\seg}{m}{y,\sen {Y}}{\e{P}}\right]_{i,\sen I}$, and $\left[\tOMat{m}{y,\sen {Y}}{\e{P}}\right]_{i,\sen I}=0$, and the claim clearly holds. 
			
			Next, consider a position $(i,\sen I)$ with an injection. 
			Recall from~\eqref{eq:inj:mat} and Remark~\ref{rem:inj:explain}, that injection takes place into a position $(i,\sen I)$ of $\eSeg$ only if  conditions $i>\max \sen I$, $i \notin \sen Y$ and $\sen I \cap \sen Y = \varnothing$ are all satisfied. Let $\md{\f{S}}=m$. We start by obtaining a copy of  $\left[\eSMat{\seg}{m}{y,\sen {Y}}{\e{P}}\right]_{i,\sen I}$ as follows. 
		    \begin{align}
            \left[\eSMat{\seg}{m}{y,\sen Y}{\e{P}}\right]_{i,\sen I} 
            &= (-1)^{\s{\eSeg}{i}} w_{i, \sen I \cup \set{i}} \nonumber \\
                             &= (-1)^{\s{\eSeg}{i}} \left[(-1)^{1+\ind{\sen I \cup \set{i}}{i}} \sum_{t \in \sen I} (-1)^{\ind{\sen I \cup \set{i}}{t}} w_{t,\sen I \cup \set{i}}\right] \label{eq:parity:rec:B}\\
                             &= (-1)^{\s{\eSeg}{i}+m}\sum_{t \in \sen I} (-1)^{\ind{\sen I}{t}} w_{t,\sen I \cup \set{i}}\label{eq:parity:rec:C}\\
                             &= (-1)^{\s{\eSeg}{i}+m}\sum_{t \in \sen I} (-1)^{\s{\e{\eSeg}}{t}+\ind{\sen I}{t}} \left[\eSMat{\seg}{m}{y,\sen {Y}}{\e{P}}\right]_{t,(\sen I \cup\{i\}) \setminus\{t\}} \nonumber \\
                             &= (-1)^{\s{\eSeg}{i}+m}\sum_{t \in \sen I} (-1)^{\s{\e{\eSeg}}{t}+\ind{\sen I}{t}} \left[\fSMat{\seg}{\size{\sen{I}}}{y,\sen {Y}}{\e{P}}\right]_{t,(\sen I \cup\{i\}) \setminus\{t\}},\label{eq:datarec:pgp}
            \end{align}
			where in~\eqref{eq:parity:rec:B} we used parity equation in~\eqref{eq:parityeq} to rewrite $w_{i, \sen I \cup \set{i}}$ based on other symbols of the $w$-group $\sen I \cup \set{i}$; in~\eqref{eq:parity:rec:C} we used the fact that $i$ is the maximum of the $m+1$ element set $\sen I \cup \set{i}$ to conclude that $\ind{\sen I \cup \set{i}}{i}=m+1$ and $\ind{\sen I \cup \set{i}}{t}=\ind{\sen I }{t}$; the last equality in~\eqref{eq:datarec:pgp} follows from the fact that $t \leq \max \sen I < i = \max ((\sen I \cup \{i\}) \setminus \{t\})$ and hence, according to~\eqref{eq:inj:mat} no injection takes place into $\left[\eSMat{\seg}{m}{y,\sen {Y}}{\e{P}}\right]_{t,(\sen I \cup\{i\})}$. This implies that $\left[\eSMat{\seg}{m}{y,\sen Y}{\e{P}}\right]_{i,\sen I}$ can be retrieved as a linear combination of some symbols in $\fSMat{\seg}{m}{y,\sen Y}{\e{P}}$. 
			
			Finally, once $\left[\eSMat{\seg}{m}{y,\sen {Y}}{\e{P}}\right]_{i,\sen I}$ is recovered, we can find the injected symbol from 
			\begin{align*}
			\left[ \tOMat{m}{y,\sen Y}{\e{P}}\right]_{i,\sen I} = \left[\fSMat{\seg}{m}{y,\sen {Y}}{\e{P}}\right]_{i,\sen I} - \left[\eSMat{\seg}{m}{y,\sen {Y}}{\e{P}}\right]_{i,\sen I}.
			\end{align*}
			This completes the proof.
	\end{proof}
    	\begin{proof}[Proof of Proposition~\ref{prop:nkd:recovery}]  
    	We aim to show that all the code segments in $\MM$ can be recovered using the data recovery process in  Algorithm~\ref{alg:data:rec}. The data recovery is a recursive process,  
        which starts from the code segments with the lowest modes (at the bottom of the hierarchical tree) and continues to code segments with the highest modes (at the top of the hierarchical tree). Furthermore, data recovery within each code segment is performed column by column, according to the reverse-lexicographical order, i.e., column $\left[\f{S}\right]_{:,\sen I}$ will be decoded after column  $\left[\f{S}\right]_{:,\sen J}$ if an only if $\sen I \prec \sen J$. We also prove the proposition by induction over the code segments and column labels. 
        
        As the base step of induction, consider a code segment $\f{S}$  with $\md{\f{S}}=0$. Then, from~\eqref{eq:proc:rec:B}  and Lemma~\ref{lm:bottom-mode0}
         we have 
        \begin{align*}
            \gdc^{-1} \cdot \left( \encdc \cdot\fSeg \right) &=  \up{\fSeg}+\gdc^{-1} \cdot  \ddc \cdot \down{\fSeg} = \up{\fSeg},
        \end{align*}
        which provides us with $\up{\fSeg}$. This together with $\down{\fSeg}$ (from Lemma~\ref{lm:bottom-mode0} fully recover $\f{S}$. Then, using Lemma~\ref{lm:decode:separate} we can recover the original symbols in $\eSMat{\seg}{m}{y,\sen {Y}}{\e{P}}$ (which are all zero by definition) and the injected symbols $\tOMat{m}{y,\sen Y}{\e{P}}$.
        
        It remains to prove the  induction step, which is the following statement: 
        \begin{quote}
        \emph{For any code segment $\f{S}$ with $\md{\f{S}}=m$ and any column index $\sen I$, 
        if the message matrices for all code segments $\f{Q}$ with $\md{\f{Q}}<m$ and all columns $\left[\f{S}\right]_{:,\sen J}$ with $\sen J > \sen I$ are decoded, then 
        the content of column $\left[\f{S}\right]_{:,\sen I}$ can be extracted from the observed data $\encdc \cdot\fSeg$ and the already decoded symbols.  }
        \end{quote}

    	To prove this statement, consider an arbitrary code segment  $\f{S}=\fSMat{\seg}{m}{y,\sen {Y}}{\e{P}}$  in $\MM$, 

    	and focus on an arbitrary column  $\left[\f{S}\right]_{:,\sen I}$. 
    Let $x\in \intv{k+1:d}$ and consider an entry  $\left[\fSMat{\seg}{m}{y,\sen {Y}}{\e{P}}\right]_{x,\sen I}$ in the lower part of this column. Recall that
    \begin{align*}
    	\left[\fSMat{\seg}{m}{y,\sen {Y}}{\e{P}}\right]_{x,\sen I} = \left[\eSMat{\seg}{m}{y,\sen {Y}}{\e{P}}\right]_{x,\sen I} + \left[\tOMat{m}{y,\sen {Y}}{\e{P}}\right]_{x,\sen I}.
	\end{align*}
	Now, we can distinguish the following three cases:     
	
    \noindent$\bullet\; \eSeg_{x,\sen I} \in \dgp{\eSeg}$: From the second part of Remark~\ref{rem:inj:explain} we know  that no injection is performed into the symbols in $\dgp{\eSeg}$, and hence $\left[\tOMat{m}{y,\sen {Y}}{\e{P}}\right]_{x,\sen I} = 0$ and $\left[\fSMat{\seg}{m}{y,\sen {Y}}{\e{P}}\right]_{x,\sen I}  = \left[\eSMat{\seg}{m}{y,\sen {Y}}{\e{P}}\right]_{x,\sen I}$  
    
    Then,  let $\sen {A} = \sen I \cap \intv{k}$, and $\sen {B} = \sen I \cap \intv{k+1:d}$. Again, the third part of Remark~\ref{rem:inj:explain} implies that symbol $\eSeg_{x,\sen I}$ is injected into position $(\max \sen{A}, \sen A \setminus \set{\max \sen{A}})$ of the child matrix $\eSMat{Q}{m}{x,\sen {B}}{\e{S}}$ of $\eSeg$. Therefore, we have 
    \begin{align}
        \fSeg_{x,\sen I} =  \eSeg_{x,\sen I} = (-1)^{1+\s{\eSeg}{\max \sen A} + \ind{\sen A \cup \sen B}{\max {\sen A}}} \left[\tOMat{m}{x,\sen {B}}{\eSeg} \right]_{\max {\sen A},\sen A \setminus \set{\max \sen{A}}}\hspace{-3pt}.\label{eq:rec:lower:case1}
    \end{align}
	Since $\f{Q}$ is a child matrix of $\f{S}$, we have $\md{\f{Q}} < \md{\f{S}}$, and hence by the induction assumption $\f{Q}$ is already decoded. Moreover, Lemma~\ref{lm:decode:separate} ensures that all the entries of $\eSMat{Q}{m}{x,\sen {B}}{\e{S}}$ and $\tOMat{m}{x,\sen {B}}{\e{S}}$ can be extracted from $\f{Q}$. Therefore, $\f{S}_{x,\sen I}$ can be recovered using~\eqref{eq:rec:lower:case1} and $\tOMat{m}{x,\sen {B}}{\eSeg} $. 

	\noindent$\bullet\;\eSeg_{x,\sen I} \in \ngp{\eSeg}$: Recall from Definition~\ref{def:groups} that symbols in this group are set to zero.  Moreover, belonging to $\ngp{\eSeg}$ implies that $x\leq \max \sen I$, and hence~\eqref{eq:inj:mat} ensures that no injection will be performed into the entry $(x,\sen I)$ of $\e{S}$. Therefore, we have 
	$\left[\fSMat{\seg}{m}{y,\sen {Y}}{\e{P}}\right]_{x,\sen I} = \left[\eSMat{\seg}{m}{y,\sen {Y}}{\e{P}}\right]_{x,\sen I}=
	\left[\tOMat{m}{y,\sen Y}{\eSeg}\right]_{x,\sen I}=
	0$.
		
	\noindent$\bullet\;\eSeg_{x,\sen I} \in \pgp{\eSeg}$: Since this symbol belongs to $\pgp{\eSeg}$, we have  $x\in \intv{k+1:d}$ and $x>\max \sen I$. Our goal is to decode $\fSeg_{x, \sen I}$. First, note that position $(x,\sen I)$ may be hosting a (secondary) injection from the parent matrix $\e{P}$. According to \eqref{eq:inj:mat} we have
	\begin{align}
	\fSeg_{x, \sen I}  &= \left[\fSMat{\seg}{m}{y,\sen {Y}}{\e{P} }\right]_{x,\sen I} \nonumber\\ & = \left[\eSMat{\seg}{m}{y,\sen {Y}}{\e{P} }\right]_{x,\sen I} + \left[\tOMat{T}{y,\sen {Y}}{\e{P} }\right]_{x,\sen I} \nonumber\\
	&= \eSeg_{x,\sen I} + 
	(-1)^{1+\s{\e{P}}{x}  + \ind{\sen I \cup\set{x} \cup \sen Y}{x}}\e{P}_{y,\sen I \cup \set{x} \cup \sen Y } \1{x\notin \sen Y, \sen I \cap \sen Y = \varnothing }.
	\label{eq:decode:G2}
	\end{align}
	Hence, in order to decode $\fSeg_{x, \sen I}$ we need to find both $\eSeg_{x,\sen I}=(-1)^{\s{\eSeg}{x}} w_{x,\sen I \cup \set{x}} $ and the injected symbol $\e{P}_{y,\sen I \cup \set{x} \cup \sen Y }$. 
	The first term in \eqref{eq:decode:G2} can be decoded using the parity equation \eqref{eq:parityeq}. Note that $w_{x,\sen I \cup \set{x}}$ satisfies a parity equation along with $\set{w_{t,\sen I \cup \set{x}}: t\in \sen I}$. A  $w$-symbol $w_{t,\sen I \cup \set{x}}$ is located in column $\sen J = (\sen I \cup \set{x}) \setminus \set{t}$ of $\e{S}$. 
	Since $t\leq \max \sen I <x$, the lexicographical order of $\sen I$ and $\sen J$ satisfies $\sen I \prec \sen J$. Therefore, due to the induction assumption, every symbol in column $\sen J$ including $\fSeg_{t, \sen J}$ is already decoded. Moreover, \eqref{eq:inj:mat} implies that no injection is performed into position $(t,\sen J)$ of $\e{S}$, since $t\leq \max \sen I <x = \max \sen J$. This leads to $\fSeg_{t, \sen J} = \eSeg_{t, \sen J} =  w_{t,\sen J \cup \set{t}} = w_{t,\sen I \cup \set{x}}$.  Thus we can  retrieve $\eSeg_{x,\sen I}$ from

	\begin{align}
	\eSeg_{x,\sen I} = (-1)^{\sigma_{\e{S}}(x)+ m} \sum_{t\in \sen I} (-1)^{\sigma_{\e{S}}(t)+\ind{\sen I \cup \{x\}}{t}} \left[\fSMat{\seg}{m}{y,\sen {Y}}{\e{P}} \right]_{t,(\sen I \cup \{x\})\setminus \{t\}}. \label{eq:parity:rec}
	\end{align}

    Next, we need to  retrieve the second term in \eqref{eq:decode:G2}, i.e., $\e{P}_{y,\sen I \cup \set{x} \cup \sen Y }$. We note that the injection of this symbol into code matrix $\f{S}$ is a \emph{secondary} injection, since the injection position is in the lower part of $\f{S}$, which is the child matrix (see Section~\ref{subsec:injection}) of $\e{P}$. Therefore, the symbol $\e{P}_{y,\sen I \cup \set{x} \cup \sen Y}$ is also \emph{primarily} injected into another child matrix. By partitioning the column index $\sen I \cup\{x\}\cup \sen Y$ into  $\sen {A}'=[\sen I \cup \set{x} \cup \sen Y] \cap \intv{k}$ and $\sen {B}'=[\sen I \cup \set{x} \cup \sen Y] \cap \intv{k+1:d}$, we can find the position and injection pair of the primary injection of $\e{P}_{y,\sen I \cup \set{x} \cup \sen Y}$. 

    This leads to position $(\max \sen {A}', \sen {A}' \setminus \set{\max \sen {A}'})$ of the child $\fSMat{Q}{m}{y,\sen {B'}}{\e{P}}$ (see part $3$ of Remark~\ref{rem:inj:explain}), where $\f{Q}$ is a sibling of $\f{S}$. 

	The mode of $\f{Q}$ satisfies
	\begin{align}
		\md{\f{Q}} &= \size{\sen{A}'}-1 \label{eq:2childmode:1}\\
		&= \size{\left[\sen I \cup \set{x} \cup \sen Y\right] \cap \intv{k}}-1\\
		&= \size{\sen I \cap \intv{k}}-1\label{eq:2childmode:2}\\
		&< \size{\sen I} = \md{\fSeg} \label{eq:2childmode:3},
	\end{align}
	where \eqref{eq:2childmode:1} follows from the relation between the mode of child matrix and the injected symbol of the parent matrix for $\e{P}_{y,\sen I \cup \set{x} \cup \sen Y}$ given in~\eqref{eq:mode:relation}, equality in \eqref{eq:2childmode:2}  holds since 
	$x\in \intv{k+1:d}$ and $\sen Y \subseteq \intv{k+1:d}$, 
	and the last equality in~\eqref{eq:2childmode:3} holds since $\sen I$ is a columns index of $\fSeg$. 
	Now, since $\md{\f{Q}} < \md{\f{S}}$, the induction assumption implies that the child matrix $\f{Q}$ is already decoded. Moreover, Lemma~\ref{lm:decode:separate} ensures that $\e{Q}$ and $\tOMat{T}{y,\sen {B}'}{\e{P}}$  can be extracted from $\f{Q}$. Therefore, the entry at position $(\max \sen A', \sen A' \setminus \set{\max \sen A'})$ of $\tOMat{T}{y,\sen {B}'}{\e{P}}$ will provide us with 
	\begin{align}
	    \left[\tOMat{T}{y,\sen {B}'}{\e{P}}\right]_{\max \sen A', \sen A' \setminus \set{\max \sen A'}}  &= (-1)^{1+\s{\e{P}}{\max \sen A'}+\ind{\sen A' \cup \sen{B}'}{\max \sen{A}'}} \e{P}_{y,\sen A' \cup \sen {B}'} \nonumber\\ &  = (-1)^{1+\s{\e{P}}{\max \sen A'}+\ind{\sen A' \cup \sen{B}'}{\max \sen{A}'}} \e{P}_{y,\sen I \cup \set{x} \cup \sen Y}, \label{eq:rec:secondary}
	\end{align}
	from which the injected symbol $\e{P}_{y,\sen I \cup \set{x} \cup \sen Y}$ can be  recovered. Finally, we can plugin $\e{S}_{x,\sen I}$ and $\e{P}_{y,\sen I \cup \set{x} \cup \sen Y}$ (obtained in~\eqref{eq:parity:rec} and \eqref{eq:rec:secondary}, respectively) in \eqref{eq:decode:G2}, to recover $\f{S}_{x,\sen I}$ in group $\pgp{\eSeg}$.

	With this, the lower part of column $ \f{S}_{:,\sen I}$ is fully recovered. Then, the upper part can be decoded using~\eqref{eq:proc:rec:B} and the observed data $\encdc \cdot \fSeg_{:,\sen I}$.  Stacking $\up{\f{S}}_{:,\sen I}$ on top of $\down{\f{S}}_{:,\sen I}$, we find the entire column $\left[\f{S}\right]_{:,\sen I}$. 
	This completes the proof of the induction step. 
	
	Repeating the induction steps  for all code segments and all columns leads to the recovery of the entire matrix $\MM$. This completes the proof.
	\end{proof}
	
    \subsection{An Illustrative Example for Data Recovery}
    \label{subsec:data:rec:exmp}
    \begin{ex} 
	    \label{ex:datarec:A}
    We continue the data recovery for the running example of this paper. First, note that as it was explained earlier in this section, once the bottom part (the bottom $n-k$ rows) of each column is decoded, its upper part (the top $k$ entries) can be easily recovered using~\eqref{eq:proc:rec:B}. 
    
    We start with  code segments at the lowest level of the hierarchical tree in Fig.~\ref{fig:tree}, i.e., segments with mode zero, namely, $\f{T}_6,\dots, \f{T}_{14}$. Here,  $\down{\f{T}_6}=\down{\f{T}_7}=\cdots =\down{\f{T}_{14}}=\mathbf {0}$. This can be verified   for $\f{T}_{11}$, using  $\e{T}_{11}$ and  $\tOMat{0}{5,\set{6}}{\e{T}_2}$ (that includes the injections into $\f{T}_{11}$) given in  \eqref{eq:D11}. Hence, we can recover the entire column $\f{T}_i$ for $i\in\intv{6:14}$. 
    
   We continue the  data recovery process with code segments of  mode $1$. Let us focus on $\f{T}_5= \e{T}_5+\tOMat{0}{6,\set{5,6}}{\e{T}_0}$, where $\e{T}_5$ and $\tOMat{0}{6,\set{5,6}}{\e{T}_0}$  are given in~\eqref{eq:D5} and~\eqref{eq:O5}, respectively. We recover the columns of $\f{T}_5$ according to reverse-lexicographical order, i.e., we start from  $\set{6}$, and continue with $\set{5}$, $\set{4}$, $\set{3}$, and $\set{2}$, until we get to $\set{1}$. For  column $\set{6}$, both elements in the bottom part belong to $\ngpi{T}{5}$, and hence we have elements $\left[\e{T}_5\right]_{5,\{6\}}= \left[\e{T}_5\right]_{6,\{6\}}=0$ and 
   $\left[\tOMat{0}{6,\set{5,6}}{\e{T}_0}\right]_{5,\{6\}}
   =\left[\tOMat{0}{6,\set{5,6}}{\e{T}_0}\right]_{6,\{6\}} =0$. Therefore, we have $\left[\down{\f{T}}_5\right]_{:,\set{6}} = \mathbf{0}$, and 
   $\left[\up{\f{T}}_5\right]_{:,\set{6}}$ can be recovered using~\eqref{eq:proc:rec:B}. 
   
    Next, to decode column $\set{5}$ we first need to find its bottom entries,  first note that the entry, $\left[\f{T}_5\right]_{5,\{5\}}$ and $\left[\f{T}_5\right]_{6,\{5\}}$. It is easy to check that the entry at position $(5,\set{5})$ belongs to  $\ngpi{T}{5}$ and hence, 
    $\left[\e{T}_5\right]_{5,\{5\}} = \left[\tOMat{0}{6,\set{5,6}}{\e{T}_0}\right]_{5,\{5\}} =0$. The entry at position $(6,\set{5})$ belongs to
     $\pgpi{T}{5}$ and there is no secondary injection into this symbol, i.e, $\left[\tOMat{1}{6,\set {5,6}}{\e{T}_0}\right]_{6,\set{5}}= 0$. However, we find it is 
    equals to zero, i.e. $\left[\e{T}_5\right]_{6,\{5\}} = - w_{6,\set{5,6}}^{\nd{5}}$.
    This is because the parity equation in~\ref{eq:parityeq} for the parity group $\set{5,6}$ implies that 
    \begin{align*}
        \left[\e{T}_5\right]_{6,\{5\}} = - w_{6,\set{5,6}}^{\nd{5}} = - w_{5,\set{5,6}}^{\nd{5}}
        = -\left[\e{T}_5\right]_{5,\{6\}},
    \end{align*}
    and the latter symbol appears in column $\set{6}$, which is  already decoded. Thus, we have recovered $\left[\down{\f{T}}_5\right]_{:,\set{5}}$ and 
   $\left[\up{\f{T}}_5\right]_{:,\set{5}}$ can be extracted from~\eqref{eq:proc:rec:B}.

    For the remaining columns, $\set{4}$,  $\set{3}$, $\set{2}$, and $\set{1}$ we notice that no injection is performed into their lower entries, and the symbols in $\down{\e{T}}_5$ are all $w$-symbols, that can be recovered from the parity equation and other $w$-symbols which are previously decoded. Hence, the data recovery process continues in a similar fashion. 
	The next step consists of decoding message matrices of code segments with mode $m=2$, i.e., $\f{T}_1$, $\f{T}_2$, and $\f{T}_3$. We focus on decoding $\f{T}_2$ for the sake of illustration.
	Note that $\f{T}_2 = \e{T}_2 + \tOMat{0}{6,\set{6}}{\e{T}_0}$, and $\e{T}_2$ and $\tOMat{0}{6,\set{6}}{\e{T}_0}$ are given in~\eqref{eq:D2} and~\eqref{eq:O2}, respectively. The symbols in each group of  $\down{\e{T}_2}$ are given in \eqref{eq:group-T2}. In particular, the symbols in $\ngpi{T}{2}=\set{v_{5,\set{5,6}}^{\nd{2}},v_{6,\set{5,6}}^{\nd{2}}}$ are set zero. Hence, the lower part of column $\set{5,6}$ is zero, and $\left[\f{T}_2\right]_{:,\set{5,6}}$ can be easily decoded.
	
	The symbols in $\dgpi {T}{2}$ are injected into code segments $\e{T}_9$, $\e{T}_{10}$, and $\e{T}_{11}$, and hence, can be retrieved once $\f{T}_{9}$, $\f{T}_{10}$, and $\f{T}_{11}$ are decoded. For instance, consider the entry of $\e{T}_2$ at position $(5,\set{5,6})$, which is $w_{5,\set{4,5,6}}^{\nd{2}}$. For this position we have  $\sen A = \set{4,6} \cap \intv{4} = \set{4}$, and $\sen B = \set{4,6} \cap \intv{5:6} = \set{6}$. Therefore, this symbol is injected into the child code of $\e{T}_2$ with injection pair $(5,\set B) = (5,\set{6})$ at row $\max \sen A =4$ and column $\sen A \setminus \set{\max \sen A} = \varnothing$. That is, $\left[\f{T}_{11}\right]_{4,\varnothing}$, as can verified in~\eqref{eq:D11}. Therefore, having $\f{T}_{11}$ decoded, we also have $w_{5,\set{4,5,6}}^{\nd{2}}$.

	Next, consider column $\{3,4\}$ of $\f{T}_2$, that includes some elements from $\pgpi{T}{2}$ in its bottom part. In particular, we have  $\left[\f{T}_2\right]_{5,\set{3,4}}=w_{5,\set{3,4,5}}^{\nd{2}}+v_{6,\set{3,4,5,6}}^{\nd{0}}$, where $w_{5,\set{3,4,5}}^{\nd{2}} = \left[\f{T}_2\right]_{5,\set{3,4}} \in \pgpi{T}{2}$ and $v_{6,\set{3,4,5,6}}^{\nd{0}}$ is the symbol injected from $\e{T}_0$. The first ingredient $w_{5,\set{3,4,5}}^{\nd{2}}$ can be decoded from the parity equation for $w$-group $\set{3,4,5}$: 

	\[
	w_{5,\set{3,4,5}}^{\nd{2}} = - w_{3,\set{3,4,5}}^{\nd{2}} + w_{4,\set{3,4,5}}^{\nd{2}}.
	\]
	Note that $w_{3,\set{3,4,5}}^{\nd{2}}$ and   $w_{4,\set{3,4,5}}^{\nd{2}}$ appear in columns $\set{4,5}$ and $\set{3,5}$, which are already decoded before we arrive to decoding column $\set{4,5}$. In order to retrieve the second ingredient $v_{6,\set{3,4,5,6}}^{\nd{0}}$, we notice that the current injection is secondary, and the same symbol is also primarily injected to some other child code of $\e{T}_0$, which is decoded before we arrive to decoding $\e{T}_2$. To find the injection pair and position of such primary injection, we note that  $v_{6,\set{3,4,5,6}}^{\nd{0}}$ appears  in position $(y,\sen J) = (6,\set{3,4,5,6})$ of $\e{T}_0$  (up to a sign). Therefore, we have $\sen A' = \sen J \cap \intv{4} = \set{3,4}$ and $\sen B'=\sen J \cap \intv{5:6} = \set{5,6}$. Therefore, the injection pair for the primary injection of $v_{6,\set{3,4,5,6}}^{\nd{0}}$ is $(y,\sen B') = (6,\set{5,6})$ which corresponds to $\e{T}_5$ (see Fig.~\ref{fig:tree}). The position of injection is $(\max \sen A', \sen A' \setminus \set{\max \sen A'}) = (4,\set{3})$. Therefore, $v_{6,\set{3,4,5,6}}^{\nd{0}}$ can be found from $\left[\f{T}_5\right]_{4,\set{3}}$. This can be verified from~\eqref{eq:O5}. Note that $\md{\e{T}}_5 = 1 < 2 = \md{\e{T}_2}$, and hence $\f{T}_5$ is decoded before we arrive to decoding $\f{T}_5$.

	Repeating a similar procedure we can decode all columns of $\f{T}_2$. Once all three code segments of mode $2$, i.e., $\f{T}_1$, $\f{T}_2$, and $\f{T}_3$ are decoded, we can proceed to the recovery of the  root code segment $\f{T}_0$. 
	\end{ex} 
	
		\section{The Code Parameters}
	\label{sec:parameters}	
	The construction of the cascade code is described in Section~\ref{sec:supermessage}. However, the finding of the parameters of the resulting code (such as the size of super-message matrix $\MM$) requires an explicit evaluation of the number of code segments introduced throughout the injection process.

	\subsection{An Implicit Evaluating of the Code Parameters}
	Consider the construction of an $(n,k,d;\mu)$ cascade code. Recall that we start from a signed determinant code of mode $\mu$ and may introduce many code segments to complete the injection process. Let $\mdcnt_m$ be the total number of $(d;m)$ code segments of mode $m$ needed to complete all the injections. The super-message matrix $\MM$ is obtained by concatenating all code segments, which results in a matrix with $d$ rows and a total of $\sum_{m=0}^\mu \mdcnt_m \alpha_m$ columns. Therefore, the required per-node storage capacity of the resulting code is 
	\begin{align}
	\alpha(k,d;\mu) = \sum_{m=0}^{\mu} \mdcnt_m \alpha_{m}  = \sum_{m=0}^{\mu} \mdcnt_m \binom{d}{m}.
	\label{eq:alpha-in-L}
	\end{align}
	Similarly, the repair bandwidth of the resulting code is given by 
	\begin{align}
	\beta(k,d;\mu) = \sum_{m=0}^{\mu} \mdcnt_m \beta_{m}  = \sum_{m=0}^{\mu} \mdcnt_m \binom{d-1}{m-1}.
	\label{eq:beta-in-L}
	\end{align}
	The total number of data symbols stored in matrix $\MM$ is the sum of the number of data symbols in each code segment. A code segment $\e{P}$ of mode $m$ can store up to $F_m = m\binom{d+1}{m+1}$ symbols. However, recall that data symbols in group $\ngp{P}$ (see Definition~\ref{def:groups}) will be set to zero, which yields to a reduction in the number of stored symbols. For an $(n,d,d;m)$ signed determinant code used as a code segment, the reduction due to nulling the data symbols in $\ngp{P}$ can be found from 
		\begin{align}
	    N_m  &= \size{\ngp{P}}\nonumber\\
	    &=\big|\big\{(x,\sen A, \sen B)\! :  \sen A \subseteq [k], \sen B \subseteq [k+1:d], |\sen A \cup \sen B|= m, x\in \intv{k+1:d}, x \leq \sen \max {\sen B}, \sen A = \varnothing\big\}\big| \nonumber\\
		&= \size{\set{(x,\sen B)\!: \sen B \subseteq [k+1\!:d], |\sen B|=m, x \in \sen B}}\nonumber \\ 
&\phantom{=}\quad +|\{(x, \sen B) \!: \sen B \subseteq [k+1\!:\hspace{-1pt}d], |\sen B|=m, x \in [k+1\!:\hspace{-1pt}d]\hspace{-1pt}\setminus \hspace{-2pt}\sen B, x\leq \max \sen B\}|\nonumber\\
&= \size{\set{(x,\sen B)\!: \sen B \subseteq [k+1\!:d], |\sen B|=m, x \in \sen B}}\nonumber \\ 
&\phantom{=}\quad+|\{(x, \sen B) \!: \set{x}\cup\sen B \subseteq [k+1\!:\hspace{-1pt}d], |\set{x}\cup\sen B|=m+1, x \neq \max \set{x}\cup\sen B\}|  \nonumber\\
		&= m \binom{d-k}{m} + m \binom{d-k}{m+1}  \label{eq:G1_count:3:1}\\
		&= m \binom{d-k+1}{m+1}. \label{eq:G1_count:3}
		\end{align}
	Note that in~\eqref{eq:G1_count:3:1}, for the first set, we  choose a subset of size $m$ from $[k+1:d]$ for $\sen B$, and any element of $\sen B$ is a valid choice for $x$. Similarly, for the second set, we  choose a subset of size $m+1$ from $[k+1:d]$ for $\set{x}\cup\sen B$, and then $x$ can be any element of $\set{x}\cup\sen B$ except the largest one.  Finally, we used Pascal's identity in \eqref{eq:G1_count:3}.
	
	Subtracting the number of nulled symbol, the number of data symbols in a signed determinant code of mode $m$ will be $F_m - N_m$.  Therefore,  the total number of data symbols in the super-message matrix can be evaluated as
	\begin{align}
	F(k,d;\mu) &= \sum_{m=0}^{\mu} \mdcnt_m \left(F_{m}-N_{m}\right) = \sum_{m=0}^{\mu} \mdcnt_m m \left[\binom{d+1}{m+1} -\binom{d-k+1}{m+1}\right].
	\label{eq:F-in-L}
	\end{align}
	The rest of this section is dedicated to the evaluation of  parameters $\mdcnt_m$'s, the number of code segments of mode $m$. Then, we can  explicitly characterize the code parameters.  
	
	\subsection{The Number of  Child Matrices}
	Consider a code segment  $\e{P}$ of  $\md{\e{P}}=j$. Recall that each  child matrix of $\e{P}$ is a (signed) determinant codes of the form $\eSMat{Q}{m}{x, \sen B}{\e{P}}$. Hence, there is a one-to-one map between the child segments of $\e{P}$ and the injection pairs $(x,\sen B)$ satisfying the conditions of Remark~\ref{rem:injpair}.   From~\eqref{eq:mode:relation}, the mode of child matrix with injection pair $(x,\sen B)$ is given by 
	\[
	m = \md{\e{Q}} = \md{\e{P}} - |\sen B| - 1 = j - |\sen B| - 1.
	\]
	Therefore, the injection pairs leading to a child matrix of mode $m$ can be found from  
	\begin{align*}
        \left| \left\{(x, \sen B): \sen {B} \subseteq \intv {k+1:d},\size{\sen B} = j-m-1, x\in \intv {k+1:d}, x \leq \max \sen B \right\}\right|.
	\end{align*}
	We can distinguish two cases for the pairs $(x,\sen B)$ in this set. First, if $x \in \sen B$, then there are $\binom{d-k}{\size{\sen B}}$ choices of $\sen B$, and there are $\size{\sen B}$ choices for $x \in \sen B$. Second, if $x \notin \sen B$, then $\sen B \cup \set{x}$ is a subset of $\intv{k+1:d}$ of size $\size{\sen B}+1$. Therefore, there are $\binom{d-k}{\size{\sen B}+1}$ choices for $\sen B \cup \set{x}$. Moreover, for a given  $\sen B \cup \set{x}$, each entry except the largest one can be chosen to be $x$ (and the remaining ones will form $\sen B$).  
	Therefore, the total number of injection pairs leading to a child matrix of mode $m$ for a parent matrix of mode $j$ is given by 
	\begin{align}
	(j-m-1)\binom{d-k}{j-m-1} + (j-m-1)\binom{d-k}{j-m} = (j-m-1)\binom{d-k+1}{j-m}. \label{eq:number-of-children}
	\end{align}

	\subsection{Recursive Equations for $\mdcnt_m$ Parameters}

	The code construction starts from a determinant code of mode $\mu $ as  the root. Hence, we have  $\mdcnt_\mu=1$. 
	Let $m$ be an integer in $\{0,1,\dots, \mu-1\}$.  
	In general, a code segment of mode $m$ can be introduced by for injecting the  symbols of any parent matrix of mode $j$ with $j>m$. For a fixed $j$, the required number of child matrices of mode $m$ is given by~\eqref{eq:number-of-children}. 

	Note that child matrices are dedicated to their parent matrix, and cannot be shared by multiple parent matrices.  Therefore, if there are $t_j$ code segments of mode $j$ in the super-message matrix $\MM$, then the total number of  child matrices of mode $m$ is given by 
	\begin{align}
	\mdcnt_m = \sum_{j=m+1}^{\mu} \mdcnt_j (j-m-1) \binom{d-k+1}{j-m}.
	\label{eq:ell_rec}
	\end{align}
	This is a (reverse) recursive equation with starting point $\mdcnt_\mu=1$. Next, we solve this recursive equation to  obtain explicit expressions for $\mdcnt_m$'s.   
	\subsection{The Explicit Evaluation of $\mdcnt_m$ Parameters}
		\label{sec:t:eval}
	Note that we are only interested in the value of $\mdcnt_m$ defined by~\eqref{eq:ell_rec} for  $m\in\set{0,1,\cdots ,\mu-1}$, and we can assume  arbitrary values for $t_m$ when $m<0$ or $m>\mu$. 

	In particular, we expand the range of $m$ to include all integers, by defining dummy variables $\{\mdcnt_m: m<0 \textrm{ or } m>\mu\}$ such that 
	\begin{align}
	\mdcnt_m = \left\{
	\begin{array}{ll}
	1 & m=\mu,\\
	\sum_{j=m+1}^{\mu} \mdcnt_j (j-m-1) \binom{d-k+1}{j-m} & m \neq \mu.
	\end{array}
	\right.
	\label{eq:ell_rec_2}
	\end{align}
	Note that this immediately implies $\mdcnt_m = 0$ for $m>\mu$. However, it may lead to some non-trivial (and meaningless) values for $t_m$ with $m<0$. 
	We also define a sequence $\{p_m\}_{m=-\infty}^{\infty}$ as $p_m = \mdcnt_{\mu-m}$ for all $m\in \mathbb{Z}$. 
	The next lemma provides a non-recursive expression for sequence $\{p_m\}$.

	\begin{lm}
		The parameters in  sequence $\{p_m\}$ can be found from 
		\begin{align*}
		p_{m} = \sum_{\ell=0}^{m} (-1)^\ell (d-k)^{m-\ell} \binom{d-k+\ell-1}{\ell},
		\end{align*}
		for $0\leq m \leq \mu$.
		\label{lm:p_rec}	
	\end{lm}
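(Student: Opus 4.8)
The plan is to compute the ordinary generating function (equivalently, the $\cZ$-transform) of the sequence $\{p_m\}$ and read off its Taylor coefficients. First I would rewrite the recursion \eqref{eq:ell_rec_2} in terms of $p_m=\mdcnt_{\mu-m}$: substituting $j\mapsto \mu-j$ and $m\mapsto\mu-m$ turns the backward recursion for $\mdcnt$ into the forward recursion
\[
p_0=\mdcnt_\mu=1,\qquad p_m=\sum_{j=0}^{m-1}p_j\,(m-j-1)\binom{d-k+1}{m-j}\quad\text{for }m\ge 1 .
\]
(The dummy values of $\mdcnt_m$ for $m\notin\{0,\dots,\mu\}$ are exactly what make this hold for all $m\ge 1$, and they force $p_m=0$ for $m<0$, so $P(z):=\sum_{m\ge 0}p_m z^m$ is a genuine power series.) Writing $c_0=c_1=0$ and $c_n=(n-1)\binom{d-k+1}{n}$ for $n\ge 2$, the recursion above is the convolution $p_m=\sum_{j=0}^{m}p_j c_{m-j}$ for $m\ge 1$ (the $j=m$ term vanishes since $c_0=0$), while at $m=0$ one has $p_0=1$; hence $P(z)=1+P(z)C(z)$ with $C(z)=\sum_{n\ge 0}c_n z^n$, i.e. $P(z)=\bigl(1-C(z)\bigr)^{-1}$.

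Next I would evaluate $C(z)$ in closed form. Setting $r=d-k+1$ and splitting $(n-1)\binom{r}{n}=n\binom{r}{n}-\binom{r}{n}$, the identities $\sum_{n\ge 0}\binom{r}{n}z^n=(1+z)^r$ and $\sum_{n\ge 1}n\binom{r}{n}z^n=rz(1+z)^{r-1}$ give $C(z)=rz(1+z)^{r-1}-(1+z)^r+1$, so that
\[
1-C(z)=(1+z)^{r}-rz(1+z)^{r-1}=(1+z)^{r-1}\bigl[(1+z)-rz\bigr]=(1+z)^{d-k}\bigl(1-(d-k)z\bigr).
\]
Consequently $P(z)=(1+z)^{-(d-k)}\cdot\bigl(1-(d-k)z\bigr)^{-1}$. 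Expanding the two factors via the generalized binomial series, $(1+z)^{-(d-k)}=\sum_{\ell\ge 0}(-1)^{\ell}\binom{d-k+\ell-1}{\ell}z^{\ell}$ and $\bigl(1-(d-k)z\bigr)^{-1}=\sum_{i\ge 0}(d-k)^{i}z^{i}$, and multiplying, the coefficient of $z^{m}$ is
\[
p_m=\sum_{\ell=0}^{m}(-1)^{\ell}(d-k)^{m-\ell}\binom{d-k+\ell-1}{\ell},
\]
which is precisely the claimed expression; it holds for all $m\ge 0$, in particular for $0\le m\le\mu$.

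I expect the only delicate point to be the bookkeeping in the first paragraph: correctly converting the \emph{reverse} recursion of \eqref{eq:ell_rec_2} into a forward convolution and pinning down the boundary contributions ($p_0=1$, $c_0=c_1=0$) so that the functional equation $P=1+PC$ comes out exactly, with no spurious low-order terms. Everything downstream is a routine computation with $(1+z)^r$ and the binomial theorem. As a consistency check one can verify the formula directly for $m=0,1,2$ against the recursion, and if a generating-function-free argument were preferred, the closed form can instead be substituted back into the forward recursion and confirmed using Pascal's and Vandermonde's identities, though that route is messier.
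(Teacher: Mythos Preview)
Your argument is correct and is essentially the same as the paper's proof: both convert the backward recursion for $\mdcnt_m$ into a forward convolution for $p_m$, pass to generating functions to obtain $P(z)=(1+z)^{-(d-k)}\bigl(1-(d-k)z\bigr)^{-1}$ (the paper writes this in the equivalent $\cZ$-transform variable $z^{-1}$), and then read off the coefficients as a Cauchy product. The only cosmetic difference is that you set $c_0=0$ and derive $P=(1-C)^{-1}$, whereas the paper keeps $q_0=-1$ and writes $PQ=-1$; since $Q=C-1$ these are the same identity.
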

	We refer to Appendix~\ref{sec:lm:p_rec} for the proof of Lemma~\ref{lm:p_rec}.	
	
	It is clear that we can immediately find a non-recursive expression for $\mdcnt_m$ using the fact that  $\mdcnt_m = p_{\mu-m}$. However, it turns out that it is more convenient to work with sequence $p_m$, without directly evaluating $t_m$. 
	\subsection{The Explicit Evaluation of the Code Parameters}
	We need to show that the implicit code parameters obtained in \eqref{eq:alpha-in-L}, \eqref{eq:beta-in-L}, and \eqref{eq:F-in-L} are equal to  those claimed in Theorem~\ref{thm:main}. The following lemma will be helpful to simplify the derivation. The proof of the lemma can be found in Appendix~\ref{sec:lm:helping:iden}.

	\begin{lm}
		\label{lm:helping:iden}
		For integer numbers $ a,b \in \mathbb{Z}$, we have
		\begin{align}
		\sum_{m=-\infty}^{\infty} p_{m-\mu} \binom{d+a}{m+b}  = \sum_{m=-b}^{\mu}(d-k)^{\mu-m} \binom{k+a}{m+b}.
		\label{eq:lm:a-b}
		\end{align}
	\end{lm}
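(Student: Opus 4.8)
The plan is to reduce the identity to a known generating-function (or $\cZ$-transform) statement about the sequence $\{p_m\}$, using the closed form from Lemma~\ref{lm:p_rec}. First I would observe that both sides are finite sums: on the left, $p_{m-\mu}=t_{\mu-(m-\mu)}=t_{2\mu-m}$ — wait, more simply, $p_{m-\mu}$ vanishes unless $0\le m-\mu$, i.e. the summand is supported on $m\ge\mu$ after reindexing, but actually the cleaner route is to set $\ell=\mu-m'$ inside and track supports carefully; the binomial $\binom{d+a}{m+b}$ kills terms with $m+b<0$ or $m+b>d+a$, and $p_{m-\mu}$ (being $t_{2\mu-m}$) kills terms with $m>2\mu$ and possibly produces spurious values for $m-\mu<0$. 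So the first real step is a bookkeeping step: verify that the "dummy" values $p_j$ for $j<0$ do not contribute, because they are always multiplied by a binomial coefficient that is zero in the relevant range. I expect this to follow from the fact that $\binom{d+a}{m+b}=0$ once $m$ is small enough relative to $b$, combined with the growth constraint $b\ge -\mu$ implicitly needed for the RHS to make sense — but pinning down exactly why the negative-index garbage cancels is the part I would be most careful about.

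The heart of the argument is a convolution identity. From Lemma~\ref{lm:p_rec} we have $p_m=\sum_{\ell=0}^m(-1)^\ell(d-k)^{m-\ell}\binom{d-k+\ell-1}{\ell}$, which exhibits $\{p_m\}$ as the (Cauchy) convolution of the geometric sequence $(d-k)^m$ with the sequence $(-1)^\ell\binom{d-k+\ell-1}{\ell}$; the latter has ordinary generating function $(1+x)^{-(d-k)}$ and the former has o.g.f. $\frac{1}{1-(d-k)x}$, so $\sum_m p_m x^m=\frac{1}{(1-(d-k)x)(1+x)^{d-k}}$. The plan is then to multiply the identity we want by $x^{\mu+b}$ (or rather to recognize each side as a coefficient in a product of generating functions): the LHS is a coefficient extraction $[x^{\text{something}}]$ of $\big(\sum_m p_m x^m\big)\cdot(1+x)^{d+a}$, and the RHS is a coefficient extraction from $\frac{1}{1-(d-k)x}\cdot(1+x)^{k+a}$. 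Since $(1+x)^{d+a}=(1+x)^{d-k}\cdot(1+x)^{k+a}$, the factor $(1+x)^{d-k}$ cancels the $(1+x)^{-(d-k)}$ hidden in the o.g.f. of $\{p_m\}$, leaving exactly $\frac{(1+x)^{k+a}}{1-(d-k)x}$ on both sides — which is the whole content of the lemma. So after the cancellation the identity becomes the trivial statement that $\frac{1}{1-(d-k)x}=\sum_{s\ge0}(d-k)^s x^s$ convolved with $(1+x)^{k+a}$ reproduces the RHS term by term.

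To make this rigorous without generating-function sleight of hand, I would instead do it directly: substitute the closed form of $p_{m-\mu}$ into the LHS, interchange the two (finite) sums, and evaluate the inner sum over $m$ using the Vandermonde-type identity $\sum_{j}(-1)^{j}\binom{d-k+j-1}{j}\binom{d+a}{m+b-j}=\binom{k+a}{m+b}$ — this is precisely $[x^{m+b}]\,(1+x)^{-(d-k)}(1+x)^{d+a}$, a standard negative-binomial Vandermonde convolution. After that substitution the double sum collapses to $\sum (d-k)^{\mu-m}\binom{k+a}{m+b}$ with the stated range of $m$, once the outer summation index is renamed. The main obstacle, as above, is purely a matter of bookkeeping the summation limits and confirming that the formally-defined negative-index $p_j$'s never meet a nonzero binomial coefficient; I would handle this by showing that in every term where $m-\mu<0$ the accompanying factor $\binom{d+a}{m+b}$ is zero (since then $m+b<\mu-\mu+\ldots$ forces $m+b$ out of range whenever $b\ge-\mu$, which is the only regime in which the lemma is applied in Section~\ref{sec:parameters}). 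I would state that boundary hypothesis explicitly if needed, or note that the identity holds as a polynomial identity in $a$ and $b$ on the relevant support. Everything else is routine manipulation of binomial coefficients.
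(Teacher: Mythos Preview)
Your generating-function argument is essentially the paper's proof, which takes the $\cZ$-transform of each side (your ordinary generating function with $x=z^{-1}$), obtains $P(z)\cdot z^{b}(1+z^{-1})^{d+a}$ for the left-hand side, plugs in $P(z)=\dfrac{1}{(1-(d-k)z^{-1})(1+z^{-1})^{d-k}}$, and performs exactly the $(1+z^{-1})^{d-k}$ cancellation you describe to match the transform of the right-hand side. Your concern about ``negative-index garbage'' and summation supports is unnecessary in this route: once the two transforms agree on a common ROC (here $|z|>d-k$), the sequences are equal, and the paper never tracks supports at all. Your second, direct route via the negative-binomial Vandermonde identity $\sum_{j}(-1)^{j}\binom{d-k+j-1}{j}\binom{d+a}{N-j}=\binom{k+a}{N}$ is a genuine alternative not in the paper; it trades the transform machinery for elementary binomial manipulation, but then you really do need the index bookkeeping you anticipate, which the transform proof sidesteps entirely.
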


	Now, we are ready to evaluate the code parameters using Lemma~\ref{lm:helping:iden}. 
	
	\begin{itemize}
		\item Node storage size: Starting from~\eqref{eq:alpha-in-L}, we have
		\begin{align}
		\alpha(k,d;\mu) 
		&= \sum_{m=0}^{\mu} \mdcnt_m \alpha_{m} \nonumber\\ 
		&= \sum_{m=0}^{\mu} \mdcnt_m \binom{d}{m} \nonumber\\
		&= \sum_{m=0}^{\mu} p_{m-\mu} \binom{d}{m}\nonumber\\
		&= \sum_{m=-\infty}^{\infty} p_{m-\mu} \binom{d}{m} \label{eq:alpha:2}\\
		&=\sum_{m=0}^{\mu}(d-k)^{\mu-m} \binom{k}{m}.\label{eq:alpha:3}
		\end{align}
		In \eqref{eq:alpha:2} we used the fact that $\binom{d}{m}=0$ for  $m<0$ and $p_{m-\mu}=0$ for $m>\mu$, and \eqref{eq:alpha:3} follows from Lemma~\ref{lm:helping:iden} for $a=b=0$. Finally,~\eqref{eq:alpha:3} is exactly the node storage size claimed in Theorem~\ref{thm:main}.
		
		\item Repair bandwidth: Similarly, we can start from the repair bandwidth in~\eqref{eq:beta-in-L} and follow similar steps to  simplify it as
		\begin{align}
		\beta(k,d;\mu) &= \sum_{m=0}^{\mu} \mdcnt_m \beta_{m}  \nonumber\\ &= \sum_{m=0}^{\mu} \mdcnt_m \binom{d-1}{m-1} \nonumber\\ &= \sum_{m=0}^{\mu} p_{m-\mu} \binom{d-1}{m-1}\nonumber\\
		&= \sum_{m=-\infty}^{\infty} p_{m-\mu} \binom{d-1}{m-1}\nonumber\\
		&=\sum_{m=1}^{\mu}(d-k)^{\mu-m} \binom{k-1}{m-1}\label{eq:alpha:4}\\
		&=\sum_{m=0}^{\mu}(d-k)^{\mu-m} \binom{k-1}{m-1}.\label{eq:overall:bndwidth}
		\end{align}
		Note that we have  Lemma~\ref{lm:helping:iden} for $a=b=-1$ in \eqref{eq:alpha:4}, and~\eqref{eq:overall:bndwidth} holds since $\binom{k-1}{0-1}=0$. This proves that the repair bandwidth of the proposed code is identical to that claimed in Theorem~\ref{thm:main}.  
		
		\item File size: Starting from the implicit expression in~\eqref{eq:F-in-L}, we can write
    	\begin{align}
    		F(k,d;\mu)&=\sum_{m=0}^{\mu}  \mdcnt_m \left[ m\binom{d+1}{m+1} - m\binom{d-k+1}{m+1}\right]\nonumber\\
    		&=\sum_{m=-\infty}^{\infty} \mdcnt_m \left[m \binom{d+1}{m+1} - m\binom{d-k+1}{m+1}\right]
    		\label{eq:F:3}\\
    		&=\sum_{m=-\infty}^{\infty} p_{\mu-m} \left[ (m+1) \binom{d+1}{m+1} -  \binom{d+1}{m+1} -(m+1) \binom{d-k+1}{m+1} +  \binom{d-k+1}{m+1}\right]  \label{eq:F:4}\\ 
    		&=\sum_{m=-\infty}^{\infty} p_{\mu-m} \left[ (d+1) \binom{d}{m} - \binom{d+1}{m+1} -(d-k+1) \binom{d-k}{m} +  \binom{d-k+1}{m+1}\right] \label{eq:F:4:1}\\
    		&=(d+1)\sum_{m=0}^{\mu}(d-k)^{\mu-m} \binom{k}{m} -\sum_{m=-1}^{\mu}(d-k)^{\mu-m} \binom{k+1}{m+1}\nonumber \\ &\quad- (d-k+1)\sum_{m=0}^{\mu}(d-k)^{\mu-m} \binom{k-k}{m}+\sum_{m=-1}^{\mu}(d-k)^{\mu-m} \binom{k-k+1}{m+1} \label{eq:F:5}\\
    		&= (d+1)\sum_{m=0}^{\mu}(d-k)^{\mu-m} \binom{k}{m} 	-\sum_{m=-1}^{\mu}(d-k)^{\mu-m} \binom{k+1}{m+1}  \nonumber\\
    		&\hspace{12mm}- \left[(d-k+1) (d-k)^{\mu}\right] +  \left[  (d-k)^{\mu+1} +  (d-k)^{\mu}\right]  
    		\label{eq:F:6}\\
    		&= (d+1)\sum_{m=0}^{\mu}(d-k)^{\mu-m} \binom{k}{m} -\sum_{m=-1}^{\mu}(d-k)^{\mu-m} \left[\binom{k}{m} + \binom{k}{m+1}\right]\label{eq:F:6:1}\\
    		&= d \sum_{m=0}^{\mu}(d-k)^{\mu-m} \binom{k}{m} -\sum_{m=0}^{\mu+1}(d-k)^{\mu+1-m} \binom{k}{m} \label{eq:F:7}\\
    		&= d \sum_{m=0}^{\mu}(d-k)^{\mu-m} \binom{k}{m} - (d-k) \sum_{m=0}^{\mu}(d-k)^{\mu-m} \binom{k}{m} - (d-k)^{\mu+1 - (\mu+1)}\binom{k}{\mu+1} \label{eq:F:8}\\
    		&= \sum_{m=0}^{\mu} k(d-k)^{\mu-m} \binom{k}{m}-\binom{k}{\mu+1}. \label{eq:F:9}
    	\end{align} 
		Note that~\eqref{eq:F:3} holds since $\mdcnt_m=0$ for $m>\mu$ and $\left[ \binom{d+1}{m+1} - \binom{d-k+1}{m+1}\right]=0$ for $m< 0$, in \eqref{eq:F:4} we used $\mdcnt_m = p_{\mu-m}$ and some manipulation  $m =(m+1)-1$, \eqref{eq:F:4:1} holds since $a\binom{b}{a} = b \binom{b-1}{a-1}$, and  \eqref{eq:F:5} follows from four times evaluation of Lemma~\ref{lm:helping:iden}  for $(a,b)=(0,0)$, $(a,b)=(1,1)$, $(a,b)=(-k,0)$, and $(a,b)=(-k+1,1)$. The equality in \eqref{eq:F:6} holds since the terms in the third summation in \eqref{eq:F:5} are zero except for $m=0$, and similarly the terms in the fourth summation are zero except for $m=-1,0$. 
		Then, the third and forth terms in~\eqref{eq:F:6} get canceled, and we used the identity $\binom{k+1}{m+1} = \binom{k}{m}+ \binom{k}{m+1}$  in~\eqref{eq:F:6:1}. 
		In~\eqref{eq:F:7}, we used the fact that $\binom{k}{-1}=0$ and applied a unit-shift on the variable of the last summation. 
		The second summation over $m\in\{0,\dots, \mu+1\}$ in~\eqref{eq:F:7} is decomposed into $m\in\{0,\dots,\mu\}$ and $m=\mu+1$ in~\eqref{eq:F:8}. This leads to~\eqref{eq:F:9}, which is the storage capacity of the code, as claimed in Theorem~\ref{thm:main}. 
	\end{itemize}

\section{Discussion}
	\label{sec:conclusion}
	
	\subsection{Cascade Codes vs. Product-Matrix Codes}
	In this section, we compare the code construction proposed in this paper to the product-matrix (PM) code  introduced in  \cite{rashmi2011optimal}. Since both code constructions are linear, they both can be written as a product of the encoder matrix and  a message matrix, i.e.,  $\cC = \enc \cdot \MM$ as the \cite{rashmi2011optimal}. A natural question to ask is \emph{whether the two codes are \emph{equivalent} at MBR and MSR points} (recall that the construction of PM codes is limited to the extreme points, namely, MBR and MSR). In other words, one may wonder if \emph{the two codes structurally equivalent, in spite of their different constructions}. We give a \emph{negative} answer to this question, by proving some fundamental differences between the two codes. Here, we rely on the standard notion of \emph{equivalence} (e.g., see   \cite{shah2012interference})  to check if two codes are convertible to each other. 
	Two linear codes $\mathcal{C}$ and $\mathcal{C}'$ are called equivalent~\cite{shah2012interference}  if
	$\mathcal{C}'$ can be represented in
	terms of $\mathcal{C}$ by 
	\begin{enumerate}
		\item a change of basis of the vector
		space generated by the message symbols~(i.e., a remapping of the message symbols), and
		\item a change of basis
		of the column-spaces of the nodal generator matrices~(i.e., a remapping of the symbols stored within a node). 
		\item scale parameters of $(\alpha, \beta, F)$ of codes by an integer factor so that both codes have the same parameters.
	\end{enumerate}

	\begin{table*}
		\begin{tabular}{|>{\centering\arraybackslash}p{75mm}|>{\centering\arraybackslash}p{75mm}|}
			\hline
			Product-Matrix Code~\cite{rashmi2011optimal} & Cascade Code [this work] \\ \hline \hline
			Only for extreme points: MBR and MSR & For the entire storage-bandwidth trade-off \\  \hline
			MSR code construction: only for $d \geq 2k-2$ & MSR code construction:  all parameter sets \\ \hline
			MSR code parameters: & MSR code parameters: \\
			$(\alpha, \beta, F)  = \left(d-k+1,1,k(d-k+1)\right)$ & $(\alpha, \beta, F)  = \left((d-k+1)^k,(d-k+1)^{k-1},k(d-k+1)^k\right)$ \\ \hline
			Different structure and requirements for the & \multirow{2}{*}{Universal code construction for the entire trade-off}\\
			encoder matrix of MBR and MSR codes &  \\ \hline  
			Different repair mechanism for MBR and MSR codes
			& A generic repair mechanism with an encoder matrix $\mathbf{\Xi}$ for all corner points on the trade-off curve \\ \hline
		\end{tabular}\\[2mm]
		\caption{A comparison between the product-matrix codes and cascade codes.}
		\label{tab:Cascade-vs-PM}
	\end{table*}

	It turns out that the cascade code generated for the MBR point ($\mu=1$) is equivalent (and even identical) to  the one introduced in~\cite{rashmi2011optimal}. However, the codes generated for the MSR point using cascade construction and PM construction are  \emph{fundamentally} different, and not equivalent.  
	To show this, we focus on MSR codes for a distributed storage system with specific parameters of  $(n,k,d=2k-2)$, for some $k>2$. The parameters of an MSR cascade code can be found from Theorem~\ref{thm:main} by setting $\mu=k-1,$ as $(\alpha,\beta, F) = \left((k-1)^k, (k-1)^{k-1}, k(k-1)^k\right)$. On the other hand, the parameters of the PM code are given by $(\alpha', \beta', F') = (k-1, 1, k(k-1))$. So, in order for a fair comparison, one needs to concatenate $N=(k-1)^{k-1}$ copies of \emph{MSR PM codes} with independent message matrices ${\bf M}'_1, {\bf M}'_2, \dots, {\bf M}'_{N}$ to obtain a code with the same parameters as the cascade code. Let $\mathcal{C}$ and $\mathcal{C}'$ be the resulting cascade and PM codes for these parameters, respectively. We have
	\begin{align*}
	    \mathcal{C} = \mathbf{\Psi } \cdot \mathbf{M}, \qquad \mathcal{C}' = \mathbf{\Psi }' \cdot \begin{bmatrix} \mathbf{M}_1 & \cdots & \mathbf{M}_N\end{bmatrix},
	\end{align*}
	where $\mathbf{\Psi }$ and $\mathbf{\Psi }'$ are the encoder matrices for the two constructions. While the conditions for the encoder matrices $\mathbf{\Psi }$ and $\mathbf{\Psi }'$ are in general different, both sets of requirements are satisfied by the choice of  
	 Vandermonde matrices. 
		
	Let us focus on the repair data sent by the helper nodes in each code. 
	We denote by $\mathsf{Rep}_{h\rightarrow f}$ and $\mathsf{Rep}'_{h\rightarrow f}$ the \emph{vector space spanned by  the repair symbols} sent by a helper node $h$ to repair a failed node $f$ for the cascade and PM codes, respectively. It is clear that 
	$\dim (\mathsf{Rep}_{h\rightarrow f} )= \beta = (k-1)^{k-1}$ and $\dim (\mathsf{Rep}'_{h\rightarrow f}) = N \beta' = (k-1)^{k-1}$, i.e., both spaces have identical dimensions. However, the intersection of two of such vector spaces has a dimension, which is different for the two code constructions of interest.  
	This is formally highlighted in the following proposition. 
	\begin{prop}
		For an MSR cascade code for an $(n,k,d=2k-2)$ distributed storage system with parameters $(\alpha,\beta, F) = \left((k-1)^k, (k-1)^{k-1}, k(k-1)^k\right)$, and three distinct nodes $h$, $f$, and $g$, we have 
		\begin{align*}
		\dim (\mathsf{Rep}_{h\rightarrow f} \cap \mathsf{Rep}_{h\rightarrow g})=\sum_{m=0}^{\mu}(d-k)^{\mu-m} \left[2\binom{k-1}{m-1} -\binom{k}{m} - \binom{k-2}{m}\right], 
		\end{align*}	
		while for a concatenation of $N=(k-1)^{k-1}$ independent copies of MSR PM codes we have 
		\begin{align}
		\dim
		(\mathsf{Rep}'_{h\rightarrow f} \cap \mathsf{Rep}'_{h\rightarrow g})= 0. \nonumber
		\end{align}	
		\label{prop:diff}
	\end{prop}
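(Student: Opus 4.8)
The plan is to analyze the two repair-space intersections separately, treating the PM case first (which is the easier direction) and then the cascade case.

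For the PM MSR code, recall that the repair data sent by helper $h$ to repair node $f$ is $\mathsf{Rep}'_{h\to f} = \mathrm{span}\{\bm{\psi}'_h \cdot \mathbf{M}'_i \cdot \bm{\psi}'^{\mathsf{T}}_f : i\in[N]\}$, where for each MSR PM copy the repair symbol is a scalar of the form $\bm{\psi}'_h \mathbf{M}'_i \bm{\psi}'^{\mathsf{T}}_f$ built from the Vandermonde row $\bm{\psi}'_f$ associated with node $f$. The key point is that, across the $N$ independent copies, the $N$ scalars $\{\bm{\psi}'_h \mathbf{M}'_i \bm{\psi}'^{\mathsf{T}}_f\}_i$ are linearly independent combinations of mutually disjoint sets of message symbols (the $\mathbf{M}'_i$ are independent), so $\mathsf{Rep}'_{h\to f}$ decomposes as a direct sum $\bigoplus_i V'_{i,f}$ with $\dim V'_{i,f}=\beta'=1$, and $V'_{i,f}$ involves only symbols from $\mathbf{M}'_i$. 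Since $\mathsf{Rep}'_{h\to g}=\bigoplus_i V'_{i,g}$ with the same block structure, the intersection is $\bigoplus_i (V'_{i,f}\cap V'_{i,g})$, and it suffices to show $V'_{i,f}\cap V'_{i,g}=\{0\}$ for a single copy. This is a known structural fact about PM MSR codes: the one-dimensional repair space of helper $h$ toward $f$ is $\mathrm{span}\{\bm{\psi}'_h \mathbf{M}' \bm{\psi}'^{\mathsf{T}}_f\}$, and the vector $\mathbf{M}'\bm{\psi}'^{\mathsf{T}}_f$ is a generic (non-degenerate) linear form in the message symbols whose coefficient vector changes with $f$; two distinct $f\neq g$ give linearly independent coefficient vectors (Vandermonde genericity), so the two lines are distinct and intersect trivially. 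Hence $\dim(\mathsf{Rep}'_{h\to f}\cap\mathsf{Rep}'_{h\to g})=0$.

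For the cascade code the argument is considerably more involved and is the main obstacle. Here $\mathsf{Rep}_{h\to f}$ is the span of the union, over all code segments $\f{Q}$ in $\MM$, of the rows of $\enc_{h,:}\cdot \f{Q}\cdot \repMat{f}{\md{\f{Q}}}$; by Proposition~\ref{prop:beta} this has dimension $\sum_m \mdcnt_m \binom{d-1}{m-1}=\beta$. The plan is to compute the intersection segment by segment. For a single $(d;m)$ signed determinant segment, one must identify the subspace of the repair space that is \emph{common} to the repair of $f$ and of $g$. The repair space of node $f$ for a determinant segment is $\f{Q}\cdot\repMat{f}{m}$, whose column space (after the helper-side rank reduction to $\beta_m=\binom{d-1}{m-1}$) can be described explicitly: from \eqref{eq:rep:enc} the matrix $\repMat{f}{m}$ has nonzero entries $\pm\psi_{f,y}$, so the $\beta_m$-dimensional row space of $\enc_{h,:}\cdot\f{Q}\cdot\repMat{f}{m}$ is spanned by symbol-combinations indexed by $(m-1)$-subsets of $[d]$, with coefficients depending on $f$ only through the entries $\psi_{f,y}$. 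I would show that the intersection over $f$ and $g$ corresponds exactly to those linear combinations whose dependence on the node identity factors out, which by a counting argument on subsets of the ``common'' coordinates yields a contribution of $\bigl[2\binom{d-1}{m-1}-\binom{d}{m}-\binom{d-2}{m}\bigr]$ per segment of mode $m$. Summing over all segments with the multiplicities $\mdcnt_m=p_{\mu-m}$ and invoking Lemma~\ref{lm:helping:iden} (with appropriate $(a,b)$) to convert the $\mdcnt_m$-weighted sum of binomials in $d$ into the claimed $(d-k)^{\mu-m}$-weighted sum of binomials in $k$, produces the stated formula $\sum_{m=0}^{\mu}(d-k)^{\mu-m}\bigl[2\binom{k-1}{m-1}-\binom{k}{m}-\binom{k-2}{m}\bigr]$.

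The delicate points, which I would treat carefully, are: (i) justifying that the total repair intersection splits as a direct sum over segments — this needs that distinct segments $\f{T}_i$ carry disjoint message symbols \emph{after} injection, which is true because injections only add parent symbols into parities of a \emph{single} child and injected symbols are never forwarded further (the ``no multi-hop injection'' remark), so one can track which segment each symbol ``originates'' in; (ii) handling the injection terms: for a child segment $\f{Q}$ the repair space $\rep{f}{\f{Q}}=\erep{f}{\e{Q}}+\boldsymbol{\Delta}\cdot\repMat{f}{m}$ is contaminated by parent symbols, but by Proposition~\ref{prop:nkd:repair} this contamination is exactly $[\rep{f}{\f{P}}]_{x,\cdot}$, already accounted for in the parent's repair space, so it does not change the dimension count — still, one must verify no spurious extra intersection is created; (iii) the per-segment count $2\binom{d-1}{m-1}-\binom{d}{m}-\binom{d-2}{m}$: I would derive it via an inclusion–exclusion on the $(m-1)$-subset index $\sen J$ of $\repMat{f}{m}$'s columns, splitting according to whether the ``pivot'' index (the unique $y$ with $\sen I\setminus\sen J=\{y\}$ in \eqref{eq:rep:enc}) can be chosen in the overlap of the supports of $f$ and $g$; a clean way is to observe $\binom{d}{m}+\binom{d-2}{m}-2\binom{d-1}{m-1}$ equals $\binom{d-2}{m-2}+\binom{d-2}{m}-\binom{d-2}{m-1}$-type differences by Pascal, which I would then reconcile with the direct subspace description. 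Step (iii) — getting the exact per-segment intersection dimension right, rather than merely an upper or lower bound — is where I expect the real work to lie; once it is in hand, the summation is a mechanical application of the lemmas from Section~\ref{sec:parameters}. Finally I would note $\dim(\mathsf{Rep}_{h\to f}\cap\mathsf{Rep}_{h\to g})$ is strictly positive for $k>2$ (e.g.\ the mode-$0$ and mode-$1$ contributions already give a positive term), which together with the vanishing PM intersection establishes the non-equivalence claimed around Table~\ref{tab:Cascade-vs-PM}.
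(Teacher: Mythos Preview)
Your overall plan for the cascade half is the same as the paper's: decompose $\mathsf{Rep}_{h\to f}$ segment by segment, use a per-segment overlap formula $2\binom{d-1}{m-1}-\binom{d}{m}-\binom{d-2}{m}$, then push the $\mdcnt_m$-weighted sum through Lemma~\ref{lm:helping:iden}. The paper does not derive that per-segment count; it simply cites \cite[Theorem~2]{elyasi2018newndd}, so the inclusion--exclusion you outline in (iii) is work the paper offloads to the companion paper. Your concerns (i) and (ii) about whether injections and segment coupling could alter the dimension count are legitimate, but the paper's proof passes over them in a single sentence (treating each modified segment as if it were an ordinary determinant code for the purpose of quoting the cited theorem).

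The PM half is where the approaches genuinely diverge. You argue directly from the Vandermonde structure that the two one-dimensional repair lines $\mathrm{span}\{\bm{\psi}'_h \mathbf{M}' \bm{\psi}'^{\mathsf{T}}_f\}$ and $\mathrm{span}\{\bm{\psi}'_h \mathbf{M}' \bm{\psi}'^{\mathsf{T}}_g\}$ are distinct. The paper instead gives a short information-theoretic contradiction: for a single PM copy the intersection dimension is either $0$ or $1$; if it were $1$ then the two one-dimensional spaces coincide, and by symmetry $\mathsf{Rep}'_{h\to f}=\mathsf{Rep}'_{h\to g}$ for \emph{every} pair of failed nodes, so any $d$ helpers each sending $\beta'=1$ symbol would suffice to repair all $k$ nodes of any $k$-subset, forcing $F'=k(k-1)\le d\beta'=2(k-1)$, impossible for $k>2$. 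This argument is cleaner, needs no knowledge of the internal PM repair mechanism (no explicit $\mathbf{M}'\bm{\psi}'^{\mathsf{T}}_f$), and in fact applies to \emph{any} MSR code with $\beta'=1$ at these parameters --- which is the right level of generality, since the proposition's purpose is a non-equivalence statement. Your direct argument is correct in spirit but requires tracking the constrained structure of the PM message matrix (two stacked symmetric blocks for $d=2k-2$) to be fully rigorous; the paper's route sidesteps that entirely.
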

	\begin{proof}[Proof of Proposition~\ref{prop:diff}]
		Recall that the repair space $\mathsf{Rep}_{h\rightarrow f}$ in a cascade code is simply a concatenation of repair data for each code segment, which is an $(n,d,d;m)$ (modified) signed determinant code.  
		For each code segment, we can use the result of~\cite{elyasi2018newndd} to evaluate  the overlap between  subspaces spanned by the repair symbols sent to two failed nodes. For a code segment with mode $m$, the dimension of the overlap between the subspaces spanned by the repair symbols sent from $h$ to $f$ and $g$ is given by  $2\binom{d-1}{m-1} - \left[\binom{d}{m} + \binom{d-2}{m}\right]$, as reported in \cite[Theorem~2]{elyasi2018newndd}. Recall that there are $\mdcnt_m$ code segments of mode $m$, where $\mdcnt_m$ is evaluated in Section~\ref{sec:t:eval}. Hence, summing up over all code segments, we get 
		\begin{align}
		\dim (\mathsf{Rep}_{h\rightarrow f} \cap \mathsf{Rep}_{h\rightarrow g})  &= \sum_{m=0}^{\mu} \mdcnt_m \left[2\binom{d-1}{m-1} - \binom{d}{m} + \binom{d-2}{m} \right] \nonumber \\
		&= \sum_{m=-\infty}^{\infty} p_{\mu-m} \left[2\binom{d-1}{m-1} - \binom{d}{m} + \binom{d-2}{m} \right] \nonumber\\
		&=\sum_{m=0}^{\mu}(d-k)^{\mu-m} \left[2\binom{k-1}{m-1} -\binom{k}{m} - \binom{k-2}{m}\right]. \label{eq:prop:diff}
		\end{align}
		Note that we have used Lemma~\ref{lm:helping:iden} 
		with $(a,b)=(-1,-1)$, $(a,b)=(0,0)$, and $(a,b)=(-2,0)$ in the last equation in 
		with where in~\eqref{eq:prop:diff}. 
		
		For the MSR PM code, we note that it is obtained by concatenating $N$ \emph{independent} copies of \emph{little} PM codes with parameters $(\alpha', \beta', F') = (k-1, 1, k(k-1))$.
		 For each  little PM code, the dimension of the overlap between the subspaces of interest is an integer number, which can be either $0$ or $1$. If the latter holds, then the subspace spanned by the repair 
		 symbols sent from $h$ to $f$ and $g$ are identical,  i.e., $\mathsf{Rep}'_{h\rightarrow f} = \mathsf{Rep}'_{h\rightarrow g}$. By symmetry, this should hold for any other failed node. Now, consider a set of helper nodes $\sen H$ with $|\sen H|=d$, and a set of failed nodes with $\sen F$ with $|\sen F|=k$. We can repair the content of all the nodes in $\sen F$ by sending only $\beta=1$ symbol from each of the helper nodes in $\sen H$, since  $\mathsf{Rep}'_{h\rightarrow f} = \mathsf{Rep}'_{h\rightarrow g}$ for any $f,g\in \sen F$ and any $h\in \sen H$. Moreover,  the entire information of the little PM code should be recoverable for the content of nodes in $\sen F$, since $|\sen F| = k$. This implies 
		\begin{align*}
		k(k-1)  = F \leq \sum_{h\in \sen H} \dim ( \mathsf{Rep}'_{h\rightarrow f}) = d\beta = 2(k-1),    
		\end{align*}
		which is in contradiction with $k>2$. Therefore, for each little MSR PM code we have $\dim(\mathsf{Rep}'_{h\rightarrow f} \cap \mathsf{Rep}'_{h\rightarrow g}) = 0$.  Summing up over $N$ independent copies,  we obtain  the claim of the proposition. 
	\end{proof}
	An immediate consequence of this proposition is that cascade codes and PM codes are not equivalent, and cannot be converted to each other by any scaling and change of bases.
	
	In general, cascade codes and PM codes are fundamentally different.  Some of the main distinctions between the two constructions are highlighted in Table~\ref{tab:Cascade-vs-PM}.
	
	\subsection{The Role of Redundancy in Cascade Codes}
	The parity symbols in the message matrix of a cascade code play a critical role to guarantee the properties of the code.  Such parity symbols were initially introduced for determinant codes \cite{elyasi2016determinant, elyasi2018newndd} in order to facilitate the repair process. However, they play no role in data recovery when $d=k$. While the redundancy introduced by such parity symbols can affect the overall storage capacity of the system,   the lower bounds in \cite{elyasi2015linear, prakash2015storage, duursma2015shortened} show that determinant codes are optimum for for DSS with  $d=k$. 
	
	In cascade codes, however, these parity  (redundant) symbols play two crucial roles: (1) they help with the repair mechanism, similar to their role in determinant codes, and (2) they make the data recovery possible, in spite of the fact that the data collector only has access to the coded content of $k<d$ nodes. More intuitively, this redundancy is used to provide a backup copy for  symbols of a determinant code that could not be retrieved, if the data collector could only observe the content of $k<d$ nodes. 
	
	It is easy to verify from the definition of injection process in~\eqref{eq:inj:mat} that  all the parity symbols of a child matrix  are filled with an injection from the parent code. This suggests that this redundancy is fully exploited, and thus, the proposed code has no further room for improvement. This is the foundation of our conjecture, that is, cascades codes are \emph{optimum} exact regenerating codes for any set of parameters $(n,k,d)$, and achieve the optimum storage-bandwidth trade-off.

	On the other hand, the proposed construction universally achieves the optimum trade-off of any system parameters with a known lower bound: Those are MBR codes (see Corollary~\ref{cor:MSR} and\cite{dimakis2010network}), MSR codes (see Corollary~\ref{cor:MSR} and \cite{dimakis2010network}),  an interior operating point on the cut-set bound (see Corollary~\ref{cor:mu=k-1} and \cite{dimakis2010network}), linear  codes with $k=d$  \cite{elyasi2015linear, prakash2015storage, duursma2015shortened}, and an optimum code for an $(n,k,d)=(5,3,4)$ system, for which a matching lower bound is provided in \cite{TianSCITL}. These facts altogether support the conjecture regarding the optimality of the codes in general. 
	
	\subsection{Future Work}
	The main remaining open problem to be addressed is to provide a lower bound for the trade-off between the storage and repair-bandwidth of exact-repair regenerating codes. 
	As mentioned above, we conjecture  that a tight lower bound will match with the trade-off achieved by cascade codes, indicating that the proposed codes are optimal. 
	
	The proposed cascade codes, however, can be improved from several different aspects. One major concern  is in regard with  the sub-packetization. Even though the sub-packetization of the proposed codes is independent of the number of nodes $n$, it is  exponential in parameter $k$. An interesting question is a whether an identical trade-off can be achieved using a code with a sub-packetization that is sub-exponential and independent of $n$. Multiple failures repair is another interesting problem to be studied. More importantly, the problem of dynamic repair \cite{ye2017explicit,mahdaviani2018bandwidth,mahdaviani2018product}, referring to the flexibility of varying the number of helper nodes with $d\in [k:n-1]$ (without changing the underlying code) is of both practical and theoretical interest.  
	
	Recently, there has been considerable attention to  \emph{clustered} distributed storage systems \cite{sohn2018capacity,sohn2018class}. A modified version of the proposed construction might be applicable to such clustered systems. Finally, the exact-repair regenerating codes can be viewed in the context of interference alignment problem (see for e.g. \cite{suh2011exact}), where the repair scenario is equivalent to aligning and canceling the interference (mismatch) between a failed symbol and a coded symbol of a helper node. Therefore, the techniques and results developed in  this paper might be also applicable to the design of interference alignment codes for wireless communication.


	\appendices
	
	\section{Proof of Corollaries~\ref{cor:MSR} and ~\ref{cor:mu=k-1}}
	\label{app:proof:cor}
	\begin{proof}[Proof of Corollary~\ref{cor:MSR}]
		For the MBR point, corresponding to $\mu=1$,  we have
		\begin{align}
		\begin{split}
		\alpha_{\mathsf{MBR}} \triangleq\alpha(k,d;1) &= \sum_{m=0} ^{1} (d-k)^{1-m} \binom{k}{m} =(d-k)+k=d,\\
		\beta_{\mathsf{MBR}} \triangleq\beta(k,d;\mu) &= \sum_{m=0} ^{1} (d-k)^{1-m} \binom{k-1}{m-1}=1,\\
		F_{\mathsf{MBR}} \triangleq F(k,d;\mu) &= \sum_{m=0} ^{1} k(d-k)^{1-m} \binom{k}{m}-\binom{k}{1+1}=kd-\binom{k}{2}=\frac{k(2d-k+1)}{2}.
		\end{split}
		\end{align}
		This triple satisfies $\left(\frac{\alpha_{\mathsf{MBR}}}{F_{\mathsf{MBR}}}, \frac{\beta_{\mathsf{MBR}}}{F_{\mathsf{MBR}}}\right)=\left(\frac{2d}{k(2d-k+1)}, \frac{2}{k(2d-k+1)}\right)$, which is the characteristic of the MBR point \cite{dimakis2010network}. 
		
		Similarly, for $\mu=k$ we have
		\begin{align}
		\begin{split}
		\alpha_{\mathsf{MSR}} \triangleq \alpha(k,d;\mu) &= \sum_{m=0} ^{k} (d-k)^{k-m} \binom{k}{m} =(d-k+1)^k,\\
		\beta_{\mathsf{MSR}} \triangleq \beta(k,d;\mu) &= \sum_{m=0} ^{k} (d-k)^{k-m} \binom{k-1}{m-1} = \sum_{m=0} ^{k-1} (d-k)^{k-1-m} \binom{k-1}{m} = (d-k+1)^{k-1},\\
		F_{\mathsf{MSR}} \triangleq F(k,d;\mu) &= \sum_{m=0} ^{k} k(d-k)^{k-m} \binom{k}{m}-\binom{k}{k+1}=k(d-k+1)^k,\\
		\end{split}
		\end{align}
		where we shifted the summation variable in the  evaluation of $\beta_{\mathsf{MSR}} $, and  used the fact that $\binom{k}{k+1}=0$ 
		 to simplify $F_{\mathsf{MSR}}$.  These parameters satisfy $\left(\frac{\alpha_{\mathsf{MSR}}}{F_{\mathsf{MSR}}}, \frac{\beta_{\mathsf{MSR}}}{F_{\mathsf{MSR}}}\right)=\left(\frac{1}{k}, \frac{1}{k(d-k+1)}\right)$, which characterizes  the MSR point \cite{dimakis2010network}.
	\end{proof}
	
	\begin{proof}[Proof of Corollary~\ref{cor:mu=k-1}]
		The cut-set bound in~\eqref{eq:func:tradeoff} given by $F \leq \sum_{i=1}^{k} \min(\alpha,(d-i+1)\beta)$ reduces to 
		\[F \leq (k-1) \alpha + (d-k+1)\beta,\] 
		for $(d-k+1)\beta \leq \alpha \leq (d-k) \beta$.
		The latter bound is satisfied with equality by the parameters of a cascade code with mode $\mu=k-1$. To show this claim, we use the expressions for $\alpha$ and $\beta$ in~\eqref{eq:params}, and write
		\begin{align}
		(k-1)\alpha(k,d;k-1) +(d-k+1)\beta(k,d;k-1)
		&= k\alpha(k,d;k-1) +(d-k)\beta(k,d;k-1) \nonumber \\ 
		& \hspace{20pt} - \left[\alpha(k,d;k-1) - \beta(k,d;k-1)\right]\nonumber\\	
		&=\sum_{m=0}^{k-1} k(d-k)^{k-1-m}\binom{k}{m}+\sum_{m=0}^{k-1} (d-k)^{k-m} \binom{k-1}{m-1} \nonumber \\ &\hspace{20pt} - \sum_{m=0}^{k-1} (d-k)^{k-1-m}   \left[\binom{k}{m}-\binom{k-1}{m-1}\right] \nonumber\\
    	&=\sum_{m=0}^{k-1} k(d-k)^{k-1-m}\binom{k}{m}+\sum_{m=0}^{k-1} (d-k)^{k-m} \binom{k-1}{m-1} \nonumber\\ &\hspace{20pt}- \sum_{m=0}^{k-1} (d-k)^{k-1-m}   \binom{k-1}{m}\label{eq:cutset:1} \\
    	&=\sum_{m=0}^{k-1} k(d-k)^{k-1-m}\binom{k}{m}+\sum_{m=0}^{k-1} (d-k)^{k-m} \binom{k-1}{m-1} \nonumber\\ &\hspace{20pt} -\sum_{m=1}^{k} (d-k)^{k-m}\binom{k-1}{m-1}\nonumber\\
    	&=\sum_{m=0}^{k-1} k(d-k)^{k-1-m}\binom{k}{m}+(d-k)^{k-0} \binom{k-1}{0-1} \nonumber\\ 
    	&\hspace{20pt} - (d-k)^{k-k} \binom{k-1}{k-1}\nonumber\\
    	&= \sum_{m=0}^{k-1} k(d-k)^{k-1-m}\binom{k}{m} -1\nonumber\\
    	&= F(k,d;k-1), 
    	\end{align}
    	where we used Pascal's identity in~\eqref{eq:cutset:1}
		Hence, this point satisfies the cut-set bound and it is optimum.
	\end{proof}
	
		\section{Proof of Node Repairability for Singed Determinant Codes}
	\label{app:prf:prop:ndd:repair}
	In this section we present the proof of Proposition~\ref{prop:ndd:repair}. We start from the RHS of \eqref{eq:ndd:repair} and show it is equal to the LHS.
 	\begin{align}
	&\sum_{i\in \sen I} (-1)^{\s{\e{D}}{i}+\ind{\sen I}{i}} \left[\erep{f}{\e{D}}\right]_{i,\sen I \setminus \seq{i}}\nonumber\\
	&=\sum_{i \in \sen I} (-1)^{\s{\e{D}}{i} + \ind{\sen I}{i}} \left[ \e{D} \cdot \repMat{f}{m}\right]_{i,\sen I\setminus \set{i}}\nonumber\\
	&= \sum_{i \in \sen I} (-1)^{\s{\e{D}}{i}+ \ind{\sen I}{i}} \sum_{\substack{\sen L\subseteq \intv{d} \\ \size{\sen L}=m}}\e{D}_{i,\sen L}  \cdot \repMat{f}{m}_{\sen L ,\sen I \setminus \seq{i}} \nonumber\\ 
	&= \sum_{i \in \sen I} (-1)^{\s{\e{D}}{i} + \ind{\sen I}{i}} \sum_{ y\in \intv{d}\setminus{(\sen I \setminus{\set{i}})} } \e{D}_{i,\sen (\sen I \setminus{\set{i}}) \cup \set{y}}  \cdot \repMat{f}{m}_{\sen (\sen I \setminus{\set{i}}) \cup \set{y},\sen I \setminus \seq{i}} \label{eq:rep:iden:3}\\ 
	&= \sum_{i \in \sen I} (-1)^{\s{\e{D}}{i} + \ind{\sen I}{i}} \left[ \e{D}_{i,\sen I} \   \repMat{f}{m}_{\sen I ,\sen I \setminus \seq{i}} + \sum_{{y\in \intv{d}\setminus{\sen I}} }\e{D}_{i,\sen (\sen I \setminus{\set{i}}) \cup \set{y}}  \cdot\repMat{f}{m}_{\sen (\sen I \setminus{\set{i}}) \cup \set{y} ,\sen I \setminus \seq{i}}\right] \label{eq:rep:iden:4}\\ 
	&= \sum_{i \in \sen I} (-1)^{\s{\e{D}}{i} + \ind{\sen I}{i}} \left[ (-1)^{\s{\e{D}}{i}+\ind{\sen I}{i}} \psi_{f,i}\e{D}_{i,\sen I} + \sum_{y\in \intv{d}\setminus{\sen I} }(-1)^{\s{\e{D}}{y}+\ind{(\sen I \setminus \seq{i})\cup \set{y}}{y}}\psi_{f,y} \e{D}_{i,\sen (\sen I \setminus{\set{i}}) \cup \set{y}}\right] \label{eq:rep:iden:6}  \\ 
	&= \sum_{i \in \sen I} \psi_{f,i}\e{D}_{i,\sen I} +\sum_{y\in \intv{d}\setminus{\sen I} } \sum_{i \in \sen I}  (-1)^{\s{\e{D}}{i} + \s{\e{D}}{y} + \ind{\sen I}{i}+\ind{(\sen I \setminus \seq{i})\cup \set{y}}{y} }  \psi_{f,y} \e{D}_{i,\sen (\sen I \setminus{\set{i}}) \cup \set{y}}   \label{eq:rep:iden:7}\\ 
	&= \sum_{i \in \sen I} \psi_{f,i}\e{D}_{i,\sen I} + \sum_{y\in \intv{d}\setminus{\sen I}} \sum_{i \in \sen I}   (-1)^{\s{\e{D}}{i}+\s{\e{D}}{y} + \ind{\sen I \cup \set{y}}{y} +\ind{\sen I \cup \set{y}}{i} +1 }  \psi_{f,y} \e{D}_{i,\sen (\sen I \setminus{\set{i}}) \cup \set{y}} \label{eq:rep:iden:8}  \\ 
	&= \sum_{i \in \sen I} \psi_{f,i}\e{D}_{i,\sen I} +\sum_{y\in \intv{d}\setminus{\sen I} }(-1)^{\s{\e{D}}{y}+\ind{\sen I \cup \set{y}}{y}+1} \psi_{f,y}\sum_{i \in \sen I} (-1)^{\ind{\sen I \cup \set{y}}{i}} (-1)^{\s{\e{D}}{i}} \e{D}_{i,\sen (\sen I \setminus{\set{i}}) \cup \set{y}} \label{eq:rep:iden:9} \\
	&= \sum_{i \in \sen I} \psi_{f,i}\e{D}_{i,\sen I} +\sum_{y\in \intv{d}\setminus{\sen I} }(-1)^{\s{\e{D}}{y}+\ind{\sen I \cup \set{y}}{y}+1} \psi_{f,y}\sum_{i \in \sen I} (-1)^{\ind{\sen I \cup \set{y}}{i}}  w_{i,\sen I \cup \set{y}}  \label{eq:rep:iden:10} \\
	&= \sum_{i \in \sen I} \psi_{f,i}\e{D}_{i,\sen I} +\sum_{ y\in \intv{d}\setminus{\sen I} }(-1)^{\s{\e{D}}{y}+\ind{\sen I \cup \set{y}}{y}+1} \psi_{f,y} \left[ (-1)^{\ind{\sen I \cup \set{y}}{y}+1}  w_{y,\sen I \cup \set{y}} \right] \label{eq:rep:iden:11} \\
	&= \sum_{i \in \sen I} \psi_{f,i}\e{D}_{i,\sen I} +\sum_{y\in \intv{d}\setminus{\sen I} }(-1)^{\s{\e{D}}{y} + \ind{\sen I \cup \set{y}}{y}+1} \psi_{f,y} \left[
	(-1)^{\ind{\sen I \cup \set{y}}{y}+1} (-1)^{\s{\e{D}}{y}} \e{D}_{y,\sen I } \right]   \label{eq:rep:iden:12} \\
	&= \sum_{i \in \sen I}  \psi_{f,i}\e{D}_{i,\sen I} +\sum_{y\in \intv{d}\setminus{\sen I} } \psi_{f,y} \e{D}_{y,\sen I} \label{eq:rep:iden:13} \\
	&= \sum_{i \in \intv{d}} \psi_{f,i}\e{D}_{i,\sen I} =
	\left[\enc_{f,:} \cdot \e{D} \right]_{\sen I}, \label{eq:rep:iden:14}
	\end{align}
    In this proof we used the following facts:
	\begin{itemize}
		\item In \eqref{eq:rep:iden:3} we used the definition of $\repMat{f}{m}$ in \eqref{eq:rep:enc}, where the entry  $\repMat{f}{m}_{\sen L, \sen I \setminus\set{i}}$ is non-zero only if $\sen L$ includes $\sen I \setminus\set{i}$. This implies that for non zero $\repMat{f}{m}_{\sen L, \sen I \setminus\set{i}}$, $\sen L$ should satisfy $\sen L = (\sen I \setminus\set{i})\cup \set{y}$ for some $y\in \intv{d}\setminus(\sen I \setminus\set{i})$;
		\item In ~\eqref{eq:rep:iden:4}, we split the summation into two cases:  $y=i$ and  $ y \neq i$;
		\item In ~\eqref{eq:rep:iden:6}, we replaced  $\repMat{f}{m}_{\sen (\sen I \setminus{\set{i}}) \cup \set{y} ,\sen I \setminus \seq{i}}$ by $(-1)^{\s{\e{D}}{y} + \ind{(\sen I\setminus\set{i})\cup \set{y}}{y}} \psi_{f,y}$ from its definition in \eqref{eq:rep:enc}; 
		\item In ~\eqref{eq:rep:iden:7} the two summations over $i$ and $y$ are swapped;  
		\item In ~\eqref{eq:rep:iden:8}, we used the definition of $\ind{\cdot}{\cdot}$ function to write 
		\begin{align*}
		&\ind{\sen I \cup \set{y}}{i} +\ind{\sen I \cup \set{y}}{y} \\
		& = 
		\left|\left\{u\in \sen I \cup \set{y}: u\leq i\right\}\right| + \left|\left\{u\in \sen I \cup \set{y}: u\leq y\right\}\right|  \nonumber\\
		&= 
		\left|\left\{u\in \sen I : u\leq i\right\}\right| + \mathbbm{1}\left[y\leq i\right] + \left|\left\{u\in (\sen I\setminus\set{i}) \cup \set{y}: u\leq y\right\}\right| + \mathbbm{1}\left[i\leq y\right]\nonumber\\
		&= \left|\left\{u\in \sen I : u\leq i\right\}\right| + \mathbbm{1}\left[y< i\right] + \left|\left\{u\in (\sen I\setminus\set{i}) \cup \set{y}: u\leq y\right\}\right| + \mathbbm{1}\left[i<y\right]\nonumber\\  
		&=  \ind{\sen I }{i} + \ind{(\sen I\setminus\set{i}) \cup \set{y}}{y} +1.
		\end{align*}   
		 Here, the third equality holds since $i\in \sen I$ and $y\in \intv{d}\setminus \sen I$, which implies $i\neq y$. The last equality holds because $\mathbbm{1}\left[i < y\right] + \mathbbm{1}\left[y < i\right] =1$. This leads to $\ind{\sen I}{i} + \ind{(\sen I \setminus \seq{i})\cup \set{y}}{y}  \equiv \ind{\sen I \cup \set{y}}{y} + \ind{\sen I \cup \set{y}}{i} + 1$ modulo $2$.
		\item In~\eqref{eq:rep:iden:10}  we used the definition of $\e{D}$ in \eqref{eq:def:S}: since $i\notin (\sen I\setminus \set{i}) \cup \set{y}$ then $\e{D}_{i,(\sen I\setminus \set{i}) \cup \set{y}}=(-1)^{\sigma_{\e{D}}(i)}w_{i,\sen I \cup \set{y}}$. A similar argument is used in \eqref{eq:rep:iden:12}; 
		\item In~\eqref{eq:rep:iden:11} we used the parity equation~\eqref{eq:parityeq}. In particular, we have 
		\[\sum_{i \in \sen I \cup \set{y}}(-1)^{\ind{\sen I \cup \set{y}}{i}} w_{i,I \cup \set {y}} = 0,\]  which implies  
		\[\sum_{i \in \sen I } (-1)^{\ind{\sen I \cup \set{y}}{i}}w_{i,I \cup \set {y}}=- (-1)^{\ind{\sen I \cup \set{y}}{y}}w_{y,I \cup \set {y}}.\] 
		\item  In~\eqref{eq:rep:iden:13}, we notice that the overall sign of each term in the summation is positive. 
	\end{itemize}
	This leads to~\eqref{eq:rep:iden:14}, which is exactly the LHS of~\eqref{eq:ndd:repair}. This completes the proof of Proposition~\ref{prop:ndd:repair}.\hfill $\square$
	
	\begin{rem}
		Note that in the chain of equations above we aim to repair the coded symbol at position $\sen I$ of the failed node, which is a linear combination of symbols in column  $\sen I$ of the message matrix. However, the linear combination in \eqref{eq:rep:iden:9} \emph{misses} some of the symbols of column $\sen I$ (i.e., $\e{D}_{i,\sen I}$ when $i\notin \sen I$) and  \emph{includes} symbols from columns of the message matrix  (i.e., $\e{D}_{i,(\sen I \setminus \{i\})\cup\set{y}}$ with $y\neq i$). However, these two interference perfectly cancel each other due to the parity equation in \eqref{eq:parityeq}. This is identical to the notion of interference neutralization, which is well studied in multi-hop wireless networks \cite{mohajer2008transmission, mohajer2011approximate, gou2012aligned}. 
		\label{rem:IN}
	\end{rem}
	
	\section{Semi-Systematic Encoder Matrix}
	\label{app:semi-sys}
	Consider an $(n,k,d)$ regenerating code obtained using an encoder matrix $\enc$ that satisfies Conditions \ref{cond:G} and \ref{cond:P}. Here, we show that we can modify the encoder matrix such that the resulting code becomes semi-systematic, that is, the first $k$  nodes store pure symbols from the message matrix. Consider a general encoder matrix
	\begin{align*}
	\enc_{n \times d} = \left[\begin{array}{c|c} \mathbf{\Gamma}_{n \times k} & \mathbf{\Upsilon}_{n \times (d-k)} \end{array} \right] =\left[\begin{array}{c|c} \bA_{k \times k} & \bB_{k \times (d-k)} \\ \hline \bC_{(n-k) \times k} & \bD_{(n-k) \times (d-k)}  \end{array} \right].
	\end{align*}
	Recall Condition~\ref{cond:G}, that ensures any $k$ rows of $\mathbf{\Gamma}_{n \times k}$ are linearly independent. Thus, $\mathbf{A}_{k\times k}$ is a full-rank and invertible matrix. We can define 
	\begin{align*}
	\bX=\left[\begin{array}{c|c} \bA_{k \times k}^{-1} & -\bA_{k \times k}^{-1} \bB_{k \times (d-k)} \\ \hline \bO_{(d-k) \times k} & \bI_{(d-k) \times (d-k)}  \end{array} \right]
	\end{align*}
	It is easy to verify that $\bX$ is a full-rank matrix, and its inverse is given by 
	\begin{align*}
	\bX^{-1}=\left[\begin{array}{c|c} \bA_{k \times k}& \bB_{k \times (d-k)}\\ \hline \bO_{(d-k) \times k} & \bI_{(d-k) \times (d-k)}\end{array} \right].
	\end{align*}
	
	Therefore, we can modify the encoder matrix to 
	\begin{align*}
	\tilde{\enc}_{n \times d} = \enc_{n \times d} \cdot  \bX 
	&= \left[\begin{array}{c|c} \bI_{k \times k} & \bO_{k \times (d-k)} \\ \hline  \bC \bA^{-1} &\bD-\bC \bA^{-1}  \bB \end{array} \right] = \left[\begin{array}{c|c} \tilde{\mathbf{\Gamma}}_{n \times k} & \tilde{\mathbf{\Upsilon}}_{n \times (d-k)} \end{array} \right].
	\end{align*}
	It is easy to verify that $\tilde{\enc}$ satisfy both Conditions~\ref{cond:G} and \ref{cond:P}. To this end, let $\sen K$ be an arbitrary set of row indices with $\size{\sen K}=k$. We have 
	$ \tilde{\mathbf{\Gamma}}[{\sen K},:] = -\mathbf{\Gamma}[{\sen K},:] \bA_{k \times k}^{-1}$ which is a full-rank matrix, since both $\mathbf{\Gamma}[{\sen K},:]$ and  $\bA_{k \times k}^{-1}$ are full-rank. This shows Condition~\ref{cond:G} holds for $\tilde{\enc}$. Similarly, for an arbitrary set $\sen H\subseteq [n]$ with $\size{\sen H} = d$ we have 
	$\tilde{\mathbf{\Psi}}[{\sen H},:] = \enc[{\sen H},:] \bX$, which is again full-rank, because both $\enc[{\sen H},:]$ are  $\bX$ full-rank. Hence Condition~\ref{cond:P} is also satisfied for $\tilde{\enc}$. The code obtained using the encoder matrix  $\tilde{\enc}$ is semi-systematic, since the content of node $i$ is exactly the symbols in the $i$-th row of the super-message matrix, for $i\in\intv{k}$.

	\section{$\cZ$-Transform for  Evaluation of Code Parameters}
	\label{app:Z}
	\subsection{An Overview of $\cZ$-Transform}
	We will use the $\cZ$-transform to solve the recursive equation in \eqref{eq:ell_rec} for $\mdcnt_m$'s, and evaluate the code parameters in \eqref{eq:alpha-in-L}, \eqref{eq:beta-in-L}, and \eqref{eq:F-in-L}. For the sake of completeness, we start with the definition and some of the main properties of this transformation. We refer the reader to \cite{Oppenheim,forouzan2016region} for the details and proofs of the properties listed below.
	
	\begin{defi}
		The two-sided $\cZ$-transform of a sequences\footnote{With slightly abuse of notation, we use  $x_m$ to refer to the sequence $\{x_m\}_{m=-\infty}^{\infty}$ as well as the $m$-th element of this sequence.} $x_m$ is defined as
		\begin{align}
		X(z) = \cZ \set{x_m} = \sum_{m=-\infty}^{\infty} x_m z^{-m},
		\end{align}
		where $z$ is a complex number.  The region of convergence (ROC) of $X(z)$ is defined as the set of all points in the complex plane ($z\in \mathbb{C}$) for which  $X(z)$  converges, that is, 
		\begin{align}
		\text{ROC}_x = \set{ z:\size{ \sum_{m=-\infty}^{\infty} x_m z^{-m} } < \infty }.
		\end{align}
	\end{defi}	
	
	\begin{defi}
		The inverse   $\cZ$-transform of $X(z)$ is defined as a sequence $\{x_m\}_{m=-\infty}^{\infty}$ where
		\begin{align}
		x_m={\cZ}^{-1}\{X(z)\}= \frac{1}{2\pi j} \oint_{C} X(z) z^{m-1} dz, \qquad m\in\mathbb{Z},
		\end{align}
		where $C$ is a counterclockwise closed path encircling the origin and entirely located in the region of convergence, $\text{ROC}_x$.
	\end{defi}		
	
	For a given ROC, there is a one-to-one correspondence between the sequences $x_m$ and its $\cZ$-transform, $X(z)$. Some properties of the $\cZ$-transform as well as some pairs of sequences\footnote{Recall that, in this paper, we defined $\binom{\ell}{m}$ to be zero for $m<0$ and $m>\ell$.} and their $\cZ$-transforms are listed in Table~\ref{table:prop} and Table~\ref{table:pair}, respectively.  
	\begin {table*}[!t]
	    \begin{tabular}{|c|c|c|c|} 
    		\hline
    		& Time Domain & $\cZ$-Domain & ROC\\ \hline
    		Linearity & $w_m = a x_m + b y_m$ & $W(z) = a X(z) + b Y(z)$ & $\text{ROC}_x \cap \text{ROC}_y$\\ \hline
    		Convolution & $w_m = x_m * y_m$ & $W(z) = X(z) Y(z)$ & $\text{ROC}_x \cap \text{ROC}_y$\\ \hline
    		Differentiation & $w_m = m x_m$ & $W(z)= -z\frac{dX(z)}{dz}$ & $\text{ROC}_x$\\ \hline
    		Scaling in the z-domain & $a^{-m} x_m$ & $X(a\cdot z)$ & $\text{ROC}_x/|a|$\\ \hline
    		(Generalized) Accumulation & $w_m = \sum_{\ell = -\infty}^{m} a^{m-t} x_t$ & $W(z) = \frac{1}{1-a z^{-1}} X(z)$  & $\text{ROC}_x \cap \{z: |z|>|a|\}$\\ \hline
    		Time shifting & $w_m = x_{m-b}$ & $ W(z) = z^{-b} X(z)$  & $\text{ROC}_x$  \\ \hline
		\end {tabular}
		\centering
	    \caption{Properties of the $\cZ$-transform.}\label{table:prop}
	\end {table*}
		
		\begin {table*}[!t]
		    \begin{tabular}{|c|c|c|} 
    			\hline
    			Sequence & $\cZ$-Transform & ROC \\ \hline
    			$x_m = \delta(m)$ & $X(z) = 1$ & all $z\in \mathbb{C}$ \\ \hline
    			$x_m=  \binom{r}{m}$ & $X(z) = (1+z^{-1})^r $ & all $z\in \mathbb{C}$ \\ \hline
    			$x_= \binom{m+b-1}{m}a^{m}, \quad b\in \mathbb{Z}^+$ & $X(z) = \frac{1}{(1-az^{-1})^b}$ &  $ \size{z}> \size{a}$ \\ \hline
    			$x_m = \binom{b}{m}a^m, \quad b\in \mathbb{Z}^+$ & $X(z) = (1+az^{-1})^b$ &  all $z\in \mathbb{C}$ \\ \hline
			\end {tabular}
			\centering
		    \caption{Some useful pairs of $\cZ$-transform.} \label{table:pair}
	    \end {table*}
			
			\subsection{Proof of Lemma~\ref{lm:p_rec}}
			\label{sec:lm:p_rec}
			We start from the definition of $p_m$  and use \eqref{eq:ell_rec} to obtain a recursive equation. For any $m$ with $m \neq 0$,  we have
			\begin{align}
			p_m &= \mdcnt_{\mu-m} \nonumber\\&= \sum_{j=(\mu-m)+1}^{\mu} \mdcnt_j \cdot (j-(\mu-m)-1) \binom{d-k+1}{j-(\mu-m)}\label{eq:conv_m>0:1}\\
			&= \sum_{\ell=1}^{m} \mdcnt_{\ell+\mu-m} \cdot (\ell-1) \binom{d-k+1}{\ell}\label{eq:conv_m>0:2}\\
			&= \sum_{\ell=1}^{m} p_{m-\ell} \cdot (\ell-1) \binom{d-k+1}{\ell},\label{eq:conv_m>0:3}
			\end{align}
			where \eqref{eq:conv_m>0:1} is implied by~\eqref{eq:ell_rec}, and in~\eqref{eq:conv_m>0:2} we used a change of variable $\ell=j-\mu+m$. 
			
			Note that $p_m$ can be also written as 
			$ p_m = -p_{m-0} \cdot (0-1) \binom{d-k+1}{0}$, which is of the form of the summands in~\eqref{eq:conv_m>0:2}.   Hence, by including $\ell=0$ in the summation,  we get 
			\begin{align}
			\sum_{\ell=0}^{m} p_{m-\ell}\cdot  (\ell-1) \binom{d-k+1}{\ell} = 0, \qquad m\neq 0.
			\label{eq:conv_m>0}
			\end{align}
			Finally, for $m=0$, we have 	
			\begin{align}
			\sum_{\ell=0}^{0} p_{0-\ell} \cdot (\ell-1) \binom{d-k+1}{\ell} = -p_0 = -\mdcnt_{\mu-0}= -1. 
			\label{eq:conv_m=0}
			\end{align}
			Putting \eqref{eq:conv_m>0} and \eqref{eq:conv_m=0} together, we have 
			\begin{align}
			\sum_{\ell=0}^{m} p_{m-\ell}\cdot  (\ell-1) \binom{d-k+1}{\ell}=-\delta_m.\qquad \forall m\in \mathbb{Z}.
			\label{eq:conv}
			\end{align}
			Next, define a sequence $q_{\ell}=(\ell-1) \binom{d-k+1}{\ell}$ for every integer $\ell$. 
			Then \eqref{eq:conv} can be rewritten as
			\begin{align}
			-\delta_m &= \sum_{\ell=0}^{m} p_{m-\ell}\cdot (\ell-1) \binom{d-k+1}{\ell} \nonumber\\
			&=\sum_{\ell=0}^{m} p_{m-\ell} \cdot q_{\ell}  \nonumber\\			&=\sum_{\ell=-\infty}^{\infty} p_{m-\ell} \cdot q_{\ell}  \label{eq:param:p:rec:3}\\
			&= p_m * q_m,
			\label{eq:param:p:rec}
			\end{align}
			where   \eqref{eq:param:p:rec:3} holds since $q_{\ell}=0$  for $\ell<0$, and  $p_{m-\ell}=\mdcnt_{\mu + (\ell-m)}=0$ is zero for $\ell>m$ (see definition of $\mdcnt_m$ in  \eqref{eq:ell_rec_2}). Here, the operator $*$ in~\eqref{eq:param:p:rec} denotes the convolution between sequences $p_m $ and $q_m$. 
			We can take the $\cZ$-transform from both sides of \eqref{eq:param:p:rec}. Denoting the $\cZ$-transforms of $p_m$ and $q_m$ by $P(z)$ and $Q(z)$, respectively, and using Table~\ref{table:prop} and Table~\ref{table:pair}, we can write 
			\begin{align}
			P(z) Q(z)= -1.
			\label{eq:z:product}
			\end{align}
			The $\cZ$-transform of $q_m$ can be easily found using   Table~\ref{table:prop} and Table~\ref{table:pair} as follows. 
			\begin{align}
			Q(z) &= \cZ \set{q_m}\nonumber \\
			&= \cZ \set{(m-1)\binom{d-k+1}{m}}\nonumber\\
			&= \cZ \set{m\binom{d-k+1}{m}}-\cZ \set{\binom{d-k+1}{m}} \label{eq:Q-Linearity}\\
			&= -z \frac{d\cZ \set{\binom{d-k+1}{m}}}{dz}-\cZ \set{\binom{d-k+1}{m}} \label{eq:Q-Differentiation}\\
			&= -z \frac{d}{dz}  (1+z^{-1})^{d-k+1}-(1+z^{-1})^{d-k+1} \label{eq:Q-Table}\\
			&= -z  (d-k+1) (-z^{-2})(1+z^{-1})^{d-k} - (1+z^{-1})^{d-k+1}\nonumber \\
			&= (1+z^{-1})^{d-k} \left[(d-k)z^{-1}-1\right]. \label{eq:Q}
			\end{align}
			where \eqref{eq:Q-Linearity} holds due to  linearity of the $\cZ$-transform, in \eqref{eq:Q-Differentiation} we used the differentiation effect, and in~\eqref{eq:Q-Table} we used the fourth pair in Table~\ref{table:pair} with  $a=1$ and $b=d-k+1$ . 
			Plugging \eqref{eq:Q} into \eqref{eq:z:product}, we get
			\begin{align}
			P(z) = \frac{-1}{Q(z)}= \frac{1}{1-(d-k)z^{-1}} \left(\frac{1}{1+z^{-1}}\right)^{d-k}, 
			\label{eq:ptransform}
			\end{align}
			with the region of convergence  $\text{ROC}_p=\set{z: \size{z} > \size{d-k}}$. It remains to find $p_m$ from $P(z)$ by computing its inverse $\cZ$-transform.  We have
			\begin{align}
			p_m &= \cZ^{-1}\set{P(z)} \nonumber\\
			&= \sum_{t=-\infty}^{m} (d-k)^{m-t} \cZ^{-1} \set{\left(\frac{1}{1+z^{-1}}\right)^{d-k} }\label{eq:P-GenAcc}\\
			&= \sum_{t=-\infty}^{m} (d-k)^{m-t} \cdot (-1)^t \binom{t+d-k-1}{t} \label{eq:P-Table}\\
			&= \sum_{t=0}^{m}(-1)^t(d-k)^{m-t}\binom{t+d-k-1}{t}. \label{eq:P}
			\end{align}
			where in \eqref{eq:P-GenAcc} we used the generalized accumulation rule in Table~\ref{table:prop} for $a=d-k$. It is worth mentioning that the inverse $\cZ$-transform of $\left(\frac{1}{1+z^{-1}}\right)^{d-k}$ should be taken with respect to variable $t$. To this end, in \eqref{eq:P-Table} we have used the third pair in Table~\ref{table:pair} with $a=-1$ and $b=d-k$.  Finally, in \eqref{eq:P} we have limited the range of $t$ by noticing the fact that the binomial coefficient is zero for $t<0$. 
			This shows the desired identity and completes the proof. \hfill $\square$

			\subsection{Proof of Lemma~\ref{lm:helping:iden}}
			\label{sec:lm:helping:iden}
			Let us define 
			\begin{align*}
			u_\mu 
			&\hspace{-2pt}=\hspace{-5pt} \sum_{m=-\infty}^{\infty} p_{\mu-m} \binom{d+a}{m+b}
			,\nonumber\\
			v_\mu &\hspace{-2pt}=\hspace{-5pt} \sum_{m=-b}^{\mu}\hspace{-2pt}(d-k)^{\mu-m} \binom{k+a}{m+b} \hspace{-3pt}=\hspace{-3pt} \sum_{m=0}^{\mu+b}(d-k)^{\mu+b-m} \binom{k+a}{m},
			\end{align*}
			for every integer $\mu$. 
			The claim of this lemma is equivalent to $u_\mu = v_{\mu}$ for all $\mu\in \mathbb{Z}$. Instead of directly showing in the $\mu$-domain, we will prove that the two sequences are identical in the $z$-domain, and have the same ROCs. We start with sequence $\{u_\mu\}$ and write
			\begin{align}
			U(z) 
			&= \cZ \set{\sum_{m=-\infty}^{\infty} p_{\mu-m} \binom{d+a}{m+b}} \nonumber\\
			&= \cZ  \set{p_{\mu} * \binom{d+a}{\mu+b}} \nonumber\\
			&= \cZ  \set{p_{\mu}} \cdot  \cZ  \set{ \binom{d+a}{\mu+b}} \label{eq:lm:a-b:1}\\
			&= P(z) \cdot     z^{b} \cZ \set{ \binom{d+a}{\mu}} \label{eq:lm:a-b:2}\\
			&=  \frac{1}{1-(d-k)z^{-1}} \left(\frac{1}{1+z^{-1}}\right)^{d-k} \cdot  z^{b}  \cdot  \left(1+z^{-1}\right)^{d+a}
			\label{eq:lm:a-b:3}\\
			&=z^{b} \frac{1}{1-(d-k)z^{-1}} \left(1+z^{-1}\right)^{k+a},
			\end{align}
			where in \eqref{eq:lm:a-b:1} and \eqref{eq:lm:a-b:2} we used the convolution and time-shift properties from Table~\ref{table:prop}, respectively. Moreover, in~\eqref{eq:lm:a-b:3}  we have used  \eqref{eq:ptransform} and  Table~\ref{table:pair} to evaluate the $\cZ$-transforms. Note that $\text{ROC}_u = \text{ROC}_p=\set{z: \size{z} > \size{d-k}}$. 
			
			Similarly, for sequence $\{v_\mu\}$ we have
			\begin{align}
			V(z) 
			&= \cZ \set{\sum_{m=0}^{\mu+b}(d-k)^{\mu+b-m} \binom{k+a}{m}} \nonumber\\
			&= z^{-(-b)} \cdot \cZ \set{\sum_{m=0}^{\mu}(d-k)^{\mu-m} \binom{k+a}{m}} \label{eq:lm:a-b:4}\\
			&= z^{b} \cdot \cZ \set{\sum_{m=-\infty}^{\mu}(d-k)^{\mu-m} \binom{k+a}{m}} \label{eq:lm:a-b:5}\\
			&= z^{b} \cdot \frac{1}{1-(d-k)z^{-1}} \cZ \set{ \binom{k+a}{m}} \label{eq:lm:a-b:6}\\
			&= z^{b} \cdot \frac{1}{1-(d-k)z^{-1}} \left(1+z^{-1}\right)^{k+a},\label{eq:lm:a-b:7}
			\end{align}
			where in \eqref{eq:lm:a-b:4} and \eqref{eq:lm:a-b:6} we used time-shift property and generalized accumulation property from Table~\ref{table:prop}, respectively. 
			Moreover, \eqref{eq:lm:a-b:5} holds because $\binom{k+a}{m}$ is zero for $m<0$, and we used  Table~\ref{table:pair} to evaluate the $\cZ$-transform in~\eqref{eq:lm:a-b:7}. It is worth noting that the ROC of $V(z)$ is given by $\text{ROC}_v = \set{z: \size{z} > \size{d-k}}$, due to the step in \eqref{eq:lm:a-b:6}. Comparing \eqref{eq:lm:a-b:3} and \eqref{eq:lm:a-b:7} we find that $U(z) = V(z)$. Since the two functions in $z$-domain have identical ROCs, their corresponding  sequences $\{u_\mu\}$ and $\{v_\mu\}$ should be also identical. This completes the proof of the lemma. 	\hfill $\square$	
\end{document}